\documentclass[aps,pre,twocolumn]{revtex4-2}

\usepackage{amsmath,amssymb,amsthm}
\usepackage{graphicx}
\usepackage{xcolor}
\usepackage{newtxtext}
\usepackage[slantedGreek,cmintegrals]{newtxmath}
\usepackage{hyperref}
\usepackage{enumitem}
\usepackage{stmaryrd}
\usepackage{accents}

\hypersetup{colorlinks,linkcolor={blue!90!black},citecolor={blue!90!black},urlcolor={blue!90!black}}

\setlength{\textheight}{9.5in}
\renewcommand{\vec}[1]{\boldsymbol{#1}}

\renewcommand{\pi}{\uppi}
\newcommand{\btilde}[1]{\boldsymbol{\tilde{#1}}}
\newcommand{\bhat}[1]{\boldsymbol{\hat{#1}}}
\newcommand{\bpm}{{\boldsymbol{\pm}}}

\DeclareMathAlphabet{\mathcal}{OMS}{cmsy}{m}{n}
\DeclareMathAlphabet{\mathbfsf}{\encodingdefault}{\sfdefault}{b}{n}
\newcommand{\mat}[1]{\mathsf{#1}}
\newcommand{\tens}[1]{\mathbfsf{#1}}

\DeclareFontEncoding{LGR}{}{}
\DeclareSymbolFont{sfgreek}{LGR}{cmss}{m}{n}
\DeclareMathSymbol{\svarsigma}{\mathord}{sfgreek}{`c}
\DeclareMathSymbol{\skappa}{\mathord}{sfgreek}{`k}
\DeclareMathSymbol{\slambdasym}{\mathord}{sfgreek}{`l}
\newcommand{\slambda}{\scalebox{0.91}{$\slambdasym$}}
\DeclareSymbolFont{sfgreek2}{LGR}{cmss}{bx}{n}
\DeclareMathSymbol{\skappabsym}{\mathord}{sfgreek2}{`k}
\newcommand{\skappab}{\scalebox{0.95}{$\skappabsym$}}
\DeclareMathSymbol{\slambdabsym}{\mathord}{sfgreek2}{`l}
\newcommand{\slambdab}{\scalebox{0.91}{$\slambdabsym$}}

\newcommand{\figref}[2]{[Fig.~\hyperref[#1]{\ref*{#1}(#2)}]}
\newcommand{\figrefi}[2]{[Fig.~\hyperref[#1]{\ref*{#1}(#2)}, inset]}
\newcommand{\textfigref}[2]{Fig.~\hyperref[#1]{\ref*{#1}(#2)}}
\newcommand{\wholefigref}[1]{(Fig.~\ref{#1})}

\newcommand{\textwholefigref}[1]{Fig.~\ref{#1}}

\newcommand{\figrefp}[2]{\hyperref[#1]{\ref*{#1}(#2)}}

\renewcommand{\leq}{\leqslant}
\renewcommand{\geq}{\geqslant}

\renewcommand{\Gamma}{\upGamma}
\newtheorem{lemma}{Lemma}
\newtheorem{corollary}{Corollary}
\newtheorem{proposition}{Proposition}
\DeclareMathOperator{\tr}{tr}
\DeclareMathOperator{\adj}{adj}
\begin{document}
\title{Morphoelasticity of Large Bending Deformations of Cell Sheets during Development}
\author{Pierre A. Haas}
\email{haas@maths.ox.ac.uk}
\altaffiliation[current address: ]{Mathematical Institute, University of Oxford, Woodstock Road, Oxford OX2 6GG, United Kingdom}
\author{Raymond E. Goldstein}
\email{r.e.goldstein@damtp.cam.ac.uk}
\affiliation{Department of Applied Mathematics and Theoretical Physics, Centre for Mathematical Sciences, \\ University of Cambridge, 
Wilberforce Road, Cambridge CB3 0WA, United Kingdom}
\date{\today}%
\begin{abstract}
Deformations of cell sheets during morphogenesis are driven by developmental processes such as cell division and cell shape changes. In morphoelastic shell theories of development, these processes appear as variations of the intrinsic geometry of a thin elastic shell. However, morphogenesis often involves large bending deformations that are outside the formal range of validity of these shell theories. Here, by asymptotic expansion of three-dimensional incompressible morphoelasticity in the limit of a thin shell, we derive a shell theory for large intrinsic bending deformations and emphasise the resulting geometric material anisotropy and the elastic role of cell constriction. Taking the invagination of the green alga \emph{Volvox} as a model developmental event, we show how results for this theory differ from those for a classical shell theory that is not formally valid for these large bending deformations and reveal how these geometric effects stabilise invagination.
\end{abstract}
\maketitle

\section{Introduction}
Cell division, cell shape changes, and related processes can drive deformations of cell sheets during animal and plant development~\cite{keller03,leptin05,lecuit07,keller11,lecuit11,tada12}. In elastic continuum theories of the development of the green alga \emph{Volvox}~\cite{hohn15,haas15,haas18a,haas18b}, of tissue folding in \emph{Drosophila}~\cite{heer17,yevick19}, or of more abstract active surfaces~\cite{miller18}, these driving processes appear as changes of the reference or intrinsic geometry of thin elastic shells.

Just as classical thin shell theories arise from an asymptotic expansion of bulk elasticity in the small thickness of the shell~\cite{ciarlet05,* [] [{ Chap.~6, pp.~159--213, Chap.~12, pp.~435--453, and App.~D, pp.~571--581.}] audoly,steigmann13}, these ``morphoelastic'' shell theories should be asymptotic limits of a bulk theory. While there is now a well-established framework of three-dimensional morphoelasticity~\cite{* [] [{ Chap.~11, pp.~261--344 and Chap.~12, pp.~345--373.}] goriely,ambrosi19}, based on a multiplicative decomposition of the deformation gradient tensor into intrinsic and elastic deformations~\cite{rodriguez94}, studies of this asymptotic limit have mostly been restricted to the case of flat morphoelastic plates. Extensions of the classical F\"oppl--von K\'arm\'an equations~\cite{dervaux08,dervaux09} have been derived and residual stresses in Kirchhoff plate theories~\cite{mcmahon11} have been studied in this case. A theory of non-Euclidean plates~\cite{efrati09} has been developed in parallel. Apart from a general geometric theory of morphoelastic surfaces~\cite{sadik16}, studies of morphoelastic shells have remained more phenomenological, however: some models~\cite{hohn15,haas15,heer17,miller18,yevick19} simply replaced the elastic strains in classical shell theories~\cite{ventsel,libai,audoly} with measures of the difference of the intrinsic and deformed geometries.  Other studies~\cite{haas18a,haas18b} took a more geometric approach, mirroring geometric derivations of classical shell theories~\cite{ventsel} based on the so-called Kirchhoff ``hypothesis''. This is the asymptotic result~\cite{audoly} that the normals of the midsurface of the undeformed shell remain, at leading order, normal to the deformed midsurface.

\begin{figure}[b]
\centering\includegraphics{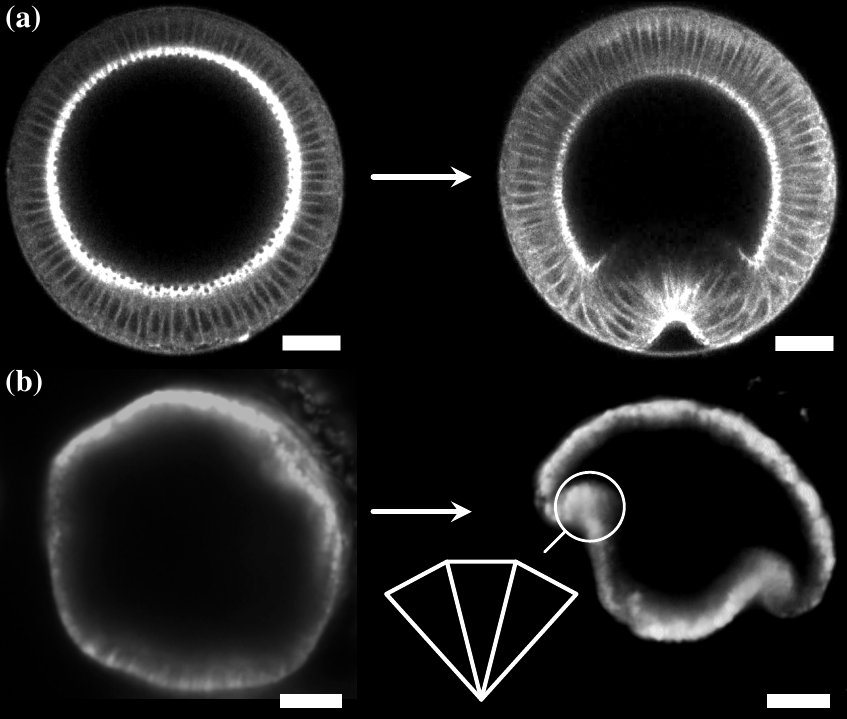}
\caption{Large bending deformations during morphogenesis: even if the thickness of the cell sheet is small compared to the undeformed radius of curvature, the local radius of curvature need not remain large compared to the cell sheet thickness as the sheet deforms. (a)~Cross section of ventral furrow formation in \emph{Drosophila}, reproduced from Ref.~\cite{conte12}. (b) Midsagittal cross section of invagination in the spherical alga \emph{Volvox globator}, reproduced from Ref.~\cite{haas15}. Inset: cartoon of constricted triangular cells in the bend region. Scale bars: $20\,\text{\textmu m}$.}
\label{fig1}
\end{figure}

There is however one more serious limitation of these models: tissues in development undergo large bending deformations~\wholefigref{fig1} that are outside the formal range of validity of the underlying thin shell theories, which assume that the thickness of the shell is much smaller than all lengthscales of the midsurface of the shell~\cite{audoly,ventsel,libai}. However, even if the thickness of the cell sheet is much smaller than its undeformed radius of curvature, this radius of curvature may become comparable, locally, to the thickness of the cell sheet as it deforms~\wholefigref{fig1}. This is associated with cells contracting at one cell pole to splay and thereby bend the cell sheet~\cite{keller11}.

Here, we derive a theory of thin incompressible morphoelastic shells undergoing large bending deformations by asymptotic expansion of three-dimensional elasticity. We reveal how, even in a constitutively isotropic material, this new scaling limit of large bending deformations induces, in the thin shell limit, a geometric anisotropy absent from classical shell theories: different deformation directions exhibit different deformation responses. We stress how this geometric effect is associated with the geometric singularity of cell constriction, i.e. the limit of wedged triangular cells~\figrefi{fig1}{b} associated with these large bending deformations. Specialising to the invagination of the green alga \emph{Volvox}~\cite{hallmann06,hohn11}, we then show how results for this theory differ from those for a classical theory that is not formally valid in this large bending limit, and reveal how invagination is stabilised by the geometry of large bending deformations.

\section{Elastic Model}\label{sec:model}
In this section, we describe large bending deformations of a thin incompressible morphoelastic shell, starting from three-dimensional morphoelasticity. We shall have to distinguish between three configurations of the shell~\figref{fig2}{a}: (i) the undeformed configuration of the shell, (ii) the deformed configuration of the shell, and (iii) the intrinsic configuration of the shell that encodes the local, intrinsic deformations of the shell, i.e. the cell shape changes or cell division in the biological system. These intrinsic deformations are not in general compatible with the global geometry of the shell: in other words, this intrinsic configuration cannot in general be embedded into three-dimensional Euclidean space~\cite{goriely}. Elasticity must therefore intervene to ``glue'' the intrinsically deformed infinitesimal patches of cell sheet back together, as illustrated in~\textfigref{fig2}{a}. Configurations (i) and (ii) are related by the geometric deformation gradient $\btilde{\tens{F}}$. This tensor decomposes multiplicatively into an intrinsic contribution $\tens{F^0}$ that relates configurations (i) and~(iii), and an elastic contribution $\smash{\tens{F}=\btilde{\tens{F}}\bigl(\tens{F^0}\bigr)^{-1}}$. This is the multiplicative decomposition of morphoelasticity~\cite{goriely,ambrosi19}.

\begin{figure}
\centering\includegraphics{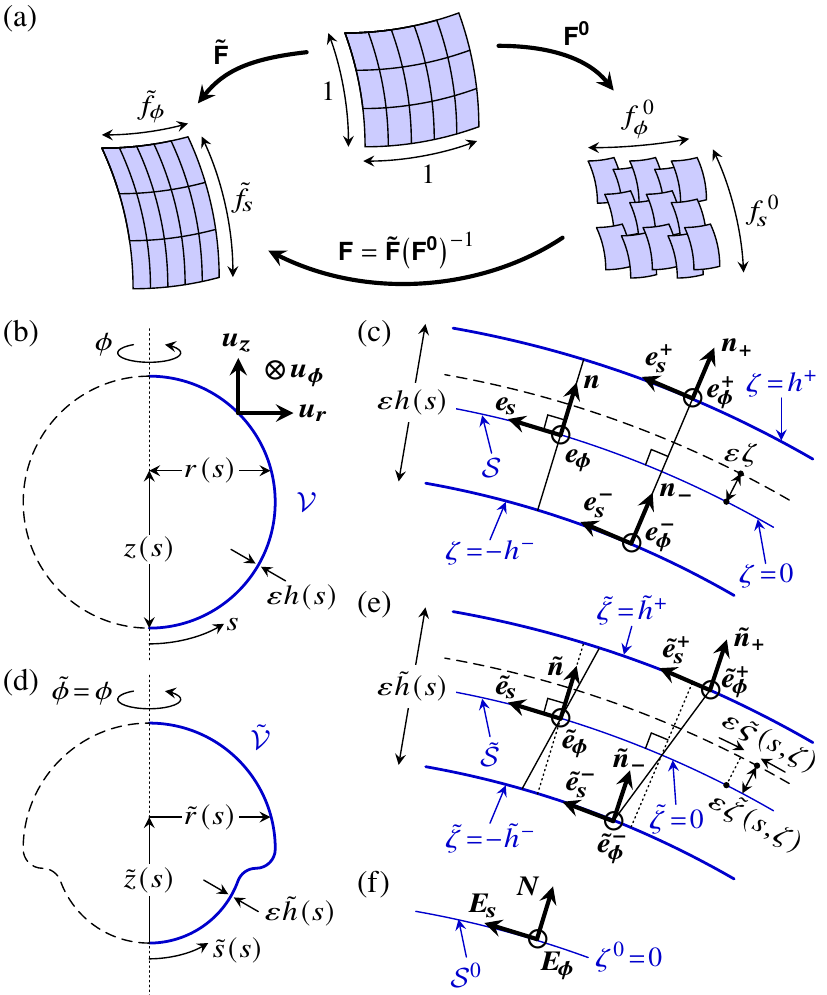}
\caption{Morphoelasticity of an axisymmetric shell. (a) The undeformed (top), deformed (left), and intrinsic (right) configurations of the shell are related by the three tensors $\btilde{\tens{F}}$, $\tens{F^0}$, and $\smash{\tens{F}=\tilde{\tens{F}}\bigl(\tens{F^0}\bigr)^{-1}}$. The geometric and intrinsic midsurface stretches are $\tilde{f}_s,\tilde{f}_\phi$ and $\smash{f_s^0,f_{\smash{\phi}}^0}$. (b)~Undeformed configuration $\mathcal{V}$ of an axisymmetric shell of thickness $\varepsilon h(s)$, described by coordinates $r(s),z(s)$, where $s$ is arclength, with respect to the basis $\{\vec{u_r},\vec{u_\phi},\vec{u_z}\}$ of cylindrical polars. (c)~Cross section of the undeformed shell, defining a basis $\mathcal{B}=\{\vec{e_s},\vec{e_\phi},\vec{n}\}$ and the transverse coordinate~$\zeta$. The surfaces of the undeformed shell are at $\zeta=\pm h^\pm(s)$, where the tangent vectors are $\vec{e_s^\pm},\vec{e_{\smash\phi}^\pm}$, and the normal is $\vec{n^\pm}$. (d)~Deformed configuration~$\smash{\tilde{\mathcal{V}}}$ of the shell: after a torsionless deformation, the shell has thickness $\varepsilon\tilde{h}(s)$, arclength $\tilde{s}$, and is described by coordinates $\tilde{r}(s),\tilde{z}(s)$ with respect to cylindrical polars. (e)~Cross section of the deformed shell, defining a basis $\tilde{\mathcal{B}}=\{\vec{\tilde{e}_s},\vec{\tilde{e}_\phi},\vec{\tilde{n}}\}$. Normals to the midsurface rotate so that a point at a distance $\varepsilon\zeta$ from the undeformed midsurface $\mathcal{S}$ is at a distance $\varepsilon\tilde{\zeta}(s,\zeta)$ from the deformed midsurface $\tilde{\mathcal{S}}$, and displaced by a distance $\varepsilon\tilde{\varsigma}(s,\zeta)$ parallel to $\tilde{\mathcal{S}}$. At the surfaces $\tilde{\zeta}=\pm\tilde{h}^\pm(s)$ of the deformed shell, the tangent vectors are $\vec{\tilde{e}_s^\pm},\vec{\tilde{e}_{\smash\phi}^\pm}$, and the normal is $\vec{\tilde{n}^\pm}$. (f)~The intrinsic midsurface $\mathcal{S}^0$, on which $\zeta^0=0$, embeds, locally, into three-dimensional space to define an intrinsic basis $\smash{\mathcal{B}^0}=\{\vec{E_s},\vec{E_\phi},\vec{N}\}$.}
\label{fig2} 
\end{figure}

In this section, we restrict to torsionless deformations of an axisymmetric shell. The analysis can be extended to more general deformations of the shell, and, for the sake of completeness, we do so in Appendix~\ref{appA}, but the restriction to axisymmetric deformations eschews the mire of tensorial notation that arises in the general case.

The derivation of the shell theory for large bending deformations divides, like derivations of classical shell theories, into two steps: first, in subsection~\ref{sec:modela}, we describe the kinematics of the deformation and derive expressions for the geometric, intrinsic, and elastic deformations gradients. Second, in subsection~\ref{sec:modelb}, we analyse the mechanics of the shell and expand the three-dimensional elastic energy and equilibrium conditions asymptotically. At the end of this section, in subsection~\ref{sec:modelc}, we discuss the limit of small bending deformations that gives rise to classical shell theories.
\subsection{Axisymmetric deformations of an elastic shell}\label{sec:modela}
We consider an elastic shell of undeformed thickness $\varepsilon h$, where $\varepsilon\ll 1$ is a small asymptotic parameter expressing the thinness of the shell compared to other lengthscales associated with its midsurface. Large bending deformations will be introduced in Section~\ref{sec:modelb} by allowing one of the intrinsic radii of curvature of the shell to be of order $O(\varepsilon)$. We begin by deriving an expression for the elastic deformation gradient $\tens{F}$ for torsionless deformations of an axisymmetric shell. 

\subsubsection{Undeformed configuration of the shell}
We will describe the undeformed configuration $\mathcal{V}$ of the shell with reference to a midsurface $\mathcal{S}$ that we will choose later. With respect to the basis $\{\vec{u_r},\vec{u_\phi},\vec{u_z}\}$ of cylindrical coordinates, we define the position vector of a point on $\mathcal{S}$,
\begin{align}
\vec{\rho}(s,\phi)=r(s)\vec{u_r}(\phi)+z(s)\vec{u_z},\label{eq:mundef}
\end{align}
with $s$ denoting arclength and $\phi$ being the azimuthal coordinate~\figref{fig2}{b}. The tangent angle $\psi(s)$ of $\mathcal{S}$ is defined by
\begin{align}
&r'(s)=\cos{\psi(s)},&& z'(s)=\sin{\psi(s)},  
\end{align}
in which dashes denote differentiation with respect to $s$. The vectors
\begin{align}
&\vec{e_s}(s,\phi)=\cos{\psi(s)}\vec{u_r}(\phi)+\sin{\psi(s)}\vec{u_z},&&\vec{e_\phi}(\phi)=\vec{u_\phi}(\phi) 
\end{align}
thus constitute a basis of the tangent space of $\mathcal{S}$~\figref{fig2}{c}, which we extend to a (right-handed) orthonormal basis ${\mathcal{B}=\{\vec{e_s},\vec{e_\phi},\vec{n}\}}$ for $\mathcal{V}$ by adjoining the normal to $\mathcal{S}$,
\begin{align}
\vec{n}(s,\phi)=\cos{\psi(s)}\vec{u_z}-\sin{\psi(s)}\vec{u_r}(\phi).
\end{align}
In particular, $\vec{n}=\vec{e_s}\times\vec{e_\phi}$. We complete the description of $\mathcal{S}$ by computing its curvatures,
\begin{align}
&\varkappa_s(s)=\psi'(s),&& \varkappa_\phi(s)=\dfrac{\sin{\psi(s)}}{r(s)}.\label{eq:curvundef}
\end{align}
Now, the position of a point in $\mathcal{V}$ is
\begin{align}
\vec{r}(s,\phi,\zeta)=\vec{\rho}(s,\phi)+\varepsilon\zeta\vec{n}(s,\phi), \label{eq:undef}
\end{align}
where we have introduced the transverse coordinate $\zeta$, which is such that the shell surfaces are at $\zeta=\pm h^\pm(s)$ \figref{fig2}{c}. Noting the derivatives $\partial\vec{n}/\partial s=-\varkappa_s\vec{e_s}$ and $\partial\vec{n}/\partial\phi=-\varkappa_\phi\vec{e_\phi}$, we obtain the tangent basis of $\mathcal{V}$,
\begin{align}
&\dfrac{\partial\vec{r}}{\partial s}=(1-\varepsilon\varkappa_s\zeta)\vec{e_s},&& \dfrac{\partial\vec{r}}{\partial \phi}=r(1-\varepsilon\varkappa_\phi\zeta)\vec{e_\phi},&&\dfrac{\partial\vec{r}}{\partial\zeta}=\varepsilon\vec{n},\label{eq:pdundef}
\end{align}
from which follows the expression for the Riemannian metric of the undeformed configuration,
\begin{subequations}\label{eq:undefmetric}
\begin{align}
\chi_s^2\,\mathrm{d}s^2+\chi_\phi^2\,\mathrm{d}\phi^2+\chi_\zeta^2\,\mathrm{d}\zeta^2
\end{align}
with associated scale factors
\begin{align}
&\chi_s=1-\varepsilon\varkappa_s\zeta,&& \chi_\phi=r(1-\varepsilon\varkappa_\phi\zeta),&&\chi_\zeta=\varepsilon.\label{eq:sfundef}
\end{align}
and hence volume element
\begin{align}
\mathrm{d}V=\chi_s\chi_\phi \chi_\zeta\,\mathrm{d}s\,\mathrm{d}\phi\,\mathrm{d}\zeta=\varepsilon(1-\varepsilon\varkappa_s\zeta)(1-\varepsilon\varkappa_\phi\zeta)\,r\,\mathrm{d}s\,\mathrm{d}\phi\,\mathrm{d}\zeta.\label{eq:dVundef}
\end{align}
\end{subequations}
The position vectors of the surfaces $\zeta=\pm h^\pm(s)$ of the undeformed shell are
\begin{subequations}
\begin{align}
\vec{r^\pm}(s,\phi,\zeta)= \vec{\rho}(s,\phi)\pm\varepsilon h^\pm(s)\vec{n}(s,\phi),
\end{align}
so that, using commata to denote partial differentiation,
\begin{align}
\dfrac{\partial\vec{r^\pm}}{\partial s}=\left(1\mp\varepsilon\varkappa_sh^\pm\right)\vec{e_s}\pm\varepsilon h^\pm_{,s}\vec{n},
\end{align}
\end{subequations}
in which commata denote partial differentiation. The unit tangent vectors to these shell surfaces are $\vec{e_s^\pm}\parallel\partial\vec{r^\pm}/\partial s$ and $\vec{e_{\smash\phi}^\pm}=\vec{e_\phi}$, in which the symbol $\parallel$ expresses parallelism and hides a normalisation factor for the unit vector on the left-hand side. By definition, the unit normals $\vec{n^\pm}$ to the deformed shell surfaces~\figref{fig2}{c} obey $\vec{n^\pm}\parallel\smash{\vec{e_{\smash{s}}^\pm}}\times\vec{e_{\smash{\phi}}^\pm}$. Now introducing the normalisation factor explicitly, we find
\begin{align}
\vec{n^\pm}=\dfrac{\vec{n}\mp\nu_\pm\vec{e_s}}{\sqrt{1+\nu_\pm^2}}\quad\text{with }\nu_\pm=\dfrac{\varepsilon h^\pm_{,s}}{1\mp\varepsilon\varkappa_s h^\pm}.\label{eq:nundef} 
\end{align}

\subsubsection{Deformed configuration of the shell}
As the shell deforms into its deformed configuration $\tilde{\mathcal{V}}$, the midsurface $\mathcal{S}$ maps to the deformed midsurface $\tilde{\mathcal{S}}$~\figref{fig2}{d}, with position vector
\begin{align}
\vec{\tilde{\rho}}(s,\phi)=\tilde{r}(s)\vec{u_r}(\phi)+\tilde{z}(s)\vec{u_z}, 
\end{align}
where, in particular, $s$ is again the undeformed arclength. Denoting by $\tilde{s}$ the deformed arclength, we define the stretches
\begin{align}
&\tilde{f}_s(s)=\dfrac{\mathrm{d}\tilde{s}}{\mathrm{d}s},&& \tilde{f}_\phi(s)=\dfrac{\tilde{r}(s)}{r(s)},\label{eq:fs}
\end{align}
which enable us to define the tangent angle $\tilde{\psi}(s)$ of $\tilde{\mathcal{S}}$ by
\begin{align}
&\tilde{r}'(s)=\tilde{f}_s\cos{\tilde{\psi}(s)},&& \tilde{z}'(s)=\tilde{f}_s\sin{\tilde{\psi}(s)},  \label{eq:drz}
\end{align}
where dashes still denote differentiation with respect to $s$. Similarly to the analysis of the undeformed configuration, we introduce the tangent vectors
\begin{align}
&\vec{\tilde{e}_s}(s,\phi)=\cos{\tilde{\psi}(s)}\vec{u_r}(\phi)+\sin{\tilde{\psi}(s)}\vec{u_z},&&\vec{\tilde{e}_\phi}(\phi)=\vec{u_\phi}(\phi), 
\end{align}
and the normal vector
\begin{align}
\vec{\tilde{n}}(s,\phi)=\cos{\tilde{\psi}(s)}\vec{u_z}-\sin{\tilde{\psi}(s)}\vec{u_r}(\phi), 
\end{align}
so that $\vec{\tilde{n}}=\vec{\tilde{e}_s}\times\vec{\tilde{e}_\phi}$. This defines a (right-handed) orthonormal basis ${\tilde{\mathcal{B}}=\{\vec{\tilde{e}_s},\vec{\tilde{e}_\phi},\vec{\tilde{n}}\}}$ describing \smash{$\tilde{\mathcal{V}}$}~\figref{fig2}{e}. The curvatures of the deformed shell are
\begin{align}
&\tilde{\kappa}_s(s)=\dfrac{\tilde{\psi}'(s)}{\tilde{f}_s(s)},&& \tilde{\kappa}_\phi(s)=\dfrac{\sin{\tilde{\psi}(s)}}{\tilde{r}(s)}.\label{eq:kappas}
\end{align}
As the shell deforms, the normals to $\mathcal{S}$ need not remain normal to $\tilde{\mathcal{S}}$, and so a point in $\mathcal{V}$ at a distance $\varepsilon\zeta$ from $\mathcal{S}$ will end up, in $\tilde{\mathcal{V}}$, at a distance $\varepsilon\tilde{\zeta}$ from $\tilde{\mathcal{S}}$, and displaced by a distance $\varepsilon\tilde{\varsigma}$ parallel to $\tilde{\mathcal{S}}$~\figref{fig2}{e}. By definition of the midsurface, $\tilde{\zeta}=\tilde{\varsigma}=0$ if $\zeta=0$. The position of a point in $\tilde{\mathcal{V}}$ is thus
\begin{align}
\vec{\tilde{r}}(s,\phi,\zeta)= \vec{\tilde{\rho}}(s,\phi)+\varepsilon \tilde{\zeta}(s,\zeta)\vec{\tilde{n}}(s,\phi)+\varepsilon \tilde{\varsigma}(s,\zeta)\vec{\tilde{e}_s}(s,\phi). \label{eq:def}
\end{align}
Continuing to use commata to denote partial differentiation, we find
\begin{subequations}\label{eq:pddef}
\begin{align}
\dfrac{\partial\vec{\tilde{r}}}{\partial s}&=\left[\tilde{f}_s\left(1-\varepsilon\tilde{\kappa}_s \tilde{\zeta}\right)+\varepsilon \tilde{\varsigma}_{,s}\right]\vec{\tilde{e}_s}+\varepsilon\left(\tilde{\zeta}_{,s}+\tilde{f}_s\tilde{\kappa}_s \tilde{\varsigma}\right)\vec{\tilde{n}},\label{eq:pddefs}
\end{align}
and
\begin{align}
\dfrac{\partial\vec{\tilde{r}}}{\partial\phi}&=\left[\tilde{r}\left(1-\varepsilon\tilde{\kappa}_\phi \tilde{\zeta}\right)+\varepsilon \tilde{\varsigma}\cos{\tilde{\psi}}\right]\vec{\tilde{e}_\phi},\,\dfrac{\partial\vec{\tilde{r}}}{\partial\zeta}=\varepsilon\left(\tilde{\zeta}_{,\zeta}\vec{\tilde{n}}+\tilde{\varsigma}_{,\zeta}\vec{\tilde{e}_s}\right).
\end{align}
\end{subequations}
Noting that $\tilde{r}=\tilde{f}_\phi r$ from definitions~\eqref{eq:fs}, the Riemannian metric of $\smash{\tilde{\mathcal{V}}}$ is therefore
\begin{subequations}
\begin{align}
&\!\left\{\left[\tilde{f}_s\left(1-\varepsilon\tilde{\kappa}_s \tilde{\zeta}\right)+\varepsilon \tilde{\varsigma}_{,s}\right]^2+\varepsilon^2\left(\tilde{\zeta}_{,s}+\tilde{f}_s\tilde{\kappa}_s \tilde{\varsigma}\right)^2\right\}\mathrm{d}s^2\nonumber\\
&\;+\!\left[\tilde{f}_\phi r\left(1\!-\!\varepsilon\tilde{\kappa}_\phi \tilde{\zeta}\right)\!+\!\varepsilon \tilde{\varsigma}\cos{\tilde{\psi}}\right]^2\mathrm{d}\phi^2+\varepsilon^2\!\left[\left(\tilde{\zeta}_{,\zeta}\right)^2\!+\!\left(\tilde{\varsigma}_{,\zeta}\right)^2\right]\mathrm{d}\zeta^2\nonumber\\
&\;+2\varepsilon\Bigl\{\tilde{\varsigma}_{,\zeta}\left[\tilde{f}_s\left(1-\varepsilon\tilde{\kappa}_s \tilde{\zeta}\right)+\varepsilon \tilde{\varsigma}_{,s}\right]+\varepsilon\tilde{\zeta}_{,\zeta}\left(\tilde{\zeta}_{,s}+\tilde{f}_s\tilde{\kappa}_s \tilde{\varsigma}\right)\Bigr\}\mathrm{d}s\,\mathrm{d}\zeta. \label{eq:defmetric}
\end{align}
From $\tilde{\zeta}=\tilde{\varsigma}=0$ on $\zeta=0$, it follows that $\tilde{\zeta}_{,s}=\tilde{\varsigma}_{,s}=0$ on $\zeta=0$. Hence the metric of $\tilde{\mathcal{S}}$ is simply
\begin{align}
\tilde{f}_s^{\,2}\,\mathrm{d}s^2+\tilde{f}_\phi^{\,2}\,r^2\mathrm{d}\phi^2.\label{eq:defmidmetric}
\end{align}
\end{subequations}
At the surfaces $\tilde{\zeta}=\pm\tilde{h}^\pm(s)$ of the deformed shell, the unit tangent vectors are $\vec{\tilde{e}_s^\pm}$ and $\vec{\tilde{e}_{\smash\phi}^\pm}=\vec{\tilde{e}_\phi}$. They define the normals $\vec{\tilde{n}^\pm}\parallel\smash{\vec{\tilde{e}_{\smash{s}}^\pm}}\times\vec{\tilde{e}_{\smash{\phi}}^\pm}$~\figref{fig2}{e}.

\subsubsection{Intrinsic configuration of the shell: Incompatibility}
To specify the intrinsic configuration $\mathcal{V}^0$ of the shell, we introduce the intrinsic stretches $\smash{f_s^0,f_\phi^0}$ and the intrinsic curvatures $\smash{\kappa_s^0,\kappa_\phi^0}$ and the intrinsic normal displacement $\zeta^0$. We assume that $\smash{f_s^0,f_\phi^0}$ and $\smash{\kappa_s^0,\kappa_\phi^0}$ are functions of $s$ only, while $\zeta^0(s,\zeta)$ is strictly increasing in $\zeta$, with $\zeta^0=0$ on $\zeta=0$. Further, we assume that the analogue of the displacement parallel to the midsurface vanishes, $\varsigma^0=0$.

Although we have named these functions with reference to similar quantities defined for the deformed configuration, they lack a geometric meaning at this stage. In fact, the Riemannian metric that we can write down by analogy with Eq.~\eqref{eq:defmetric},
\begin{subequations}
\begin{align}
&\!\left\{\left[f_s^0\bigl(1-\varepsilon\kappa_s^0\zeta^0\bigr)\right]^2\!+\varepsilon^2\bigl(\zeta^0_{,s}\bigr)^2\right\}\mathrm{d}s^2+\bigl[f_\phi^0\bigl(1-\varepsilon\kappa_\phi^0\zeta^0\bigr)\bigr]^2r^2\mathrm{d}\phi^2\nonumber\\
&\qquad+\varepsilon^2\bigl(\zeta^0_{,\zeta}\bigr)^2\mathrm{d}\zeta^2+2\varepsilon^2\zeta^0_{,\zeta}\zeta^0_{,s}\,\mathrm{d}s\,\mathrm{d}\zeta,\label{eq:intmetric}
\end{align}
is not in general compatible: its Riemann curvature tensor does not vanish in general, so it cannot in general be embedded into three-dimensional Euclidean space~\cite{goriely}. Mechanically, this means that relieving all stresses in the shell requires an infinite number of cuts~\cite{goriely}. This is not surprising because, in the biological system, each cell undergoes independent shape changes or division in general and, since cells are infinitesimal in this continuum description, isolating these infinitesimal building blocks requires infinitely many cuts.

We now define the intrinsic midsurface $\mathcal{S}^0$ of the shell by its Riemannian metric, which is, by analogy with Eq.~\eqref{eq:defmidmetric} and consistently with Eq.~\eqref{eq:intmetric},
\begin{align}
\bigl(f_s^0\bigr)^2\,\mathrm{d}s^2+\bigl(f_\phi^0\bigr)^2\,r^2\mathrm{d}\phi^2. \label{eq:intmidmetric}
\end{align}
\end{subequations}
It follows from a local embedding theorem for Riemannian metrics~\cite{janet26,cartan27} that this two-dimensional metric can be embedded, at least locally, into three-dimensional Euclidean space. In particular, this means that there exists a local (right-handed) orthonormal intrinsic basis $\mathcal{B}^0=\{\vec{E_s},\vec{E_\phi},\vec{N}\}$ of three-dimensional space such that $\vec{E_s},\vec{E_\phi}=\vec{u_\phi}$ are tangent to $\mathcal{S}^0$, and $\vec{N}$ is normal to it~\figref{fig2}{f}. Because the metric~\eqref{eq:intmetric} is incompatible, the curvatures $\smash{\varkappa_s^0,\varkappa_\phi^0}$ of $\mathcal{S}^0$ are in general different from the intrinsic curvatures $\smash{\kappa_s^0,\kappa_\phi^0}$. While Eq.~\eqref{eq:intmidmetric} assigns a geometric meaning to the intrinsic stretches $\smash{f_s^0,f_\phi^0}$, these intrinsic curvatures therefore remain without the direct geometric realisation that would result from an embedding into three-dimensional Euclidean space, as does the intrinsic normal displacement $\zeta^0$. 

We specify the latter by requiring the intrinsic deformations to conserve volume. This assumption is, for example, appropriate for \emph{Volvox} inversion~\figref{fig1}{b}: the cell measurements of Ref.~\cite{hohn11} suggest that the cell shape changes driving inversion preserve volume. For other developmental processes that include cell division, the assumption of intrinsic volume conservation would be replaced with a position-dependent constraint that takes account of this growth. Since $\zeta^0(s,\zeta)$ is increasing and can hence be inverted to yield $\smash{\zeta\bigl(s,\zeta^0\bigr)}$, Eq.~\eqref{eq:intmetric} becomes, on changing coordinates from $\{s,\phi,\zeta\}$ to $\{s,\phi,\zeta^0\}$,\begin{subequations}
\begin{align}
\bigl(\chi_s^0\bigr)^2\mathrm{d}s^2+\bigl(\chi_\phi^0\bigr)^2\mathrm{d}\phi^2+\bigl(\chi_{\zeta^0}^0\bigr)^2\bigl(\mathrm{d}\zeta^0\bigr)^2
\end{align}
with scale factors
\begin{align}
&\chi_s^0=f_s^0\bigl(1-\varepsilon\kappa_s^0\zeta^0\bigr),&&\chi_\phi^0=f_\phi^0r\bigl(1-\varepsilon\kappa_\phi^0\zeta^0\bigr),&&\chi_{\zeta^0}^0=\varepsilon.
\end{align}
Its volume element is therefore
\begin{align}
\mathrm{d}V^0&=\chi_s^0\chi_\phi^0\chi_{\zeta^0}^0\,\mathrm{d}s\,\mathrm{d}\phi\,\mathrm{d}\zeta^0\nonumber\\
&=\varepsilon f_s^0f_\phi^0\bigl(1-\varepsilon\kappa_s^0\zeta^0\bigr)\bigl(1-\varepsilon\kappa_\phi^0\zeta^0\bigr)\,r\, \mathrm{d}s\,\mathrm{d}\phi\,\mathrm{d}\zeta^0.\label{eq:dVint}
\end{align}
\end{subequations}
Intrinsic volume conservation requires $\mathrm{d}V=\mathrm{d}V^0$, so Eqs.~\eqref{eq:dVundef} and \eqref{eq:dVint} combine to yield a differential equation for $\zeta^0$ as a function of $\zeta$, which we will eventually integrate in subsection~\ref{sec:modelb} under the scaling assumptions of our shell theory.

At this stage, $\mathcal{S}$, $\tilde{\mathcal{S}}$, and $\mathcal{S}^0$ are defined to be corresponding surfaces within the shell. Indeed, it would it be possible to develop a shell theory for any choice of surfaces that correspond to each other in this way. At this point however, we make a particular choice of surfaces (that we shall refer to as midsurfaces) by imposing the following condition: the shell surfaces, at $\zeta=\pm h^\pm(s)$ and $\tilde\zeta=\pm\tilde{h}^\pm(s)$ in the deformed and undeformed configurations respectively, correspond to $\zeta^0=\pm h^0(s)/2$; the calculations in subsection~\ref{sec:modelb} will show that this choice can be made. We stress that, like $\zeta^0$, the intrinsic thickness $h^0(s)$ lacks a direct geometric realisation.

We close by noting that $\zeta^0(s,\zeta)$ and hence $h^0(s)$ can also be specified without reference to the incompatible metric~\eqref{eq:intmetric}, by imposing the condition $\det{\tens{F^0}}=1$. Indeed, with the intrinsic deformation gradient $\tens{F^0}$ defined as in Eq.~\eqref{eq:F0} below, this is easily seen to be equivalent with $\mathrm{d}V=\mathrm{d}V^0$. Conversely, the condition $\det{\tens{F^0}}=1$ can be used to define the intrinsic volume element $\mathrm{d}V^0$ without reference to Eq.~\eqref{eq:intmetric}.

\subsubsection{Calculation of the deformation gradient tensors}
The geometric deformation gradient is $\btilde{\tens{F}}=\operatorname{Grad}\vec{\tilde{r}}$~\cite{goriely}, in which the gradient with respect to the undeformed configuration is~\cite{goriely}
\begin{align}
\operatorname{Grad}{\cdot}=\dfrac{1}{\chi_s^2}\dfrac{\partial\cdot}{\partial s}\otimes\dfrac{\partial\vec{r}}{\partial s}+\dfrac{1}{\chi_\phi^2}\dfrac{\partial\cdot}{\partial\phi}\otimes\dfrac{\partial\vec{r}}{\partial\phi}+\dfrac{1}{\chi_\zeta^2}\dfrac{\partial\cdot}{\partial\zeta}\otimes\dfrac{\partial\vec{r}}{\partial\zeta}. \label{eq:gradundef}
\end{align}
Combining Eqs.~\eqref{eq:pdundef}, \eqref{eq:sfundef}, and \eqref{eq:pddef}, we thus obtain the geometric deformation gradient, 
\begin{widetext}
\begin{align}
\btilde{\tens{F}}&=\left(\begin{array}{ccc}
\dfrac{\tilde{f}_s\left(1-\varepsilon\tilde{\kappa}_s \tilde{\zeta}\right)+\varepsilon \tilde{\varsigma}_{,s}}{1-\varepsilon\varkappa_s\zeta}&0&\tilde{\varsigma}_{,\zeta}\\
0&\dfrac{\tilde{f}_\phi\left(1-\varepsilon\tilde{\kappa}_\phi \tilde{\zeta}\right)+\varepsilon \tilde{\varsigma}\cos{\tilde{\psi}}/r}{1-\varepsilon\varkappa_\phi\zeta}&0\\
\dfrac{\varepsilon\left(\tilde{\zeta}_{,s}+\tilde{f}_s\tilde{\kappa}_s \tilde{\varsigma}\right)}{1-\varepsilon\varkappa_s\zeta}&0&\tilde{\zeta}_{,\zeta}
\end{array}\right),\label{eq:Ftilde}
\end{align}
expressed here with respect to the mixed basis $\tilde{\mathcal{B}}\otimes\mathcal{B}$. We now complete specifying the intrinsic configuration $\mathcal{V}^0$ by writing down an analogous expression for the intrinsic deformation gradient with respect to the mixed basis $\mathcal{B}^0\otimes\mathcal{B}$, viz.
\begin{align}
&\tens{F^0}=\left(\begin{array}{ccc}
\dfrac{f_s^0\left(1-\varepsilon\kappa_s^0\zeta^0\right)}{1-\varepsilon\varkappa_s\zeta}&0&0\\
0&\dfrac{f_\phi^0\bigl(1-\varepsilon\kappa_\phi^0\zeta^0\bigr)}{1-\varepsilon\varkappa_\phi\zeta}&0\\
\dfrac{\varepsilon \zeta^0_{,s}}{1-\varepsilon\varkappa_s\zeta}&0&\zeta^0_{,\zeta}
\end{array}\right).  \label{eq:F0}
\end{align}
The elastic deformation gradient is therefore, with respect to the natural mixed basis $\tilde{\mathcal{B}}\otimes\mathcal{B}^0$,
\begin{align}
\tens{F}=\btilde{\tens{F}}\bigl(\tens{F^0}\bigr)^{-1}=\left(\begin{array}{ccc}
\dfrac{\tilde{f}_s\left(1-\varepsilon\tilde{\kappa}_s \tilde{\zeta}\right)+\varepsilon\bigl(\tilde{\varsigma}_{,s}+\tilde{\varsigma}_{,\zeta^0}\zeta^0_{,s}\bigr)}{f_s^0\left(1-\varepsilon\kappa_s^0\zeta^0\right)}&0&\tilde{\varsigma}_{,\zeta^0}\\
0&\dfrac{\tilde{f}_\phi\left(1-\varepsilon\tilde{\kappa}_\phi \tilde{\zeta}\right)+\varepsilon\tilde{\varsigma}\cos{\tilde{\psi}}/r}{f_\phi^0\bigl(1-\varepsilon\kappa_\phi^0\zeta^0\bigr)}&0\\
\dfrac{\varepsilon\bigl(\tilde{\zeta}_{,s}+\tilde{f}_s\tilde{\kappa}_s\tilde{\varsigma}-\zeta^0_{,s}\tilde{\zeta}_{,\zeta^0}\bigr)}{f_s^0\left(1-\varepsilon\kappa_s^0\zeta^0\right)}&0&\tilde{\zeta}_{,\zeta^0}
\end{array}\right).\label{eq:F}
\end{align}
\end{widetext}

\subsection{Thin shell theory for large bending deformations}\label{sec:modelb}
In this subsection, we derive the effective elastic energy for the shell by asymptotic expansion of three-dimensional elasticity. We assume the simplest constitutive law, that the shell is made of an incompressible neo-Hookean material~\cite{goriely}, so that its elastic energy is
\begin{align}
\mathcal{E}=\int_{\mathcal{V}^0}{e\,\mathrm{d}V^0},\quad\text{with }e=\dfrac{C}{2}(\mathcal{I}_1-3), \label{eq:E}
\end{align}
wherein $C>0$ is a material parameter, and $\mathcal{I}_1$ is the first invariant of the right Cauchy--Green tensor $\tens{C}=\tens{F}^\top\tens{F}$~\cite{goriely,* [] [{ Chap.~1, pp.~1--72, Chap.~2.2, pp.~83--121, Chap.~3.4, pp.~152--155, and Chap.~6.1, pp.~328--351.}] ogden}. The integration of the strain energy density $e$ is over the intrinsic configuration $\mathcal{V}^0$ of the shell, with volume element $\mathrm{d}V^0$. As we have noted above, this can be defined from the condition $\det{\tens{F^0}}=1$, independently of the incompatible metric~\eqref{eq:intmetric}.

The force on a area element $\mathrm{d}\tilde{S}$ with unit normal $\vec{\tilde{m}}$ of the deformed configuration is $\tens{T}\vec{\tilde{m}}\,\mathrm{d}\tilde{S}$~\cite{goriely,ogden}. In this expression, $\tens{T}$ is the Cauchy stress tensor, which, for this neo--Hookean material, is related to the deformation gradient by~\cite{dervaux09}
\begin{align}
\tens{T}=C\left(\tens{F}\tens{F}^\top-p\tens{I}\right), 
\end{align}
in which $\tens{I}$ is the identity, and the Lagrange multiplier $p$ is proportional to pressure and imposes the incompressibility condition $\det{\tens{F}}=1$. To this area element of the deformed configurations corresponds, in the undeformed configuration, an area element $\mathrm{d}S$ with unit normal $\vec{m}$. Nanson's relation~\cite{goriely,ogden} states that $\vec{\tilde{m}}\,\mathrm{d}\tilde{S}=\tilde{\mathcal{J}}\btilde{\tens{F}}^{-\top}\vec{m}\,\mathrm{d}S$, where ${\tilde{\mathcal{J}}=\det{\btilde{\tens{F}}}=\det{\tens{F}}\det{\tens{F^0}}=1}$. We introduce the tensor
\begin{align}
\tens{P}=\tens{T}\btilde{\tens{F}}^{-\top}=C\tens{Q}\quad\text{with }\tens{Q}=\tens{F}\bigl(\tens{F^0}\bigr)^{-\top}-p\btilde{\tens{F}}^{-\top}.\label{eq:P}
\end{align}
In particular, if $\tens{F^0}=\tens{I}$, then $\tens{P}=\tens{T}\tens{F}^{-\top}$ is the familiar (first) Piola--Kirchhoff tensor~\cite{goriely}. By definition, ${\tens{T}\vec{\tilde{m}}\,\mathrm{d}\tilde{S}=\tens{P}\vec{m}\,\mathrm{d}S}$, and hence, similarly to the derivation of the familiar Cauchy equation of classical elasticity~\cite{goriely,ogden}, the configuration of the shell minimising the energy~\eqref{eq:E} is determined by
\begin{subequations}
\begin{align}
\operatorname{Div}{\tens{Q}^\top}=\vec{0},
\end{align}
where the divergence (with respect to the undeformed configuration of the shell) is defined by contracting the first and last indices of the gradient in Eq.~\eqref{eq:gradundef}. Since $\mathcal{B}$ is independent of $\zeta$ by definition, and using the nabla operator to denote the gradient on $\mathcal{S}$, this becomes, on separating the components parallel and perpendicular to the midsurface,
\begin{align}
\dfrac{(\tens{Q}\vec{n})_{,\zeta}}{\varepsilon}+\vec{\nabla}\cdot\tens{Q}^\top=\vec{0}.\label{eq:Cauchy}
\end{align}
\end{subequations}

\subsubsection{Scaling assumptions}
At this point, we break the complete generality of our description by making scalings assumptions appropriate for a shell theory of large intrinsic bending deformations.

First, we introduce large intrinsic bending deformations explicitly by scaling the intrinsic curvatures so as to allow small radii of curvature in the meridional direction, viz.
\begin{align}
&\kappa_s^0=f_s^0f_\phi^0\dfrac{\lambda_s^0}{\varepsilon},&& \kappa_\phi^0=f_s^0f_\phi^0\,\lambda_\phi^0,\label{eq:curvscalings}
\end{align}
in which the scaled intrinsic curvatures $\lambda_s^0,\lambda_{\smash\phi}^0$ are assumed to be $O(1)$ quantities. This scaling regime in which the meridional intrinsic radius of curvature becomes comparable to the thickness of the cell sheet is, as shown in \textfigref{fig1}{b}, the one relevant for \emph{Volvox} invagination, which we shall analyse in Section~\ref{sec:volvox}. Appendix~\ref{appA} treats the general case in which all components of the curvature tensor are allowed to be large.

Second, we make the standard scaling assumptions of shell theory, that the elastic strains are small, i.e. that the stretches and curvatures in the deformed configuration do not differ ``too much'' from the intrinsic stretches and curvatures. In particular, while we have allowed the radius of curvature $1/\kappa_s^0$ to become comparable to the shell thickness in Eqs.~\eqref{eq:curvscalings}, we shall assume the deviations from this to remain small. More formally, we introduce the shell strains $E_s,E_\phi$ by writing
\begin{align}
\tilde{f}_s&=f_s^0\bigl(1+\varepsilon E_s\bigr), &\tilde{f}_\phi&=f_\phi^0\left(1+\varepsilon E_\phi\right),\label{eq:shellstr}
\end{align}
and the curvature strains $L_s,L_\phi$ by letting
\begin{align}
&\tilde{\kappa}_s=f_s^0f_\phi^0\left(\dfrac{\lambda_s^0}{\varepsilon}+L_s\right),&& \tilde{\kappa}_\phi=f_s^0f_\phi^0\left(\lambda_\phi^0+L_\phi\right).\label{eq:curvstr}
\end{align}
Finally, we introduce the scaled variables
\begin{align}
&Z^0=f_s^0f_\phi^0\,\zeta^0,&& Z=f_s^0f_\phi^0\,\tilde{\zeta},&& S=f_s^0f_\phi^0\,\tilde{\varsigma},\label{eq:scaleds}
\end{align}
While we will come back to discussing the factors $\smash{f_s^0f_{\smash\phi}^0}$ that arise in Eqs.~\eqref{eq:curvscalings}, \eqref{eq:curvstr}, and \eqref{eq:scaleds}, we note, for now and from Eq.~\eqref{eq:intmidmetric}, the following: the intrinsic midsurface $\mathcal{S}^0$ has surface element $\smash{\mathrm{d}S^0=f_s^0f_\phi^0\,r\,\mathrm{d}r\,\mathrm{d}\phi=f_s^0f_\phi^0\,\mathrm{d}S}$, with $\mathrm{d}S$ the surface element of the undeformed midsurface $\mathcal{S}$. Hence these rescalings by $\smash{f_s^0f_{\smash\phi}^0}$ absorb the intrinsic stretching of the midsurface. This will turn out to simplify expressions that arise in subsequent calculations.

\subsubsection{Boundary and incompressibility conditions}
We solve the Cauchy equation~\eqref{eq:Cauchy} subject to the incompressibility condition $\det{\tens{F}}=1$ and force-free boundary conditions. These boundary conditions, that there be no external forces on the surfaces of the shell, are relevant for many problems in developmental biology, where deformations are, as discussed in the introduction, driven by changes of the intrinsic geometry only; including external forces does not pose any additional difficulty, though. 

These force-free boundary conditions read $\smash{\tens{T}^\bpm\vec{\tilde{n}^\pm}=\vec{0}}$~\cite{goriely}, where $\tens{T}^\bpm$ are the Cauchy tensors evaluated on the surfaces ${\tilde{\zeta}=\pm\tilde{h}^\pm}$ of $\tilde{\mathcal{V}}$. By the above, these are equivalent with ${\tens{P}^\bpm\vec{n^\pm}=\vec{0}}$, where, from Eq.~\eqref{eq:P}, $\tens{P}^\bpm=C\tens{Q}^\bpm$ are evaluated on the surfaces $\zeta=\pm h^\pm$ of $\mathcal{V}$, the normal vectors $\vec{n^\pm}$ of which are given by Eq.~\eqref{eq:nundef}. The latter yields the expansion\begin{align}
\vec{n^\pm}&=\vec{n}\mp \varepsilon h^\pm_{,s}\vec{e_s}+O\bigl(\varepsilon^2\bigr). \label{eq:Npm}
\end{align}

The incompressibility condition is ${\det{\tens{F}}=1}$. Since the bases $\tilde{\mathcal{B}}$ and $\mathcal{B}^0$ are orthonormal, there exist rotations, represented by proper orthogonal matrices $\tilde{\mat{R}}$ and $\mat{R^0}$, that map the standard Cartesian basis $\mathcal{X}$ onto $\tilde{\mathcal{B}}$ and $\mathcal{B}^0$, respectively. Hence, if $\mat{F}$ denotes the matrix in Eq.~\eqref{eq:F} that represents $\tens{F}$ with respect to the mixed basis ${\tilde{\mathcal{B}}\otimes\mathcal{B}^0}$, then $\tens{F}$ is represented by $\smash{\tilde{\mat{R}}^\top\mat{F}\mat{R^0}}$ with respect to $\mathcal{X}\otimes\mathcal{X}$. Since $\det{\tilde{\mat{R}}}=\det{\mat{R^0}}=1$, $\det{\tens{F}}=\det{\bigl(\smash{\tilde{\mat{R}}^\top\mat{F}\mat{R^0}}\bigr)}=\det{\mat{F}}$. The incompressibility condition can therefore be evaluated using the matrix in Eq.~\eqref{eq:F}, but it is important to recognise that incompressibility is a tensorial condition. For the general, non-axisymmetric deformations discussed in Appendix~\ref{appA}, we shall indeed have to distinguish more carefully between tensors and the matrices representing them with respect to mixed non-orthogonal bases, which is why we have already introduced different notations, based on Ogden's~\cite{ogden}, for matrices (sans serif font) and tensors (bold sans serif font) that could be used interchangeably here.

\subsubsection{Intrinsic volume conservation}
Before expanding the boundary and incompressibility conditions asymptotically, we determine the dependence of $\zeta^0$ and hence $Z^0$ on $\zeta$ that results from the condition $\mathrm{d}V=\mathrm{d}V^0$ of intrinsic volume conservation. On recalling that $\kappa_s^0=O\bigl(\varepsilon^{-1}\bigr)$, the expressions for $\mathrm{d}V$ in Eq.~\eqref{eq:dVundef} and $\smash{\mathrm{d}V^0}$ in Eq.~\eqref{eq:dVint} yield, at leading order, a differential equation for $Z^0(\zeta)$,
\begin{align}
\left(1-\lambda_s^0Z^0\right)Z^0_{,\zeta}=1\;\Longrightarrow\;Z^0=\dfrac{1}{\lambda_s^0}\left(1-\sqrt{1-2\lambda_s^0\zeta}\right),\label{eq:Z0zeta}
\end{align}
where we have imposed $Z^0=0$ at $\zeta=0$. Let \smash{$H^0=h^0f_s^0f_\phi^0$}. Since $\zeta^0=\pm h^0/2\Longleftrightarrow Z^0=\pm H^0/2$ at $\zeta=\pm h^\pm$ by definition, Eq.~\eqref{eq:Z0zeta} implies
\begin{align}
h^\pm&=\dfrac{H^0}{2}\left(1\mp\dfrac{\lambda_s^0}{4}H^0\right)\quad\Longrightarrow\quad h=h^++h^-=H^0,\label{eq:h}
\end{align}
wherein $h$ is again the undeformed thickness of the cell sheet~\figref{fig2}{c}. We note that Eq.~\eqref{eq:h} is a leading-order result only, since we have ignored $O(\varepsilon)$ corrections in Eq.~\eqref{eq:Z0zeta}.

\subsubsection{Expansion of the boundary and incompressibility conditions}
To expand the incompressibility and boundary conditions in the small parameter $\varepsilon$, we posit regular expansions
\begin{align}
Z&=Z_{(0)}+\varepsilon Z_{(1)}+O\bigl(\varepsilon^2\bigr),&S&=S_{(0)}+O(\varepsilon),\label{eq:expansions}
\end{align}
for the scaled transverse and parallel displacements. Throughout this paper, we shall use subscripts in parentheses in this way to denote the different terms in asymptotic expansions in~$\varepsilon$. We further expand
\begin{align}
&\tens{Q}=\tens{Q_{(0)}}+\varepsilon\tens{Q_{(1)}}+O\bigl(\varepsilon^2\bigr), &p&=p_{(0)}+O(\varepsilon).\label{eq:Qpexpansions}
\end{align}
\paragraph{Expansion at order $O(1)$.} At leading order, Eq.~\eqref{eq:Cauchy} yields $(\tens{Q_{(0)}}\vec{n})_{,\zeta}=\vec{0}$, so $\tens{Q_{(0)}}\vec{n}=\vec{Q}(s)$ is independent of $\zeta$. It follows that $\vec{0}=\tens{Q}^\bpm\vec{n^\pm}=\tens{Q}^\bpm_{\smash{\tens{(0)}}}\vec{n}+O(\varepsilon)=\pm\vec{Q}+O(\varepsilon)$ using Eq.~\eqref{eq:Npm}. Thus $\vec{0}\equiv\vec{Q}=\tens{Q_{(0)}}\vec{n}=\smash{\bigl(q^s_{\smash{(0)}},0,q^n_{\smash{(0)}}\bigr)}$. Expanding definition~\eqref{eq:P} using Eqs.~\eqref{eq:Ftilde}, \eqref{eq:F0}, and \eqref{eq:F}, this yields~\footnote{Expansions were carried out using \textsc{Mathematica} (Wolfram, Inc.) to assist with manipulating the complicated algebraic expressions that arise in these calculations.}

\begin{widetext}
\begin{subequations}\label{eq:G0}
\begin{align}
0=q^s_{(0)}&=f_s^0f_\phi^0\bigl(1-\lambda_s^0Z^0\bigr)\dfrac{\lambda_s^0S_{(0)}\left[p_{(0)}-\bigl(S_{(0),Z^0}\bigr)^2\right]+\left(1-\lambda_s^0Z_{(0)}\right)S_{(0),Z^0}Z_{(0),Z^0}}{\left(1-\lambda_s^0Z_{(0)}\right)Z_{(0),Z^0}-\lambda_s^0S_{(0)}S_{(0),Z^0}},\\
0=q^n_{(0)}&=f_s^0f_\phi^0\bigl(1-\lambda_s^0Z^0\bigr)\dfrac{\left(1-\lambda_s^0Z_{(0)}\right)\left[\bigl(Z_{(0),Z^0}\bigr)^2-p_{(0)}\right]-\lambda_s^0S_{(0)}S_{(0),Z^0}Z_{(0),Z^0}}{\left(1-\lambda_s^0Z_{(0)}\right)Z_{(0),Z^0}-\lambda_s^0S_{(0)}S_{(0),Z^0}},
\end{align}
\end{subequations}
where we have used $\bigl({\zeta^0}_{,\zeta}\bigr)^{-1}=f_s^0f_\phi^0\left(1-\lambda_s^0Z^0\right)+O(\varepsilon)$, which follows from Eq.~\eqref{eq:Z0zeta} on recalling the rescalings~\eqref{eq:scaleds}. Moreover, on expanding the incompressibility condition using Eq.~\eqref{eq:F}, we find
\begin{align}
1=\det{\mat{F}}=1-\dfrac{1-\lambda_s^0Z^0-\left(1-\lambda_s^0Z_{(0)}\right)Z_{(0),Z^0}+\lambda_s^0S_{(0)}S_{(0),Z^0}}{1-\lambda_s^0Z^0}+O(\varepsilon). \label{eq:det0}
\end{align}
\end{widetext}
Eqs.~\eqref{eq:G0} and~\eqref{eq:det0} define a system of three simultaneous linear algebraic equations for $p_{(0)}$, $Z_{(0),Z^0}$, and $S_{(0),Z^0}$, with solution\begin{subequations}
\begin{align}
p_{(0)}&=\dfrac{\left(1-\lambda_s^0Z^0\right)^2}{\left(1-\lambda_s^0Z_{(0)}\right)^2+\left(\lambda_s^0 S_{(0)}\right)^2},\label{eq:P0}\\
Z_{(0),Z^0}&=\dfrac{\left(1-\lambda_s^0Z^0\right)\left(1-\lambda_s^0Z_{(0)}\right)}{\left(1-\lambda_s^0Z_{(0)}\right)^2+\left(\lambda_s^0 S_{(0)}\right)^2},\label{eq:Z0}\\
S_{(0),Z^0}&=-\dfrac{\lambda_s^0 S_{(0)}\left(1-\lambda_s^0Z^0\right)}{\left(1-\lambda_s^0Z_{(0)}\right)^2+\left(\lambda_s^0 S_{(0)}\right)^2}.\label{eq:S0}
\end{align}
\end{subequations}
Eq.~\eqref{eq:det0} or Eqs.~\eqref{eq:Z0} and \eqref{eq:S0} imply
\begin{subequations}
\begin{align}
&-2Z_{(0),Z^0}\bigl(1-\lambda_s^0Z_{(0)}\bigr)+2\lambda_s^0S_{(0)}S_{(0),Z^0}=-2\bigl(1-\lambda_s^0Z^0\bigr).
\end{align}
Integrating and using the fact that $Z_{(0)}=S_{(0)}=0$ at $Z^0=0$ by definition of the midsurfaces, we obtain
\begin{align}
\bigl(1-\lambda_s^0Z_{(0)}\bigr)^2+\bigl(\lambda_s^0S_{(0)}\bigr)^2=\bigl(1-\lambda_s^0Z^0\bigr)^2.\label{eq:int}
\end{align}
\end{subequations}
Eq.~\eqref{eq:P0} now becomes $p_{(0)}=1$. Moreover, on substituting Eq.~\eqref{eq:int} into Eq.~\eqref{eq:Z0},
\begin{align}
\dfrac{\partial Z_{(0)}}{\partial Z^0}=\dfrac{1-\lambda_s^0 Z_{(0)}}{1-\lambda_s^0Z^0}\quad\Longrightarrow\quad\dfrac{1-\lambda_s^0Z_{(0)}}{1-\lambda_s^0 Z^0}=\text{const.}, 
\end{align}
which, using $Z_{(0)}=0$ at $Z^0=0$ again, yields $Z_{(0)}\equiv Z^0$. Hence $S_{(0)}\equiv 0$ from Eq.~\eqref{eq:int}. The last equality is the Kirchhoff ``hypothesis''~\cite{audoly}: normals to the intrinsic midsurface remain, at lowest order, normal to the deformed midsurface.
\paragraph{Expansion at order $O(\varepsilon)$.} We now expand the incompressibility condition further, finding
\begin{align}
0&=\det{\mat{F}}-1\nonumber\\
&=\varepsilon\left(E_s+E_\phi-L_\phi Z^0+\dfrac{\partial Z_{(1)}}{\partial Z^0}-\dfrac{L_s Z^0+\lambda_s^0Z_{(1)}}{1-\lambda_s^0Z^0}\right)+O\bigl(\varepsilon^2\bigr). \label{eq:det1}
\end{align}
On solving the resulting differential equation for $Z_{(1)}$ by imposing $Z_{(1)}=0$ at $Z^0=0$, we obtain
\begin{widetext}

\vspace{-5mm}
\begin{align}
Z_{(1)}=-\dfrac{Z^0\left\{6(E_s+E_\phi)-3Z^0\left[L_s+L_\phi+\lambda_s^0(E_s+E_\phi)\right]+2\lambda_s^0L_\phi\left(Z^0\right)^2\right\}}{6\left(1-\lambda_s^0Z^0\right)}.\label{eq:Z1}
\end{align}
\paragraph{Expansion at order $O\bigl(\varepsilon^2\bigr)$.} It will turn out not to be necessary to expand the deformation gradient explicitly beyond order~$O(\varepsilon)$. Indeed, it will suffice to consider a formal expansion,
\begin{align}
\tens{F}=\left(\begin{array}{ccc}
1+\varepsilon a_{(1)}+\varepsilon^2 a_{(2)}+O\bigl(\varepsilon^3\bigr)&0&\varepsilon v_{(1)}+O\bigl(\varepsilon^2\bigr)\\
0&1+\varepsilon b_{(1)}+\varepsilon^2 b_{(2)}+O\bigl(\varepsilon^3\bigr)&0\\
\varepsilon w_{(1)}+O\bigl(\varepsilon^2\bigr)&0&1+\varepsilon c_{(1)}+\varepsilon^2 c_{(2)}+O\bigl(\varepsilon^3\bigr)
\end{array}\right),\label{eq:F2}
\end{align}
with the leading-order terms found from Eq.~\eqref{eq:F}. This also yields, using Eq.~\eqref{eq:Z1},
\begin{align}
a_{(1)}&=\dfrac{6 E_s-6\left[L_s+\lambda_s^0\left(E_s-E_\phi\right)\right]Z^0+3\lambda_s^0\left[L_s-L_\phi+\lambda_s^0\left(E_s-E_\phi\right)\right]\left(Z^0\right)^2+2\left(\lambda_s^0\right)^2L_\phi\left(Z^0\right)^3}{6\left(1-\lambda_s^0Z^0\right)^2},&b_{(1)}&=E_\phi-Z^0 L_\phi.\label{eq:a1b1}
\end{align}
Expressions for $a_{(2)},b_{(2)},c_{(1)},c_{(2)},v_{(1)},w_{(1)}$ could similarly be obtained in terms of the expansions~\eqref{eq:expansions}, but, as announced, will turn out to be of no consequence. Using Eq.~\eqref{eq:F2}, the incompressibility condition becomes
\begin{align}
1=\det{\mat{F}}=1+\varepsilon\left(a_{(1)}+b_{(1)}+c_{(1)}\right)+\varepsilon^2\left(a_{(2)}+b_{(2)}+c_{(2)}+a_{(1)}b_{(1)}+b_{(1)}c_{(1)}+c_{(1)}a_{(1)}-v_{(1)}w_{(1)}\right)+O\bigl(\varepsilon^3\bigr).\label{eq:detFexp}
\end{align}
Next, using Eq.~\eqref{eq:F0}, we introduce an analogous formal expansion for the intrinsic deformation gradient, viz. 
\begin{align}
\tens{F^0}=\left(\begin{array}{ccc}
a^0_{(0)}+O(\varepsilon)&0&0\\
0&b^0_{(0)}+O(\varepsilon)&0\\
\varepsilon w^0_{(1)}+O\bigl(\varepsilon^2\bigr)&0&c^0_{(0)}+O(\varepsilon)
\end{array}\right),
\end{align}
where $c^0_{(0)}=\smash{\bigl[f_s^0f_\phi^0\bigl(1-\lambda_s^0Z^0\bigr)\bigr]^{-1}}$ using Eq.~\eqref{eq:Z0zeta}, and the values of $a^0_{(0)},b^0_{(0)},w^0_{(1)}$ are of no consequence. Hence, using Eq.~\eqref{eq:F2},
\begin{align}
\btilde{\tens{F}}=\tens{FF^0}=\left(\begin{array}{ccc}
a^0_{(0)}+O(\varepsilon)&0&\varepsilon c^0_{(0)}v_1+O\bigl(\varepsilon^2\bigr)\\
0&b^0_{(0)}+O(\varepsilon)&0\\
\varepsilon\bigl(w^0_{\smash{(1)}}+a^0_{\smash{(0)}}w_{(1)}\bigr)+O\bigl(\varepsilon^2\bigr)&0&c^0_{(0)}+O(\varepsilon).
\end{array}\right),
\end{align}
and thus, since $p=1+O(\varepsilon)$,
\begin{align}
\tens{Q}=\left(\begin{array}{ccc}
O(\varepsilon)&0&\varepsilon(v_{(1)}+w_{(1)})/c^0_{(0)}+O\bigl(\varepsilon^2\bigr)\\
0&O(\varepsilon)&0\\
O(\varepsilon)&0&O(\varepsilon). 
\end{array}\right)\quad\Longrightarrow\quad\tens{Q_{(0)}}=\tens{O},\;\tens{Q_{(1)}}\vec{n}=\left(\begin{array}{c}(v_{(1)}+w_{(1)})/c^0_{(0)}\\0\\O(1)\end{array}\right).
\end{align}
In particular, Eq.~\eqref{eq:Cauchy} at order $O(1)$ is just \smash{$(\tens{Q_{(1)}}\vec{n})_{,\zeta}=\vec{0}$}. Moreover $\vec{0}=\tens{Q^\bpm}\vec{n^\pm}=\varepsilon\tens{Q_{\smash{\tens{(1)}}}^\bpm}\vec{n}+O\bigl(\varepsilon^2\bigr)$, since $\tens{Q_{(0)}}=\tens{O}$ and using Eq.~\eqref{eq:Npm}. Similarly to above, this implies $\tens{Q_{(1)}}\vec{n}\equiv\vec{0}$. From this and from Eq.~\eqref{eq:detFexp}, we infer
\begin{align}
w_{(1)}&=-v_{(1)},&c_{(1)}&=-\left(a_{(1)}+b_{(1)}\right),&c_{(2)}&=a_{(1)}^2+a_{(1)}b_{(1)}+b_{(1)}^2-a_{(2)}-b_{(2)}+v_{(1)}w_{(1)}.\label{eq:BCres}
\end{align}
\subsubsection{Asymptotic expansion of the constitutive relations}
On computing the expansion of $\tens{C}=\tens{F}^\top\tens{F}$ from Eq.~\eqref{eq:F2} and hence that of $\mathcal{I}_1=\tr{\tens{C}}$, and simplifying using Eqs.~\eqref{eq:BCres}, we obtain\begin{subequations}
\begin{align}
\mathcal{I}_1&=3+\varepsilon\bigl[2\bigl(a_{(1)}+b_{(1)}+c_{(1)}\bigr)\bigr]+\varepsilon^2\left[a_{\smash{(1)}}^2+b_{\smash{(1)}}^2+c_{\smash{(1)}}^2+v_{\smash{(1)}}^2+w_{\smash{(1)}}^2+2\left(a_{(2)}+b_{(2)}+c_{(2)}\right)\right]+O\bigl(\varepsilon^3\bigr)\nonumber\\
&=3+\varepsilon^2\bigl[4\bigl(a_{\smash{(1)}}^2+a_{(1)}b_{(1)}+b_{\smash{(1)}}^2\bigr)\bigr]+O\bigl(\varepsilon^3\bigr).\label{eq:I1}
\end{align}
Hence, from Eqs.~\eqref{eq:a1b1} and on introducing $x=\lambda_s^0Z^0$,
\begin{align}
\mathcal{I}_1&=3+\dfrac{\varepsilon^2}{(1-x)^4}\Biggl\{\left[1+\left(1-x\right)^2\right]^2E_s^2+2\left[1+(1-x)^2\right]E_sE_\phi+\left(4-12x+18x^2-12x^3+3x^4\right)E_\phi^2\nonumber\\
&\hspace{24mm}-\dfrac{1}{\lambda_s^0}\left[2x\left(4-6x+4x^2-x^3\right)E_sL_s-2x(2-x)E_\phi L_s-\dfrac{2x}{3}\left(6-12x+11x^2-5x^3+x^4\right) E_sL_\phi\right.\nonumber\\
&\hspace{34mm}\left.-\dfrac{2x}{3}\left(12-39x+55x^2-36x^3+9x^4\right)E_\phi L_\phi\right]\nonumber\\
&\hspace{24mm}\left.+\dfrac{1}{\left(\lambda_s^0\right)^2}\left[x^2(2-x)^2L_s^2+\dfrac{2x^2}{3}\left(6-9x+5x^2-x^3\right)L_sL_\phi+\dfrac{x^2}{9}\left(36-126x+177x^2-114x^3+28x^4\right)L_\phi^2\right]\right\}\nonumber\\
&\hspace{12mm}+O\bigl(\varepsilon^3\bigr).\label{eq:I2}
\end{align}
\end{subequations}
\end{widetext}
This determines the leading-order term in the asymptotic expansion of the energy density in Eq.~\eqref{eq:E}. On defining, from Eq.~\eqref{eq:F2}, the (symmetric) effective two-dimensional deformation gradient and associated two-dimensional strain,
\begin{align}
\bhat{\tens{F}}&=\left(\begin{array}{cc}
1+\varepsilon a_{(1)}&0\\
0&1+\varepsilon b_{(1)}
\end{array}\right)+O\bigl(\varepsilon^2\bigr),&&\bhat{\tens{E}}=\dfrac{\bhat{\tens{F}}^\top \bhat{\tens{F}}-\tens{I}}{2\varepsilon},
\end{align}
wherein $\tens{I}$ is the identity, we rewrite Eq.~\eqref{eq:I1} as
\begin{align}
\mathcal{I}_1-3=2\varepsilon^2\left[\bigl(\tr{\bhat{\tens{E}}}\bigr)^2+\tr{\bhat{\tens{E}}{}^2}\right]+O\bigl(\varepsilon^3\bigr).\label{eq:effstrain}
\end{align}
This shows how, at leading order, the energy density depends only on the two invariants of the effective two-dimensional strain. In the asymptotic limit of a thin shell, the constitutive relations have thus become effectively two-dimensional.

\subsubsection{Derivation of the thin shell theory}
We are now set up to average out the transverse coordinate and thus obtain the thin shell theory. We obtain, from Eq.~\eqref{eq:dVint}, the leading-order expansion for the volume element in the intrinsic configuration,
\begin{align}
\mathrm{d}V^0&=\varepsilon\bigl(1-\lambda_s^0Z^0\bigr)\,r\,\mathrm{d}s\,\mathrm{d}\phi\,\mathrm{d}Z^0+O\bigl(\varepsilon^2\bigr)\nonumber\\
&=\dfrac{1-x}{\lambda_s^0}\,\varepsilon\,r\,\mathrm{d}s\,\mathrm{d}\phi\,\mathrm{d}x+O\bigl(\varepsilon^2\bigr). \label{eq:dV0}
\end{align}
Moreover, we introduce $\eta=\lambda_s^0h/2$, so that the shell surfaces ${\zeta^0=\pm h^0/2}$ correspond to $x=\pm \eta$. 

On substituting Eqs.~\eqref{eq:I2} and \eqref{eq:dV0} into Eq.~\eqref{eq:E}, integrating with respect to $x$, and using axisymmetry, we then obtain
\begin{subequations}
\begin{align}
\mathcal{E}=\int_{\mathcal{S}}{\hat{e}\,r\,\mathrm{d}s\,\mathrm{d}\phi}=2\pi\int_{\mathcal{C}}{\hat{e}\,r\,\mathrm{d}s}, \label{eq:E2}
\end{align}
\end{subequations}
with the first integration over the undeformed axisymmetric midsurface $\mathcal{S}$ and the second over the curve $\mathcal{C}$ generating $\mathcal{S}$. The effective two-dimensional energy density $\hat{e}$ in Eq.~\eqref{eq:E2} is
\begin{widetext}

\vspace{-3mm}
\setcounter{equation}{56}\begin{subequations}\setcounter{equation}{1}
\begin{align}
\hat{e}&=\dfrac{\varepsilon}{\lambda_s^0}\int_{-\eta}^{\eta}{e(x)(1-x)\,\mathrm{d}x}=\dfrac{C}{2}\varepsilon^3\Bigl\{h\bigl[\alpha_{ss}E_s^2+(\alpha_{s\phi}+\alpha_{\phi s})E_sE_\phi+\alpha_{\phi\phi}E_\phi^2\bigr]+2h^2\left[\beta_{ss}E_sL_s+\beta_{s\phi}E_sL_\phi+\beta_{\phi s}E_\phi L_s\right.\nonumber\\
&\hspace{55mm}+\left.\beta_{\phi\phi}E_\phi L_\phi\right]+h^3\bigl[\gamma_{ss}L_s^2+(\gamma_{s\phi}+\gamma_{\phi s})L_sL_\phi+\gamma_{\phi\phi}L_\phi^2\bigr]\Bigr\}+O\bigl(\varepsilon^4\bigr),\label{eq:edens}
\end{align}\end{subequations}
\end{widetext}
wherein
\begin{subequations}\label{eq:coeffs}
\begin{align}
\alpha_{ss}&=\dfrac{\eta^4-2\eta^2+2}{\left(1-\eta^2\right)^2}+\dfrac{2\tanh^{-1}{\eta}}{\eta},\\
\alpha_{s\phi}&=\alpha_{\phi s}=\dfrac{1}{\left(1-\eta^2\right)^2}+\dfrac{\tanh^{-1}{\eta}}{\eta},\\
\alpha_{\phi\phi}&=\dfrac{3\eta^4-6\eta^2+4}{\left(1-\eta^2\right)^2},\\
\beta_{ss}&=-\dfrac{\eta\left(2-\eta^2\right)}{2\left(1-\eta^2\right)^2},\\
\beta_{s\phi}&=\dfrac{\eta^6+4\eta^4-11\eta^2+3}{18\eta\left(1-\eta^2\right)^2}-\dfrac{\tanh^{-1}{\eta}}{6\eta^2},\\
\beta_{\phi s}&=-\dfrac{1}{2\eta\left(1-\eta^2\right)^2}+\dfrac{\tanh^{-1}{\eta}}{2\eta^2},\\
\beta_{\phi\phi}&=\dfrac{3\eta^5-5\eta^3+\eta}{6\left(1-\eta^2\right)^2},\\
\gamma_{ss}&=\dfrac{\eta^4-2\eta^2+2}{4\eta^2\left(1-\eta^2\right)^2}-\dfrac{\tanh^{-1}{\eta}}{2\eta^3},\\
\gamma_{s\phi}&=\gamma_{\phi s}=\dfrac{\eta^6-2\eta^4+\eta^2+3}{36\eta^2\left(1-\eta^2\right)^2}-\dfrac{\tanh^{-1}{\eta}}{12\eta^3},\\
\gamma_{\phi\phi}&=\dfrac{10\eta^4-21\eta^2+12}{36\left(1-\eta^2\right)^2}
\end{align}
\end{subequations}
are functions of the large bending parameter
\begin{align}
\eta=\dfrac{\lambda_s^0}{2}h=\dfrac{\kappa_s^0}{2f_s^0f_\phi^0}(\varepsilon h)=\frac{\kappa_s^0}{2}\bigl(\varepsilon h^0\bigr)\label{eq:eta}
\end{align}
only. Moreover, from Eqs.~\eqref{eq:shellstr} and \eqref{eq:curvstr}, the shell strains in Eq.~\eqref{eq:edens} are
\begin{align}
\varepsilon E_s&=\dfrac{\tilde{f}_s-f_s^0}{f_s^0},&\varepsilon E_\phi&=\dfrac{\tilde{f}_\phi-f_{\smash\phi}^0}{f_\phi^0},\label{eq:EsEphi}
\end{align}
while the curvature strains are
\begin{subequations}\label{eq:LsLphi}
\begin{align}
L_s&=\dfrac{\tilde{\kappa}_s-\kappa_s^0}{f_s^0f_\phi^0}=K_s-\dfrac{2\eta}{h}E_s+O(\varepsilon),\label{eq:Ls}\\
L_\phi&=\dfrac{\tilde{\kappa}_\phi-\kappa_{\smash\phi}^0}{f_s^0f_\phi^0}=K_\phi+O(\varepsilon),
\end{align}
\end{subequations}
where we have defined
\begin{align}
&K_s=\dfrac{\tilde{f}_s\tilde{\kappa}_s-f_s^0\kappa_s^0}{\left(f_s^0\right)^2f_\phi^0},&&K_\phi=\dfrac{\tilde{f}_\phi\tilde{\kappa}_\phi-f_\phi^0\kappa_\phi^0}{f_s^0\bigl(f_\phi^0\bigr)^2}.\label{eq:KsKphi}
\end{align}
Shell theories are expressed more naturally in terms of the alternative curvature strains $K_s,K_\phi$. Indeed, $K_s,K_\phi$ vanish for pure stretching deformations, whereas $L_s,L_\phi$ do not: consider a shell, the undeformed (and intrinsic) configuration of which is a sphere of radius $R$, and which deforms into a sphere of radius $R'=fR$, for example because of a pressure difference between the inside and outside. For this deformation, $\smash{f_s^0=f_{\smash{\phi}}^0=1}$, $\smash{\kappa_s^0=\kappa_\phi^0=1/R}$, while $\tilde{f}_s=\tilde{f}_\phi=f$, $\tilde{\kappa}_s=\tilde{\kappa}_\phi=1/fR$, and so $\smash{L_s=L_\phi=(1-f)\big/f^3R\not=0}$ for $f\not=1$, but $K_s=K_\phi=0$. Reference~\cite{audoly} has also discussed this point, noting that $L_s,L_\phi$ and $K_s,K_\phi$ can be used interchangeably in classical shell theories. However, Eq.~\eqref{eq:Ls} shows that, in the large bending limit considered here, $L_s-K_s=O(1)$. Even at leading order, the stretching deformations associated with changes in curvature cannot therefore be neglected in this limit. In terms of the alternative curvature strains $K_s,K_\phi$, Eq.~\eqref{eq:edens} becomes
\begin{widetext}

\vspace{-2mm}
\begin{align}
\hat{e}&= \dfrac{C}{2}\varepsilon^3\Bigl\{h\bigl[\bar{\alpha}_{ss}E_s^2+(\bar{\alpha}_{s\phi}+\bar{\alpha}_{\phi s})E_sE_\phi+\alpha_{\phi\phi}E_\phi^2\bigr]+2h^2\left[\bar{\beta}_{ss}E_sK_s+\bar{\beta}_{s\phi}E_sK_\phi+\beta_{\phi s}E_\phi K_s+\beta_{\phi\phi}E_\phi K_\phi\right]\nonumber\\
&\hspace{20mm}+h^3\bigl[\gamma_{ss}K_s^2+(\gamma_{s\phi}+\gamma_{\phi s})K_sK_\phi+\gamma_{\phi\phi}K_\phi^2\bigr]\Bigr\}+O\bigl(\varepsilon^4\bigr),\label{eq:edensk}
\end{align}
\end{widetext}
where $\alpha_{\phi\phi},\beta_{\phi s},\beta_{\phi\phi},\gamma_{ss},\gamma_{s\phi}=\gamma_{\phi s},\gamma_{\phi\phi}$ are still given by Eqs.~\eqref{eq:coeffs}, while
\begin{subequations}\label{eq:coeffs2}
\begin{align}
\bar{\alpha}_{ss}&=\alpha_{ss}-4\eta\beta_{ss}+4\eta^2\gamma_{ss}=\dfrac{4}{\left(1-\eta^2\right)^2},\\
\bar{\alpha}_{s\phi}&=\bar{\alpha}_{\phi s}=\alpha_{s\phi}-2\eta\beta_{\phi s}=\dfrac{2}{\left(1-\eta^2\right)^2},\\
\bar{\beta}_{ss}&=\beta_{ss}-2\eta\gamma_{ss}=-\dfrac{1}{\eta\left(1-\eta^2\right)^2}+\dfrac{\tanh^{-1}{\eta}}{\eta^2},\\
\bar{\beta}_{s\phi}&=\beta_{s\phi}-2\eta\gamma_{s\phi}=-\dfrac{\eta\left(2-\eta^2\right)}{3\left(1-\eta^2\right)^2}.
\end{align}
\end{subequations}
\subsubsection{Stretching, coupling, and bending energies}
The terms that appear in the elastic energy~\eqref{eq:edensk} separate into stretching, coupling, and bending terms, viz.
\begin{align}
\hat{e}=\hat{e}_{\text{stretch}}+\hat{e}_{\text{couple}}+\hat{e}_{\text{bend}}+O\bigl(\varepsilon^4\bigr), 
\end{align}
with
\begin{subequations}
\begin{align}
\hat{e}_{\text{stretch}}&=\dfrac{Ch}{2}\varepsilon^3\bigl[\bar{\alpha}_{ss}E_s^2+(\bar{\alpha}_{s\phi}+\bar{\alpha}_{\phi s})E_sE_\phi+\alpha_{\phi\phi}E_\phi^2\bigr],\\
\hat{e}_{\text{couple}}&=Ch^2\varepsilon^3\left[\bar{\beta}_{ss}E_sK_s+\bar{\beta}_{s\phi}E_sK_\phi+\beta_{\phi s}E_\phi K_s\right.\nonumber\\
&\hspace{25mm}+\left.\beta_{\phi\phi}E_\phi K_\phi\right],\label{eq:ecouple}\\
\hat{e}_{\text{bend}}&=\dfrac{Ch^3}{2}\varepsilon^3\bigl[\gamma_{ss}K_s^2+(\gamma_{s\phi}+\gamma_{\phi s})K_sK_\phi+\gamma_{\phi\phi}K_\phi^2\bigr].
\end{align}
\end{subequations}
As $(\bar\alpha_{s\phi}+\bar\alpha_{\phi s})^2-4\bar\alpha_{ss}\bar\alpha_{\phi\phi}=-48\left(1-\eta^2\right)^{-2}<0$ for $|\eta|<1$, the stretching energy $\hat{e}_{\text{stretch}}$ is positive semidefinite. Numerically, we also find that $(\gamma_{s\phi}+\gamma_{\phi s})^2-4\gamma_{ss}\gamma_{\phi\phi}<0$ for $|\eta|<1$, and hence the bending energy $\hat{e}_{\text{bend}}$ is positive semidefinite, too. However, the coupling energy $\hat{e}_{\text{couple}}$ can clearly be of either sign, though $\hat{e}$ is of course positive semidefinite.

All of the coefficient functions defined in Eqs.~\eqref{eq:coeffs} and \eqref{eq:coeffs2} diverge as $\eta\rightarrow\pm 1$. More precisely, the coefficients diverge like $(1-|\eta|)^{-2}$, and so Eq.~\eqref{eq:edensk} loses asymptoticity when $1-|\eta|=O\left(\sqrt{\varepsilon}\right)$, and hence the shell theory is not formally valid in this limit. This is mirrored by a similar breakdown of asymptoticity at other places in the analysis: for example, Eqs.~\eqref{eq:a1b1} show that the expansion of the deformation gradient in Eq.~\eqref{eq:F2} also breaks down when $1-|\eta|=O\left(\sqrt{\varepsilon}\right)$. However, this divergence, absent from theories not valid for large bending deformations, is not surprising in the first place. Indeed, the limit $\eta\rightarrow\pm 1$ corresponds to constricted cells, i.e. wedge-shaped, triangular cells~\figrefi{fig1}{b} for which the intrinsic meridional radius of curvature is half the intrinsic cell sheet thickness: one of the surfaces of the shell has contracted to a point in the intrinsic configuration, so is geometrically singular. As the intrinsic configuration approaches this constricted limit somewhere, deviations from the intrinsic configuration become more and more expensive energetically there compared to other positions in the shell, unless the divergence of $\hat{e}$ as $\eta\rightarrow\pm 1$ is suppressed. This happens if $\hat{e}_{\text{couple}}\approx-(\hat{e}_{\text{stretch}}+\hat{e}_{\text{bend}})<0$ or the divergence of each of $\hat{e}_{\text{stretch}},\hat{e}_{\text{couple}},\hat{e}_{\text{bend}}$ is suppressed, which is possible for special values of $E_s,E_\phi,K_s,K_\phi$, as discussed in more detail below.

Plots of the coefficient functions in Eqs.~\eqref{eq:coeffs} and \eqref{eq:coeffs2}, arbitrarily scaled with $\bar{\alpha}_{ss}$ to absorb their divergence as $\eta\rightarrow\pm 1$, are shown in~\textwholefigref{fig3}. These illustrate how the relative importance of different deformation modes depends on the amount of intrinsic bending. In other words, large bending deformations break the material isotropy, so that different directions of stretching have different effective stretching moduli; similarly, different effective bending moduli are associated with different directions of bending. This anisotropy is therefore geometric; as discussed below, this effect is absent from the classical theories not valid for large bending deformations.

\begin{figure}
\includegraphics{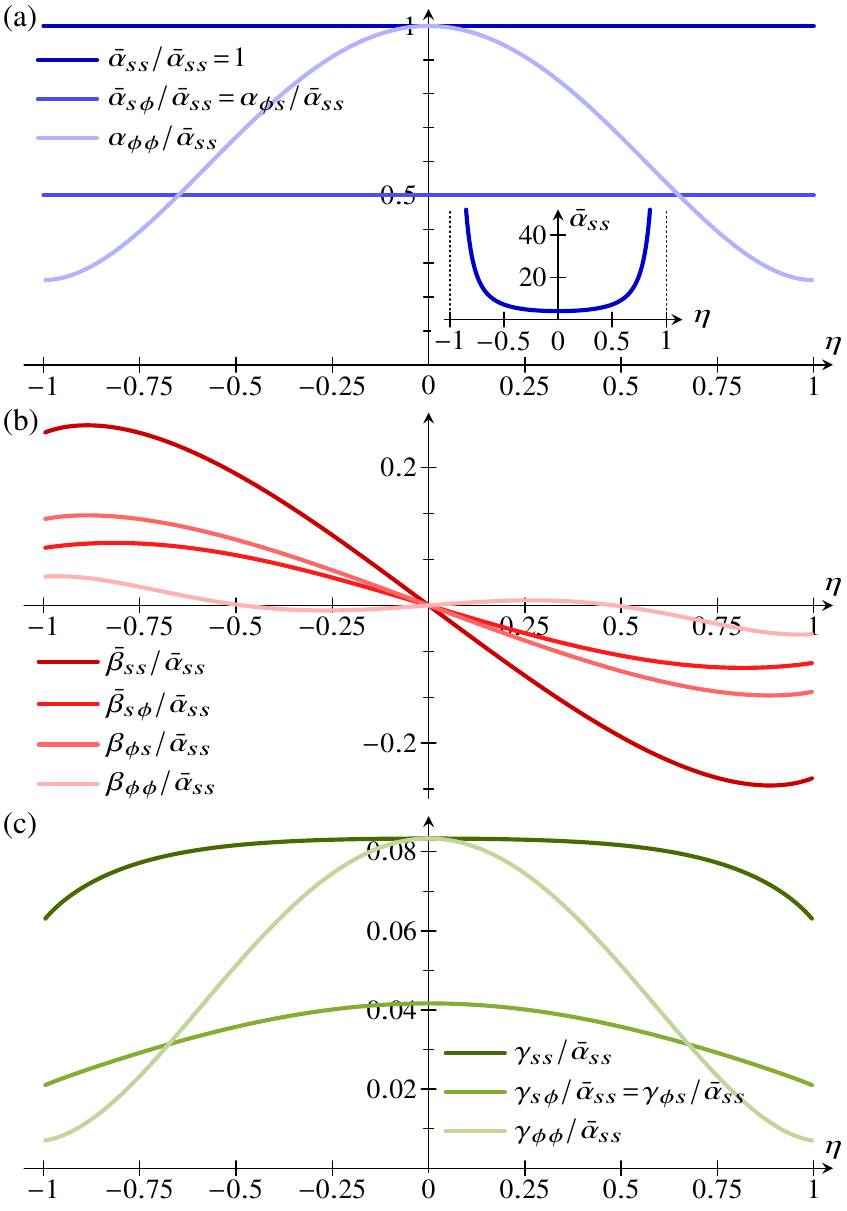}
\caption{Effective two-dimensional energy density. Plots of the coefficients in Eq.~\eqref{eq:edensk}, defined in Eqs.~\eqref{eq:coeffs} and \eqref{eq:coeffs2}, against $\eta$. All coefficients are arbitrarily scaled with $\bar{\alpha}_{ss}$ to absorb their divergence in the constriction limit $\eta\rightarrow\pm 1$. (a)~Plot of the stretching coefficients $\bar{\alpha}_{ss},\bar{\alpha}_{s\phi},\bar{\alpha}_{\phi s},\alpha_{\phi\phi}$. Inset: unscaled plot of $\bar{\alpha}_{ss}$ against $\eta$, diverging as $\eta\rightarrow\pm 1$. (b)~Plot of the mixed coefficients $\bar{\beta}_{ss},\bar{\beta}´_{s\phi},\beta_{\phi s},\beta_{\phi\phi}$. (c)~Plot of the bending coefficients $\gamma_{ss},\gamma_{s\phi},\gamma_{\phi s},\gamma_{\phi\phi}$.}
\label{fig3}
\end{figure}

This completes the derivation of the elastic energy~\eqref{eq:E2} of a thin shell undergoing large axisymmetric bending deformations. In Appendix~\ref{appB}, we derive the   associated governing equations, using the expression~\eqref{eq:edensk} of the energy density in terms of the alternative curvature strains defined in Eqs.~\eqref{eq:KsKphi}. 

\subsection{Limit of small bending deformations}\label{sec:modelc}
We conclude our calculations by taking the limit $\eta\rightarrow 0$, in which the bending deformations become small compared to the thickness of the shell. The energy density in Eq.~\eqref{eq:edensk} then limits to the form familiar from classical shell theories \smash{\cite{audoly}},
\begin{align}
&\hat{e}_0=2C\varepsilon^3\left[h\left(E_s^2+E_sE_\phi+E_\phi^2\right)+\dfrac{h^3}{12}\left(K_s^2+K_sK_\phi+K_\phi^2\right)\right],\label{eq:edens2}
\end{align}
up to corrections of order $O\bigl(\varepsilon^4\bigr)$. This is the energy density of a thin Hookean shell~\cite{ventsel,libai,audoly} with Poisson's ratio $\nu=1/2$, implying incompressibility, and elastic modulus $E=3C$. In particular, our analysis also provides a formal derivation of the morphoelastic version of this classical shell theory. Again, the energy density separates into stretching and bending terms,
\begin{align}
\hat{e}_0=\hat{e}_{0,\text{stretch}}+\hat{e}_{0,\text{bend}},
\end{align}
with
\begin{subequations}
\begin{align}
\hat{e}_{0,\text{stretch}}&=\dfrac{1}{2}(4Ch)\varepsilon^3\bigl[E_s^2+E_sE_\phi+E_\phi^2\bigr],\\
\hat{e}_{0,\text{stretch}}&=\dfrac{1}{2}\left(\dfrac{Ch^3}{3}\right)\varepsilon^3\bigl[K_s^2+K_sK_\phi+K_\phi^2\bigr],
\end{align}
\end{subequations}
but there is no term that couples the strains and curvature strains. In this theory, the same stretching modulus $E(\varepsilon h)/\bigl(1-\nu^2\bigr)=4C(\varepsilon h)$ and the same bending modulus \smash{$E(\varepsilon h)^3/\bigl[12\bigl(1-\nu^2\bigr)\bigr]=C(\varepsilon h)^3/3$} are associated with all directions of stretching or bending; to pick up on a point made earlier, it is this isotropy resulting from the constitutively assumed isotropy of the material that is broken by the geometry of large bending deformations.

Of course, Eq.~\eqref{eq:edens2} could be derived directly by imposing different scalings, of small intrinsic bending, replacing those for large bending deformations in Eq.~\eqref{eq:curvscalings}; these scalings would considerably simplify the solutions of Eqs.~\eqref{eq:G0}, \eqref{eq:det0}, and \eqref{eq:det1}. Indeed, the structure of these calculations would be broadly similar to the earlier asymptotic derivation of the classical shell theories in Ref.~\cite{steigmann13}. We emphasise that, in either derivation, the terms at order $O\bigl(\varepsilon^2\bigr)$ in the expansion~\eqref{eq:F2} of the deformation gradient need not be computed explicitly.

\subsubsection{Stretching and bending energies for small and large bending}
We compare the stretching and bending energies in the small and large bending limits by observing that
\begin{subequations}
\begin{align}
\hat{e}_{\text{stretch}}&=\hat{e}_{0,\text{stretch}}+\dfrac{\eta^2\left(2-\eta^2\right)}{\left(1-\eta^2\right)^2}\left(2E_s+E_\phi\right)^2,\label{eq:estretch}\\
\hat{e}_{\text{bend}}&=\hat{e}_{0,\text{bend}}+\dfrac{\eta^2\left(3-2\eta^2\right)}{36\left(1-\eta^2\right)^2}\left(3K_s+K_\phi\right)\left(k(\eta)K_s+K_\phi\right), \label{eq:ebend}
\end{align}
\end{subequations}
where we have used Eqs.~\eqref{eq:coeffs} and \eqref{eq:coeffs2} and defined
\begin{align}
k(\eta)=-\dfrac{\eta\left(4\eta^6-11\eta^4+10\eta^2-6\right)+6\left(1-\eta^2\right)^2\tanh^{-1}{\eta}}{\eta^5\left(3-2\eta^2\right)}. 
\end{align}
This shows that the classical theory underestimates the stretching energy of large bending deformations: ${\hat{e}_{\text{stretch}}\geq \hat{e}_{0,\text{stretch}}}$ from Eq.~\eqref{eq:estretch}. Moreover, $\hat{e}_{\text{stretch}}$ diverges as $|\eta|\rightarrow 1$ unless the deformations are such that $E_\phi=-2E_s$.

The classical theory may however overrestimate the bending energy of large bending deformations. Indeed, numerically, we find $13/5=k(0)<k(\eta)<k(\pm1)=3$ for $|\eta|<1$, and hence, from Eq.~\eqref{eq:ebend}, $\hat{e}_{\text{bend}}<\hat{e}_{0,\text{bend}}$ if and only if $K_sK_\phi<0$ and $k(\eta)|K_s|<|K_\phi|<3|K_s|$. Also from Eq.~\eqref{eq:ebend}, $\hat{e}_{\text{bend}}$ diverges as $|\eta|\rightarrow 1$ unless $K_\phi=-3K_s$.

In particular, $\hat{e}_{\text{stretch}}$ and $\hat{e}_{\text{bend}}$ are both bounded as $|\eta|\rightarrow 1$ if and only if $E_\phi=-2E_s$ and $K_\phi=-3K_s$. In this case, Eq.~\eqref{eq:ecouple} shows that $\hat{e}_{\text{couple}}$ is also bounded as $|\eta|\rightarrow 1$. The conditions $E_\phi=-2E_s$, $K_\phi=-3K_s$ thus define the special deformations that allow the stretching, bending, and coupling energies to remain bounded as $|\eta|\rightarrow 1$ that we mentioned earlier.

\vspace{-1mm}
\subsubsection{Other elastic shell theories}
The energy density in Eq.~\eqref{eq:edens2} has the same structure as the elastic energy densities used in the models referenced in the introduction, but the morphoelastic definitions of the shell and curvature strains in Eqs.~\eqref{eq:EsEphi} and \eqref{eq:KsKphi} differ from those in these previous models: in models not based on morphoelasticity and its multiplicative decomposition of the deformation gradient~\cite{hohn15,haas15,heer17,yevick19,miller18}, the shell and curvature strains are simply differences of stretches or curvatures, missing the scaling factors of $\smash{f_s^0,f_\phi^0}$ that appear in Eqs.~\eqref{eq:EsEphi} and \eqref{eq:KsKphi}. We also note that the expressions for the curvature strains in Eqs.~\eqref{eq:KsKphi} differ, by a factor of $g^0=\smash{f_s^0f_\phi^0}$, from those in Refs.~\cite{haas18a,haas18b}, which, as discussed in the Introduction, used a geometric approach to derive a morphoelastic shell theory. Earlier, we noted that this factor corresponds to the stretching of the intrinsic midsurface. Moreover, the $O(1)$ solution implies that $\tilde{\zeta}=\zeta^0+O(\varepsilon)$. Hence, by the definition of the midsurfaces, $\tilde{h}^\pm=\pm h^0/2+O(\varepsilon)$, and so the deformed cell sheet has thickness $\tilde{h}=\tilde{h}^++\tilde{h}^-=h^0+O(\varepsilon)$. Eq.~\eqref{eq:h} therefore yields $h/\tilde{h}=h/h^0+O(\varepsilon)=g^0+O(\varepsilon)$. The fact that the curvature strains in Eq.~\eqref{eq:KsKphi} decrease as $g^0$ increases therefore expresses the fact that the shell becomes easier to bend as it thins as a result of this stretching of the midsurface, with $\hat{e}_{\text{bend}},\hat{e}_{0,\text{bend}}\propto g_{\smash0}^{-2}$. This geometric role of the factor $g^0$ has been noticed previously in the context of uniform growth of an elastic shell~\cite{pezzulla17}. 

The geometric approach in Refs.~\cite{haas18a,haas18b} also leads to additional terms in the energy density. The present analysis proves that these terms are not leading-order terms in the thin shell limit. However, there is no reason to expect this geometric approach to yield all terms at next order in the asymptotics. A complete expansion could in principle be obtained by continuing the asymptotic analysis presented here. Taking the analysis to higher orders in this way would in particular answer the question: at what order does the Kirchhoff hypothesis break down, i.e. at what order do the normals to the deformed midsurface diverge from those to the undeformed midsurface? This would permit asymptotic justification of the so-called shear deformation theories~\cite{reddy} in which the normals to the undeformed midsurface need not remain normals in the deformed configuration, but we do not pursue this further here.

\section{Invagination in \emph{Volvox}}\label{sec:volvox}
\subsection{Biological background}
The green algal genus \emph{Volvox}~\cite{kirkbook} has become a model for the study of the evolution of multicellularity~\cite{kirkessay,herron16}, for biological fluid dynamics~\cite{goldstein15}, and for problems in developmental biology~\cite{kirkreview,matt16}. Adult \emph{Volvox} colonies~\figref{fig4}{a} are spheroidal, consisting of several thousand biflagellated somatic cells that enclose a small number of germ cells~\cite{kirkbook}. Each germ cell undergoes several rounds of cell division to form a spherical embryonic cell sheet~[Figs.~\figrefp{fig4}{b} and \figrefp{fig4}{e}], at which stage those cell poles whence will emanate the flagella point into the sphere~\cite{kirkbook}. To acquire motility, the embryo turns itself inside out in a process called inversion~\cite{hallmann06,desnitskiy18}.

\begin{figure}[b]
\includegraphics{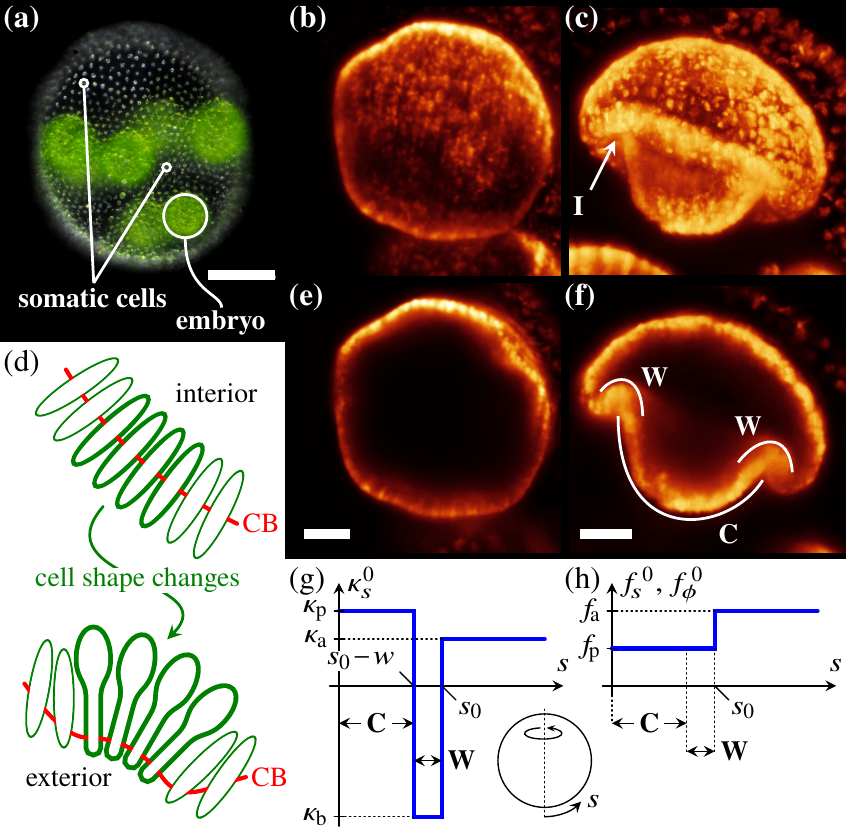}
\caption{Invagination in \emph{Volvox}. (a) \emph{Volvox} colony, with somatic cells and one embryo labelled. (b)~Light-sheet microscopy image of a spherical \emph{Volvox} embryo before inversion. (c) Corresponding image at an early stage of inversion, when a circular invagination (I) has formed. (d)~Splaying of cells and bending of the cell sheet result from the formation of wedge-shaped cells and the rearrangement of the cytoplasmic bridges (CBs); red lines indicate position of CBs. (e)~Midsagittal cross-section of a \emph{Volvox} embryo before inversion. (f)~Corresponding cross-section during invagination, with the regions where wedge-shaped cells (W) and contracted spindle-shaped cells (C) have formed labelled. (g)~Plot of the intrinsic curvature $\kappa_s^0$ against arclength $s$, defined in the inset. The plot defines the model parameters $\kappa_{\mathrm{p}},\kappa_{\mathrm{b}},\kappa_{\mathrm{a}}$, $s_0$, and $w$. Regions of cell shape changes (W,~C) as in (f) are also indicated. (h)~Corresponding plot of the intrinsic stretches \smash{$f_s^0,f_{\smash{\phi}}^0$}, defining additional model parameters $f_{\mathrm{p}},f_{\mathrm{a}}$. Panels \mbox{(a)--(f)} include microscopy images by Stephanie H\"ohn and have been redrawn from Ref.~\cite{haas15}. Scale bars: (a) $50\,\text{\textmu m}$; \mbox{(e), (f) $20\,\text{\textmu m}$}.}\label{fig4} 
\end{figure}

In some species of \emph{Volvox}~\cite{hohn11,hallmann06}, inversion starts with the formation of a circular invagination~[Figs.~\figrefp{fig4}{c} and \figrefp{fig4}{f}], reminiscent of the cell sheet folds associated with processes such as gastrulation or neurulation in higher organisms. At the cell level, this invagination results from two types of cell shape changes~\cite{hohn11,hohn15}: (1) cells near the equator become wedge-shaped~\figref{fig4}{d}, while the cytoplasmic bridges (cell-cell connections resulting from incomplete division) rearrange to connect the cells at their thin wedge ends, and (2) cells in the posterior hemisphere narrow in the meridional direction. These cell shape changes arise simultaneously, with (1) splaying the cells and thereby bending the cell sheet~\figref{fig4}{d} and (2) contracting the posterior hemisphere to facilitate the subsequent inversion of the posterior hemisphere inside the as yet uninverted anterior hemisphere. 

At later stages of inversion, other cell shape changes arise in different parts of the cell sheet~\cite{haas18a,hohn11} to ease the peeling of the anterior hemisphere over the inverted posterior and thus complete inversion. In particular, the anterior hemisphere of the cell sheet thins as cells there stretch anisotropically~\cite{haas18a,hohn11}.

\subsection{Results}
Following our earlier work~\cite{hohn15,haas15,haas18a,haas18b}, we model \emph{Volvox} inversion by considering the deformations of an incompressible elastic spherical shell under quasi-static axisymmetric variations of its intrinsic stretches and curvatures representing the cell shape changes driving inversion. The slow speed of inversion---it takes about an hour for a \emph{Volvox} embryo to turn itself inside out~\cite{hallmann06,hohn11}---justifies this quasi-static approximation. In more detail, Figs.~\figrefp{fig4}{g} and \figrefp{fig4}{h} show functional forms of the intrinsic stretches and curvatures encoding the cell shape changes driving invagination and define the model parameters $\kappa_{\mathrm{p}},\kappa_{\mathrm{b}},\kappa_{\mathrm{a}}$, $f_{\mathrm{p}},f_{\mathrm{a}}$, $s_0$, and $w$ that encode the intrinsic curvatures and intrinsic stretches of different regions of the cell sheet and the extent of these regions. In numerical calculations, we regularise the step discontinuities in the definitions of the intrinsic stretches and curvatures in Figs.~\figrefp{fig4}{g} and \figrefp{fig4}{h}, we non-dimensionalise all lengths with the pre-inversion radius $R$ of the embryo, and we take $\varepsilon h=0.15$, appropriate for \emph{Volvox globator}~\cite{hohn15,haas18a}.

We solve the governing equations derived in Appendix~\ref{appB} numerically using the boundary value problem solver \texttt{bvp4c} of \textsc{Matlab} (The MathWorks, Inc.) and the continuation software \textsc{auto}~\cite{auto}.

\begin{figure*}
\includegraphics{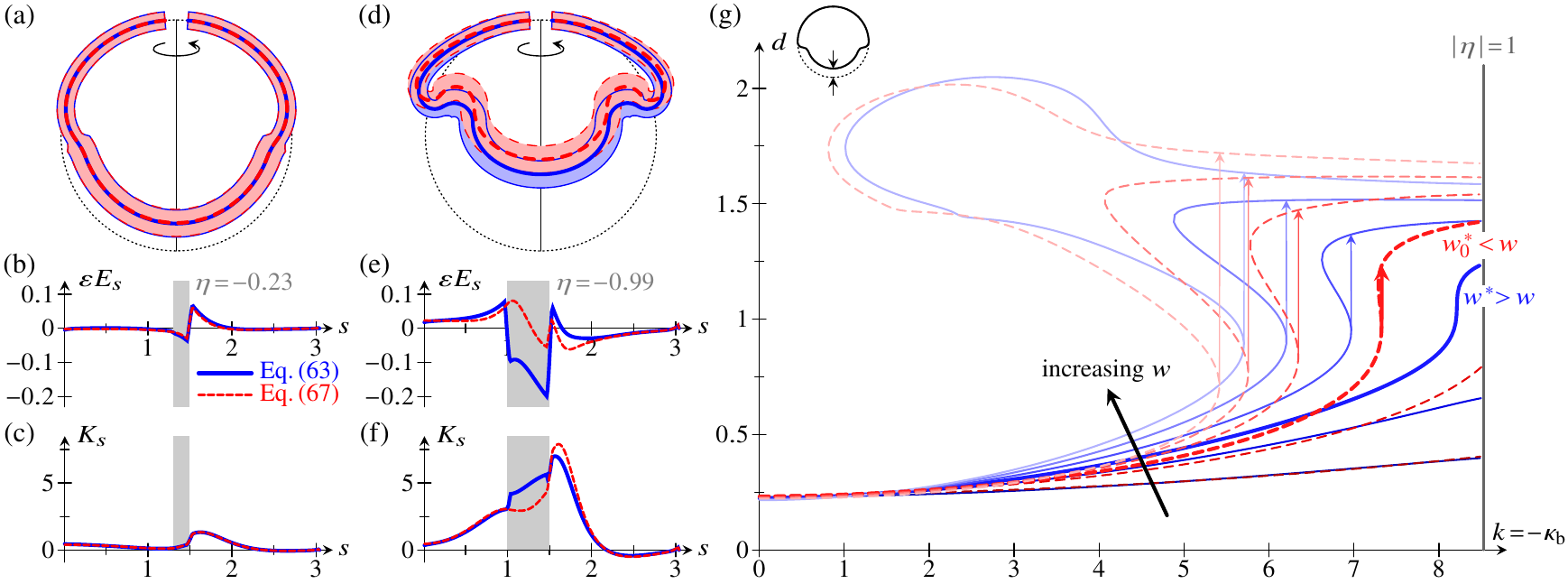}
\caption{Comparison of the elastic model for large bending deformations and the classical model. Solid lines: large bending model with energy density given by Eq.~\eqref{eq:edensk}; dashed lines: classical model with energy density given by Eq.~\eqref{eq:edens2}. (a)~Early invagination stage: the two models yield very similar shapes. Thick lines: midline of the cell sheet; thin lines and shaded area: transverse extent of the shell, illustrating the thickness variations resulting from the cell shape changes. Dotted line: midline of the undeformed spherical shell. Parameter values: $\kappa_{\mathrm{p}}=\kappa_{\mathrm{a}}=1$, $\kappa_{\mathrm{b}}=-2$, $f_{\mathrm{p}}=0.8$, $f_{\mathrm{a}}=1$, $s_0=1.5$, $w=0.2$. (b)~Corresponding plot of the meridional shell strain $E_s$. The grey shaded area marks the bend region $s_0-w<s<s_0$. (c)~Corresponding plot of the meridional curvature strain $K_s$. (d)~Later invagination stage: as the cells in the bend region approach the constriction limit, the shapes resulting from the two models start differ increasingly. Parameter values are as in (a), except $\kappa_{\mathrm{b}}=-8.5$, $w=0.5$. (e)~Corresponding plot of the meridional shell strain $E_s$. (f)~Corresponding plot of the meridional curvature strain $K_s$. (g)~Bifurcation diagram, for different values of $w$, in $(k,d)$ space, where $k=-\kappa_{\mathrm{b}}$ and $d$ is the posterior displacement defined in the axis inset. Different lines correspond to parameter values $w=0.3,0.5,0.6,0.7,0.8,0.9$. Other parameter values are as in (a). The vertical line $|\eta|=1$ corresponding to the constriction limit is also shown. For $w>w^\ast$ (in the large bending model) or $w>w^\ast_{\smash0}$ (in the classical model), discontinuous jumps in $d$, denoted by vertical arrows, arise as $k$ is increased. The thick lines correspond to $w=0.6$ and show that $w^\ast>w^\ast_{\smash0}$.}
\label{fig5}
\end{figure*}

During the invagination stage, the radius of curvature in the bend region of wedge-shaped cells~\figref{fig4}{f} becomes comparable to the thickness of the cell sheet: this is the scaling limit of large bending deformations studied in Section~\ref{sec:model}. We therefore compare the resulting elastic model, with energy density~\eqref{eq:edens}, to the classical theory, in which the energy density is given by Eq.~\eqref{eq:edens2}. For weakly invaginated stages of \emph{Volvox} inversion (corresponding to small values of $\eta$ in the large bending theory), the two models yield, unsurprisingly, very similar shapes~\figref{fig5}{a}, mirrored by very similar profiles of meridional shell strain~\figref{fig5}{b} and meridional curvature strain~\figref{fig5}{c}. The contraction of the posterior hemisphere leads to thickening of the cell sheet there~\figref{fig5}{a}. However, the more the intrinsic configuration of the cell sheet approaches the limit of cell constriction, the more the shapes resulting from the two models differ~\figref{fig5}{d}. Correspondingly, the meridional shell strain~\figref{fig5}{e} and meridional curvature strain~\figref{fig5}{f} in the two models differ increasingly. It may seem counterintuitive that these strains are larger in the bend region of nearly constricted cells for the large-bending model than for the classical model~[Figs.~\figrefp{fig5}{e} and \figrefp{fig5}{f}], since the stretching and bending cost of these larger strains is much higher in the large-bending model than in the classical model. Indeed, on computing the stretching and bending energies (not shown) of the shapes in \textfigref{fig5}{d}, we find them to be much larger in the large-bending model than in the classical model. However, these large energies are balanced by a correspondingly large and negative coupling energy: for example, $E_s<0$ and $K_s>0$ in the bend region~[Figs.~\figrefp{fig5}{e} and \figrefp{fig5}{f}], while $\eta<0\Longrightarrow\bar{\beta}_{ss}>0$~\figref{fig3}{b}, and so $\bar{\beta}_{ss}E_sK_s<0$. This negative coupling energy therefore explains the large strains in the bend region that arise in the large-bending model.

The largest curvature strains~\figref{fig5}{f} arise, however, in the anterior fold, i.e. in the second bend region that arises as a passive mechanical consequence of the wedge-shaped cells in the bend region just next to it~\cite{hohn15,haas18a}. As a result of the contraction of the posterior hemisphere, the cell sheet is thinner in the anterior~\figref{fig5}{d}, and hence is easier to bend there, as discussed earlier. In fact, around the invagination stage in \textfigref{fig5}{d}, cells in the anterior fold begin to stretch in the meridional direction~\cite{hohn11,haas18a}, leading to further thinning and increased bendability of the cell sheet there.

The examples in Figs.~\figrefp{fig5}{a} and \figrefp{fig5}{d} indicate that the results of the two models differ at a quantitative, if not at a qualitative level. We extend this observation by plotting, for both models, $k=-\kappa_{\mathrm{b}}$ against the displacement $d$ of the posterior pole~\figrefi{fig5}{g} for different values of the width $w$ of the bend region in \textfigref{fig5}{g}. Again, the solution curves show similar behaviour in the two models, but differ at a quantitative level. They confirm what one observes in \textfigref{fig5}{d}, that the cell sheet is more invaginated, at the same parameter values and for sufficiently large $k$, in the classical model than in the large-bending model. Nonetheless, the cell sheet invaginates completely even in the large-bending model as $w$ increases~\figref{fig5}{g}, i.e. as more cells become wedge-shaped and the bend region widens, as observed during \emph{Volvox} inversion~\cite{hohn11}. Moreover, one can argue that invagination is actually more stable in the large-bending model: there is a critical bend region width, $w_\ast$ in the large-bending model and $w^\ast_{\smash0}$ in the classical model, such that the solution curves in the $(k,d)$ diagram are single-valued for $w<w_\ast$ or $w<w^\ast_{\smash0}$, but become multivalued for $w>w_\ast$ or $w>w^\ast_{\smash0}$, respectively, leading to discontinuous jumps in $d$ as $k$ is varied. Where multiple solutions exist for a given value of $k$, the one with the lowest value of $d$ has the lowest energy (not shown). For the classical theory, we have discussed this bifurcation behaviour in Ref.~\cite{haas15}, and rationalised it by constructing an effective energy that estimates different elastic contributions. It is therefore not surprising that, here, we find qualitatively identical bifurcation behaviour in the two models, but that again, there are quantitative differences in the bifurcation behaviour. However, \textfigref{fig5}{g} shows that $w_\ast>w^\ast_{\smash0}$. In other words, continuous invagination is possible in a larger region of parameter space in the large bending theory than in the classical theory: in this sense, invagination is stabilised in the large-bending theory.

This discussion shows how the geometry of large bending deformations modifies the mechanical picture of invagination suggested by the classical theory. When we introduced the problem of large bending deformations, we argued that classical shell theories cannot describe these deformations because of the assumption of large radii of curvature inherent in them. At this stage, we must therefore ask: can the large-bending theory derived here provide a complete description of the mechanics of invagination? This is first a question of self-consistency: is the intrinsic configuration not ``too incompatible''? In other words, are the deformations resulting from the imposed intrinsic stretches and curvatures consistent with the scalings~\eqref{eq:shellstr} and \eqref{eq:curvstr} assumed in the derivation of the shell theory? Even for the late invagination stage in \textfigref{fig5}{d}, the meridional shell strain remains small~\figref{fig5}{e}, although the meridional curvature strain reaches values of order $O(1/\varepsilon)$~\figref{fig5}{f}. Of course, the invagination stage in~\textfigref{fig5}{d} does not satisfy the restriction $1-|\eta|\gg\sqrt{\varepsilon}$ of our shell theory discussed earlier. This kind of condition is particularly restrictive for biological tissues in which $\varepsilon$ is not ``that small''~\wholefigref{fig1}. While results remain qualitatively unchanged for somewhat smaller values of $|\eta|$ within that range of validity, this hints that understanding the elasticity of the constriction limit $|\eta|\rightarrow 1$ remains a key open problem for future work.

\section{Conclusion}\label{sec:concl}
In this paper, we have derived a morphoelastic shell theory valid for the large bending deformations that are commonly observed in developmental biology~\wholefigref{fig1}, and have shown how this new scaling limit of large bending deformations induces a purely geometric effective material anisotropy absent from classical shell theories. Taking the invagination of the green alga \emph{Volvox} as an example, we have compared this large-bending theory to a simpler, classical theory not formally valid for large bending deformations. Since the classical theory does not account for the geometric material anisotropy or the singularity of cell constriction, it differs, for strongly invaginated shapes as in Figs.~\figrefp{fig1}{b}, \figrefp{fig4}{c}, and \figrefp{fig4}{f}, from the theory for large bending deformation at a quantitative, if not at a qualitative level. In particular, we have argued that these geometric effects stabilise \emph{Volvox} invagination.

This and the growing interest in quantitative rather than merely qualitative analyses of morphogenesis~\cite{cooper08,oates09} emphasise the importance of this new scaling limit of large bending deformations for studies of the mechanics of developmental biology. The theory we have derived here is not however the most general theory of these large bending deformations. Indeed, when writing down the expression for the intrinsic deformation gradient in Eq.~\eqref{eq:F0}, we assumed that there is no intrinsic displacement parallel to the midsurface, $\varsigma^0=0$. The nonlinear differential equations extending Eqs.~\eqref{eq:G0} and \eqref{eq:det0} that arise in the expansions of the boundary and incompressibility conditions for $\varsigma^0\not=0$ still admit a trivial solution $p_{(0)}=1$, $Z_{(0)}\equiv Z^0$, $S_{(0)}\equiv S^0$, where $S^0=\smash{f_s^0f_{\smash\phi}^0}\varsigma^0$. We were however unable to extend our calculations in Section~\ref{sec:model} to prove that this solution is unique; a similar issues arises when extending the calculations of this paper to more general constitutive relations, as discussed below and in Appendix~\ref{appC}. It therefore remains unclear what form the extension of the Kirchhoff ``hypothesis''~\cite{audoly} to this case takes.

In this paper, we assumed the simplest, incompressible neo-Hookean constitutive relations when deriving our shell theory for large bending deformations. The restriction to incompressible elastic materials is justified by the biological context of our analysis, in which the models derived here describe sheets of fluid-filled cells that are therefore indeed incompressible to a first approximation. However, the bulk elastic response of biological materials such as brain tissue is not linear~\cite{mihai15,mihai17,budday17}. The restriction to linear neo-Hookean relations may therefore appear to be a limitation of the analysis, but that turns out not to be the case: in the thin shell limit, general hyperelastic constitutive relations reduce to neo-Hookean relations. This result has been established previously for thin plates~\cite{erbay97,dervaux08}, and, in Appendix~\ref{appC}, we (partially) extend it to the large bending deformations of thin shells considered here. In the context of shell theories, the problem of specifying the nonlinear constitutive relations of biological tissues does not therefore arise. However, we have recently shown that the continuum limit of a class of discrete models of cell sheets involves not only nonlinear elastic, but also nonlocal, nonelastic terms~\cite{haas19}. Moreover, adding the geometric singularity of apical constriction (corresponding to triangular cells in the underlying discrete model) as a constraint to the variational problem that arises in this continuum limit remains an important open problem~\cite{haas19}. Solving this may provide a regularisation of the singularity that breaks asymptoticity as $|\eta|\rightarrow1$ in the theory derived here, and hence a yet more complete mechanical picture of the bend region of wedge-shaped cells in \emph{Volvox} invagination~\figref{fig4}{d}. Meanwhile, all of this suggests that the journey towards understanding the continuum mechanics of biological materials, on which we have taken another step with the present analysis of large bending deformations of thin elastic shells, will continue to abound with new problems in nonlinear mechanics. 

\begin{acknowledgments}
We thank two anonymous referees for helpful reports and, in particular, incisive questions bearing on the definition of the intrinsic configuration. We also thank S. S. M. H. H\"ohn for discussions about \emph{Volvox} inversion, M. Gomez for comments on the manuscript, and A. Goriely and C. P. Turner for a discussion of tensor algebraic matters. We gratefully acknowledge support from the Engineering and Physical Sciences Research Council (Established Career Fellowship EP/M017982/1 to R.E.G.), the Wellcome Trust (Investigator Award 207510/Z/17/Z to R.E.G.), and Magdalene College, Cambridge (Nevile Research Fellowship to P.A.H.).
\end{acknowledgments}

\appendix\section{\uppercase{Thin-Shell Theory for Large Bending Deformations of an Elastic Shell}}\label{appA}
In this Appendix, we extend the calculations for axisymmetric deformations of an elastic shell in Section~\ref{sec:model} to general deformations.
\subsection{Deformations of an elastic shell}
As in Section~\ref{sec:model}, we begin by deriving expressions for the deformation gradient tensors of an elastic shell of thickness $\varepsilon h$, where $\varepsilon$ is, again, a small asymptotic parameter that expresses the thinness of the shell.
\subsubsection{Undeformed configuration of the shell}
We parameterise the undeformed midsurface $\mathcal{S}$ of the shell in terms of generalised, not necessarily orthogonal coordinates; we shall use Greek letters to denote these coordinates. Thus, if $\vec{\rho}$ is the position of a point on $\mathcal{S}$, the tangent vectors there are $\vec{e_\alpha}=\partial\vec{\rho}/\partial\alpha$. The metric $\mat{g}$ of the midsurface thus has components $g_{\alpha\beta}=\vec{e_\alpha}\cdot\vec{e_\beta}$, and we set $g=\det{\mat{g}}$.

Next, we define a basis $\mathcal{B}$ for the shell by adjoining the unit normal vector $\vec{n}$ to this tangent basis. This obeys the Weingarten equation~\cite{* [] [{ Chap.~4, pp.~78--99 and Chap.~6, pp.~125--136.}] kreyszig}
\begin{align}
&\vec{n}_{,\alpha}=-{\varkappa_\alpha}^\beta\vec{e_\beta},\label{eq:W}
\end{align}
in which commata denote partial differentiation and the (symmetric) curvature tensor is $\varkappa_{\alpha\beta}=-\vec{e_\alpha}\cdot\vec{n}_{,\beta}$.

The position of a point in the undeformed configuration $\mathcal{V}$ of the shell is $\vec{r}=\vec{\rho}+\varepsilon\zeta\vec{n}$, where $\zeta$ denotes the transverse coordinate, as defined for axisymmetric deformations in \textfigref{fig2}{c}. Hence
\begin{align}
&\vec{r}_{,\alpha}=\left({\delta_\alpha}^\beta-\varepsilon\zeta {\varkappa_\alpha}^\beta\right)\vec{e_\beta},&& \vec{r}_{,\zeta}=\varepsilon\vec{n},\label{eq:dr}
\end{align}
wherein we have used the Weingarten equation~\eqref{eq:W}, and where $\delta$ is the Kronecker delta. The metric $\mat{G}$ of the undeformed configuration therefore has components
\begin{subequations}\label{eq:Gmetric}
\begin{align}
&G_{\zeta\zeta}=\varepsilon^2,&&G_{\alpha\zeta}=G_{\zeta\alpha}=0,\label{eq:Gmetrica}
\end{align}
and
\begin{align}
G_{\alpha\beta}=g_{\alpha\gamma}\Bigl({\delta^\gamma}_\delta-\varepsilon\zeta{\varkappa^\gamma}_\delta\Bigr)\left({\delta^\delta}_\beta-\varepsilon\zeta{\varkappa^\delta}_\beta\right),\label{eq:Gmetricb}
\end{align}
\end{subequations}
where we have used the symmetry of the curvature tensor. In particular, its inverse has components
\begin{align}
&G^{\zeta\zeta}=\varepsilon^{-2},&&G^{\alpha\zeta}=G^{\zeta\alpha}=0,&&G^{\alpha\beta}.\label{eq:Gmetricup} 
\end{align}

The position vectors of the surfaces $\zeta=\pm h^\pm$ of the undeformed shell are $\vec{r^\pm}=\vec{\rho}\pm\varepsilon h^\pm\vec{n}$, and hence the tangent vectors to these surfaces are
\begin{align}
\vec{e^\pm_\alpha}=\vec{r^\pm}_{,\alpha}=\left({\delta_\alpha}^\beta\mp \varepsilon h^\pm{\varkappa_\alpha}^\beta\right)\vec{e_\beta}\pm\varepsilon {h^\pm}_{,\alpha}\vec{n}.
\end{align}
We now order $\mathcal{B}=\{\vec{e_1},\vec{e_2},\vec{n}\}$ as a right-handed basis by exchanging $\vec{n}\leftrightarrow-\vec{n}$ if required. Expanding in components, this implies that $\vec{e_1}\times\vec{e_2}=\sqrt{g}\vec{n}$, and hence $\vec{e_1}\times\vec{n}=-\smash{\vec{e^2}/\sqrt{g}}$, $\vec{e_2}\times\vec{n}=\smash{\vec{e^1}/\sqrt{g}}$. Continuing to expand in components and after some calculations, we infer
\begin{align}
\vec{e_{\smash{1}}^\pm}\times\vec{e_{\smash{2}}^\pm}&=\left[1\mp 2\varepsilon h^\pm H+\varepsilon^2(h^\pm)^2K\right]\sqrt{g}\vec{n}\nonumber\\
&\quad \mp\varepsilon {h^\pm}_{,\alpha}\left[\left(1\pm \varepsilon h^\pm H\right){\delta^\alpha}_\beta\mp\varepsilon\smash{{\varkappa}_\beta}^\alpha h^\pm\right]\dfrac{\vec{e^\beta}}{\sqrt{g}},
\end{align}
wherein we have identified $H=\tfrac{1}{2}{\varkappa_\alpha}^\alpha$ and $K=\det{{\varkappa_\alpha}^\beta}$ as the mean and Gaussian curvatures~\cite{kreyszig} of $\mathcal{S}$. On normalising these vectors, we obtain the normals to the shell surfaces,
\begin{subequations}\label{eq:NA}
\begin{align}
\vec{n^\pm}=\dfrac{\vec{n}\mp{\nu^\pm}_\alpha\vec{e^\alpha}}{\sqrt{1+{\nu^\pm}_{\beta}\nu^{\pm\,\beta}}},
\end{align}
with
\begin{align}
{\nu^\pm}_{\alpha}=\dfrac{\varepsilon {h^\pm}_{,\beta}\left[\left(1\pm \varepsilon h^\pm H\right){\delta^\beta}_\alpha\mp\varepsilon\smash{{\varkappa}_\alpha}^\beta h^\pm\right]}{g\left[1\mp 2\varepsilon h^\pm H+\varepsilon^2\left(h^\pm\right)^2K\right]}.
\end{align}
\end{subequations}

\subsubsection{Deformed configuration of the shell}
We take the same generalised coordinates to parameterise the deformed midsurface $\tilde{\mathcal{S}}$ of the shell. The tangent vectors at a point $\vec{\tilde{\rho}}$ on $\tilde{\mathcal{S}}$ are thus $\vec{\tilde{e}_\alpha}=\partial\vec{\tilde{\rho}}/\partial\alpha$. The metric $\tilde{\mat{g}}$ of the midsurface has components \smash{$\tilde{g}_{\alpha\beta}=\vec{\tilde{e}_\alpha}\cdot\vec{\tilde{e}_\beta}$}, and we let ${\tilde{g}=\det{\tilde{\mat{g}}}}$. We extend the tangent basis of \smash{$\tilde{\mathcal{S}}$} to a basis $\tilde{\mathcal{B}}$ for the deformed shell by adding the unit normal $\vec{\tilde{n}}$, and introduce the (symmetric) curvature tensor $\tilde{\kappa}_{\alpha\beta}=-\vec{\tilde{e}_\alpha}\cdot\vec{\tilde{n}}_{,\beta}$. The Weingarten and Gau\ss{} equations~\cite{kreyszig}
\begin{align}
&\vec{\tilde{n}}_{,\alpha}=-{\tilde{\kappa}_\alpha}^\beta\vec{e_\beta},&&\vec{\tilde{e}}_{\vec{\alpha},\beta}=\tilde{\kappa}_{\alpha\beta}\vec{\tilde{n}}+{\tilde{\Gamma}_{\alpha\beta}}^\gamma\vec{\tilde{e}_\gamma}\label{eq:GW}
\end{align}
express the derivatives of the normal and tangent vectors in terms of the curvature tensor and Christoffel symbols associated with the deformed midsurface metric~\cite{kreyszig}. The position of a point in the deformed configuration $\smash{\tilde{\mathcal{V}}}$ of the shell is
\begin{align}
\vec{\tilde{r}}=\vec{\tilde{\rho}}+\varepsilon\left(\tilde{\zeta}\vec{\tilde{n}}+\tilde{\varsigma}^\alpha\vec{\tilde{e}_\alpha}\right),
\end{align}
where $\tilde{\zeta}$ and $\tilde{\varsigma}^\alpha$ are the transverse and parallel displacements of this point relative to the midsurface, defined for axisymmetric deformations in~\textfigref{fig2}{e}. In particular, the displacement parallel to the midsurface is now no longer a scalar. Using the Weingarten and Gau\ss{} equations~\eqref{eq:GW}, we find
\begin{subequations}\label{eq:drtilde}
\begin{align}
\vec{\tilde{r}}_{,\alpha}&=\left[{\delta_\alpha}^\beta+\varepsilon\left({\tilde{\varsigma}^\beta}_{;\alpha}-\tilde{\zeta}{\smash{\tilde{\kappa}_\alpha}}^{\beta}\right)\right]\vec{\tilde{e}_\beta}+\varepsilon\left(\tilde{\zeta}_{,\alpha}+\tilde{\varsigma}^\beta\tilde{\kappa}_{\alpha\beta}\right)\vec{\tilde{n}},\label{eq:rtildea}\\
\vec{\tilde{r}}_{,\zeta}&=\varepsilon\left(\tilde{\zeta}_{,\zeta}\vec{\tilde{n}}+{\tilde{\varsigma}^\alpha}_{,\zeta}\vec{\tilde{e}_\alpha}\right),
\end{align}
\end{subequations}
in which ${\tilde{\varsigma}^\beta}_{;\alpha}={\tilde{\varsigma}^\beta}_{,\alpha}+\smash{\tilde{\Gamma}_{\alpha\gamma}}^\beta\tilde{\varsigma}^\gamma$ is a covariant derivative. It follows that the metric $\smash{\tilde{\mat{G}}}$ of $\smash{\tilde{\mathcal{V}}}$ has components
\begin{subequations}\label{eq:Gdef}
\begin{align}
\tilde{G}_{\zeta\zeta}&=\varepsilon^2\left[\left(\tilde{\zeta}_{,\zeta}\right)^2+{\tilde{\varsigma}^\alpha}_{,\zeta}\tilde{\varsigma}_{\alpha,\zeta}\right],\\
\tilde{G}_{\alpha\zeta}&=\tilde{G}_{\zeta\alpha}=\varepsilon\tilde{\varsigma}_{\alpha,\zeta}+\varepsilon^2\left[\tilde{\zeta}_{,\zeta}\left(\tilde{\zeta}_{,\alpha}+\tilde{\varsigma}^\beta\tilde{\varkappa}_{\alpha\beta}\right)\right.\nonumber\\
&\hspace{30mm}+\left.\tilde{\varsigma}_{\beta,\zeta}\left({\tilde{\varsigma}^\beta}_{;\alpha}-\tilde{\zeta}{\tilde{\kappa}_\alpha}^\beta\right)\right],\\
\tilde{G}_{\alpha\beta}&=\tilde{g}_{\alpha\gamma}\left[{\delta^\gamma}_\delta\!+\!\varepsilon\Bigl({\tilde{\varsigma}_\delta}^{;\gamma}\!-\!\tilde{\zeta}{\tilde{\kappa}^\gamma}_\delta\Bigr)\right]\left[{\delta_\beta}^\delta\!+\!\varepsilon\left({\tilde{\varsigma}^\delta}_{;\beta}\!-\!\tilde{\zeta}{\tilde{\kappa}_\beta}^\delta\right)\right]\nonumber\\
&\quad+\varepsilon^2\Bigl(\tilde{\zeta}_{,\alpha}+\tilde{\varsigma}^\gamma\tilde{\kappa}_{\alpha\gamma}\Bigr)\left(\tilde{\zeta}_{,\beta}+\tilde{\varsigma}^\delta\tilde{\kappa}_{\beta\delta}\right).
\end{align}
\end{subequations}
\subsubsection{Intrinsic configuration of the shell: Incompatibility}
We define the intrinsic configuration of the shell by specifying the symmetric positive-definite intrinsic metric $\smash{{g^0}_{\alpha\beta}}$, the symmetric intrinsic curvature tensor $\smash{{\kappa^0}_{\alpha\beta}}$, and the intrinsic transverse displacement $\zeta^0$, which is an increasing function of $\zeta$. It follows from a local embedding theorem for Riemannian metrics~\cite{janet26,cartan27} that the surface $\mathcal{S}^0$ with metric $\smash{{g^0}_{\alpha\beta}}$ can be embedded into three-dimensional Euclidean space, and we denote by $\mathcal{B}^0$ the corresponding intrinsic basis containing the tangent vectors $\vec{E_\alpha}$ and the normal $\vec{N}$ such that ${g^0}_{\alpha\beta}=\vec{E_\alpha}\cdot\vec{E_\beta}$.

The components of the curvature tensor $\smash{{\varkappa^0}_{\alpha\beta}}=-\vec{E_\alpha}\cdot\smash{\vec{N}_{,\beta}}$ associated with $\mathcal{S}^0$ are in general different from the intrinsic curvatures $\smash{{\kappa^0}_{\alpha\beta}}$. This expresses the incompatibility of the intrinsic metric $\mat{G^0}$ of the intrinsic configuration $\mathcal{V}^0$ of the shell. This metric has components
\begin{subequations}\label{eq:GG0}
\begin{align}
&{G^0}_{\zeta\zeta}=\varepsilon^2\bigl({\zeta^0}_{,\zeta}\bigr)^2,&&{G^0}_{\alpha\zeta}={G^0}_{\zeta\alpha}=\varepsilon^2{\zeta^0}_{,\zeta}{\zeta^0}_{,\alpha},
\end{align}
and
\begin{align}
{G^0}_{\alpha\beta}&={g^0}_{\alpha\gamma}\left({\delta^\gamma}_\delta-\varepsilon\zeta^0{\kappa^{0\,\gamma}}_\delta\right)\left({\delta^\delta}_\beta-\varepsilon\zeta^0\smash{{\kappa^0}_\beta}^\delta\right)\nonumber\\
&\quad+\varepsilon^2{\zeta^0}_{,\alpha}{\zeta^0}_{,\beta},
\end{align}
\end{subequations}
that we write down by analogy with Eqs.~\eqref{eq:Gdef}, assuming, as we did in Section~\ref{sec:model}, that there is no intrinsic displacement parallel to the midsurface, $\varsigma^{0\,\alpha}=0$. We emphasise again that, in contrast with the intrinsic metric $\smash{{g^0}_{\alpha\beta}}$, the intrinsic curvatures $\smash{{\kappa^0}_{\alpha\beta}}$ and the intrinsic transverse displacement $\zeta^0$ remain without a direct geometric realisation.

As in Section~\ref{sec:model}, we specify $\zeta^0$ by imposing intrinsic volume conservation. The condition of intrinsic volume conservation reads $\sqrt{\det{\mat{G^0}}}=\sqrt{\det{\mat{G}}}$, or, as we argue in what follows and equivalently, $\det{\tens{F^0}}=1$, where the intrinsic deformation gradient $\tens{F^0}$ is given by Eq.~\eqref{eq:F0bm} below. We shall integrate the differential equation resulting from this condition under the scaling assumptions of shell theory later, and we shall again choose the midsurfaces $\mathcal{S}$, $\tilde{\mathcal{S}}$, and $\mathcal{S}^0$ in such a way that the shell surfaces $\zeta=\pm h^\pm$ and $\tilde{\zeta}=\pm\tilde{h}^\pm$ correspond to $\zeta^0=\pm h^0/2$. We recall that the intrinsic thickness $h^0$ also lacks a direct geometric realisation.

\subsubsection{Calculation of the deformation gradient tensors}
The geometric deformation gradient is $\btilde{\tens{F}}=\operatorname{Grad}\vec{\tilde{r}}$, where, by definition, $\operatorname{Grad}\vec{\tilde{r}}=\vec{\tilde{r}}_{,\alpha}\otimes\vec{r}^{,\alpha}+\vec{\tilde{r}}_{,\zeta}\otimes\vec{r}^{,\zeta}$. Now, from Eqs.~\eqref{eq:Gmetricup}, 
\begin{align}
&\vec{r}^{,\alpha}=G^{\alpha\gamma}\left(g_{\gamma\beta}-\varepsilon\zeta \varkappa_{\gamma\beta}\right)\vec{e^\beta},&&\vec{r}^{,\zeta}=\varepsilon^{-1}\vec{n}.
\end{align}
Using Eqs.~\eqref{eq:drtilde}, it follows that
\begin{subequations}\label{eq:Fg}
\begin{align}
\btilde{\tens{F}}&= \left({\delta^\alpha}_\gamma-\varepsilon\tilde{\zeta}{\smash{\tilde{\kappa}^\alpha}}_{\gamma}+\varepsilon{\tilde{\varsigma}^\alpha}_{;\gamma}\right)G^{\gamma\delta}\left(g_{\delta\beta}-\varepsilon\zeta \varkappa_{\delta\beta}\right)\vec{\tilde{e}_\alpha}\otimes\vec{e^\beta}\nonumber\\
&\qquad +\varepsilon\bigl(\tilde{\zeta}^{,\alpha}+{\tilde{\kappa}^\alpha}_\epsilon\tilde{\varsigma}^\epsilon\bigr)\tilde{g}_{\alpha\delta}G^{\delta\gamma}\left(g_{\gamma\beta}-\varepsilon\zeta \varkappa_{\gamma\beta}\right)\vec{\tilde{n}}\otimes\vec{e^\beta}\nonumber\\
&\qquad +{\tilde{\varsigma}^\alpha}_{,\zeta}\vec{\tilde{e}_\alpha}\otimes\vec{n}+\tilde{\zeta}_{,\zeta}\vec{\tilde{n}}\otimes\vec{n}.
\end{align}
or, in block matrix notation~\footnote{Block matrices will represent tensors with respect to tensor products of a left tangent basis and right dual basis, so can be multiplied without incurring metric factors and traces can be computed without raising or lowering indices. In fact, the indices of the entries of these block matrices (and also of the matrices that appear as their components) will never be raised or lowered. In particular, transposes are understood to be matrix transposes, so do not change the covariant or contravariant nature of indices.},
\begin{align}
\btilde{\tens{F}} &=\left(\begin{array}{c|c}
\tilde{\mat{A}}\mat{H}&\tilde{\svarsigma}_{,\zeta}\\[0.5mm]
\hline
\vphantom{A^{A^{A^A}}}\tilde{\mat{b}}^\top\tilde{\mat{g}}\mat{H}&\tilde{\zeta}_{,\zeta}
\end{array}\right)\quad\left[\tilde{\mathcal{B}}\otimes\mathcal{B}^\ast\right],\label{eq:Fg2}
\end{align}
\end{subequations}
in which the asterisk denotes a dual basis, and where we have introduced 
\begin{align}
{H^\alpha}_\beta=G^{\alpha\gamma}g_{\gamma\delta}{A^\delta}_\beta\quad\text{with }{A^\alpha}_\beta={\delta^\alpha}_\beta-\varepsilon\zeta{\varkappa^\alpha}_\beta,\label{eq:Hdef}
\end{align}
and where we have also let
\begin{align}
{\tilde{A}^\alpha}_\beta&={\delta^\alpha}_\beta-\varepsilon\tilde{\zeta}{\smash{\tilde{\kappa}}^\alpha}_\beta+\varepsilon{\tilde{\varsigma}^\alpha}_{;\beta},&
\tilde{b}^\alpha&=\varepsilon\left(\tilde{\zeta}^{,\alpha}+{\tilde{\kappa}^\alpha}_{\beta}\tilde{\varsigma}^\beta\right).\label{eq:Ab}
\end{align}
By analogy with Eqs.~\eqref{eq:Fg}, the intrinsic deformation gradient tensor is
\begin{subequations}\label{eq:F0bms}
\begin{align}
\tens{F^0}&=\bigl({\delta^\alpha}_\gamma-\varepsilon\zeta^0{\smash{\kappa^{0\,\alpha}}}_\gamma\bigr)G^{\gamma\delta}\left(g_{\delta\beta}-\varepsilon\zeta \varkappa_{\delta\beta}\right)\vec{E_\alpha}\otimes\vec{e^\beta}\nonumber\\
&\quad+\varepsilon\zeta^{0\,\alpha}{g^0}_{\alpha\delta}G^{\delta\gamma}\left(g_{\gamma\beta}-\varepsilon\zeta \varkappa_{\gamma\beta}\right)\vec{N}\otimes\vec{e^\beta}+{\zeta^0}_{,\zeta}\vec{N}\otimes\vec{n}, 
\end{align}
or, in block matrix notation,
\begin{align}
\tens{F^0}&=\left(\begin{array}{c|c}
\mat{A^0}\mat{H}&\mat{0}\\
\hline
\vphantom{A^{A^{A^A}}}{\mat{b^0}}^\top\mat{g^0}\mat{H}&{\zeta^0}_{,\zeta}
\end{array}\right)\quad\left[\mathcal{B}^0\otimes\mathcal{B}^\ast\right]. \label{eq:F0bm}
\end{align}
\end{subequations}
Here we have again assumed that there is no intrinsic displacement parallel to the midsurface, $\varsigma^{0\,\alpha}=0$, and we have introduced
\begin{align}
&{A^{0\,\alpha}}_\beta ={\delta^\alpha}_\beta-\varepsilon\zeta^0{\smash{\kappa^{0\,\alpha}}}_\beta,&&b^{0\,\alpha}=\varepsilon{\zeta^{0\,,\alpha}}.\label{eq:Ab0}
\end{align}

At this stage, we interrupt the computation of the deformation gradient tensors and we discuss the condition of intrinsic volume conservation. From Eq.~\eqref{eq:Gmetricb} and definition~\eqref{eq:Hdef}, $G_{\alpha\beta}=g_{\alpha\gamma}{A^{\gamma}}_{\delta}{A^\delta}_\beta$. Now $\det{\mat{MN}}=\det{\mat{M}}\det{\mat{N}}$ for matrices $\mat{M},\mat{N}$, so, from Eqs.~\eqref{eq:Gmetrica},\begin{subequations}\label{eq:detGG0}
\begin{align}
\det{\mat{G}}=\varepsilon^2 g(\det{\mat{A}})^2, \label{eq:detGG01}
\end{align}
where we recall the definition $g=\det{\mat{g}}$. Similarly, on introducing $g^0=\det{\mat{g^0}}$ and on evaluating the determinant of a block matrix~\cite{* [] [{ Chap. 2.8, pp. 115--119 and Chap. 4.4, pp. 261--267.}] matrix}, Eqs.~\eqref{eq:GG0} yield
\begin{align}
\det{\mat{G^0}}=\varepsilon^2\bigl({\zeta^0}_{,\zeta}\bigr)^2g^0\bigl(\det{\mat{A^0}}\bigr)^2. \label{eq:detGG02}
\end{align}
\end{subequations}
Above, we have claimed that the intrinsic volume conservation condition $\sqrt{\det{\mat{G^0}}}=\sqrt{\det{\mat{G}}}$ is equivalent with the tensorial condition $\det{\tens{F^0}}=1$. Since Eq.~\eqref{eq:F0bm} expresses the intrinsic deformation gradient with respect to a mixed non-orthogonal basis, we shall need the following observation to evaluate the determinant and hence prove our claim:
\begin{proposition}\label{prop1}
Let $\{\vec{e_\alpha}\}$ and $\{\vec{E_\beta}\}$ be right-handed bases with corresponding metrics ${g_{\alpha\beta}=\vec{e_\alpha}\cdot\vec{e_\beta}}$, and $G_{\alpha\beta}=\vec{E_\alpha}\cdot\vec{E_\beta}$, and let $\tens{M}={M^\alpha}_\beta\vec{e_\alpha}\otimes\vec{\smash{E^\beta}}$ be a tensor represented by the matrix $\mat{M}=({M^\alpha}_\beta)$ with respect to \smash{$\{\vec{e_\alpha}\}\otimes\{\vec{E^\beta}\}$}. Let $g=\det{g_{\alpha\beta}}$ and $G=\det{G_{\alpha\beta}}$. Then
\begin{align*}
\det{\tens{M}}=\sqrt{\dfrac{g}{G}}\det{\mat{M}}.
\end{align*}
\end{proposition}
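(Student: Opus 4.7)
The plan is to reduce the computation to an orthonormal setting, where the determinant of a tensor coincides with the determinant of its matrix representation. Fix a right-handed Cartesian basis $\mathcal{X}=\{\vec{\hat{x}_\gamma}\}$ and introduce change-of-basis matrices $\mat{P}=({P^\gamma}_\alpha)$ and $\mat{Q}=({Q^\gamma}_\alpha)$ defined by $\vec{e_\alpha}={P^\gamma}_\alpha\vec{\hat{x}_\gamma}$ and $\vec{E_\alpha}={Q^\gamma}_\alpha\vec{\hat{x}_\gamma}$. Since both $\{\vec{e_\alpha}\}$ and $\{\vec{E_\alpha}\}$ are right-handed, $\det{\mat{P}}>0$ and $\det{\mat{Q}}>0$.

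From the definitions of the metrics and the orthonormality of $\mathcal{X}$, I would then read off the Gram-matrix identities $\mat{g}=\mat{P}^\top\mat{P}$ and $\mat{G}=\mat{Q}^\top\mat{Q}$, which together with the sign conditions give $\det{\mat{P}}=\sqrt{g}$ and $\det{\mat{Q}}=\sqrt{G}$. The key intermediate step is to identify the Cartesian expansion of the dual basis $\{\vec{E^\beta}\}$: writing $\vec{E^\beta}={R^\beta}_\gamma\vec{\hat{x}_\gamma}$ and imposing $\vec{E^\beta}\cdot\vec{E_\gamma}=\delta^\beta{}_\gamma$ yields $\mat{R}\mat{Q}=\mat{I}$, i.e.\ $\mat{R}=\mat{Q}^{-1}$. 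Substituting these expansions into $\tens{M}={M^\alpha}_\beta\vec{e_\alpha}\otimes\vec{E^\beta}$ produces its Cartesian matrix representation $\tilde{\mat{M}}=\mat{P}\mat{M}\mat{Q}^{-1}$.

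Since the orthonormal basis $\mathcal{X}$ is self-dual, the determinant of the tensor equals the determinant of this Cartesian matrix representation; this is precisely the reasoning already used in Section~\ref{sec:model} to conclude $\det{\tens{F}}=\det{\mat{F}}$ when both bases are orthonormal. Multiplicativity of the determinant then delivers
\begin{align*}
\det{\tens{M}}=\det{\tilde{\mat{M}}}=\det{\mat{P}}\,\det{\mat{M}}\,\det{\mat{Q}^{-1}}=\sqrt{\frac{g}{G}}\,\det{\mat{M}},
\end{align*}
as claimed.

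The one delicate point is the dual-basis transformation rule: it is tempting but wrong to set $\vec{E^\beta}={Q^\beta}_\gamma\vec{\hat{x}_\gamma}$, or to replace $\mat{Q}^{-1}$ by $\mat{Q}^\top$, the latter being correct only when $\{\vec{E_\alpha}\}$ is orthonormal. Once this rule is pinned down, the remaining manipulations amount to a short linear-algebraic computation; right-handedness enters only to fix the sign of the square roots, and the statement is genuinely a fact about the two metrics rather than about any specific tensor structure of $\tens{M}$.
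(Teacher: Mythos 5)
Your proof is correct and follows essentially the same route as the paper's: expand both bases in a Cartesian frame, use the Gram identities to get $\det$ of the change-of-basis matrices as $\sqrt{g}$ and $\sqrt{G}$ (with right-handedness fixing the signs), and apply multiplicativity of the determinant. The only cosmetic difference is that you express the dual basis through $\mat{Q}^{-1}$ whereas the paper raises the index with $G^{\beta\gamma}$, which amounts to the same factor since $\mat{G}^{-1}\mat{Q}^\top=\mat{Q}^{-1}$.
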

\begin{proof}
Let $\{\vec{X_i}\}$ be the standard Cartesian basis, and write $\vec{e_\alpha}=e_{\alpha i}\vec{X_i}$, $\vec{E_\alpha}=E_{\alpha i}\vec{X_i}$. Let $e=\det{e_{\alpha i}}$, $E=\det{E_{\alpha i}}$. By assumption, $e,E>0$. By definition, $g_{\alpha\beta}=\vec{e_\alpha}\cdot\vec{e_\beta}=e_{\alpha i}e_{\beta i}$ as $\vec{X_i}\cdot\vec{X_j}=\delta_{ij}$. Since $\det{e_{\beta i}}=\det{e_{i \beta}}$, $e^2=g$. Similarly, $E^2=G$. Now 
\begin{align*}
\tens{M}=e_{\alpha i}{M^\alpha}_\beta G^{\beta\gamma}E_{\gamma j}\vec{X_i}\otimes\vec{X_j}, 
\end{align*}
which implies, since $\det{\mat{G}^{-1}}=G^{-1}$, $\det{\tens{M}}=e(\det{\mat{M}})G^{-1}E$. This completes the proof~\cite{* [{This result is doubtless known in the solid mechanical literature: e.g., it appears without proof as Eq.~(5.8) of }] [] yavari12}.
\end{proof}
Since the normal vectors $\vec{n}$ in $\mathcal{B}$ and $\vec{N}$ in $\mathcal{B}^0$ are, by definition, unit vectors perpendicular to the remaining basis vectors, Proposition~\ref{prop1} and Eq.~\eqref{eq:F0bm} yield
\begin{subequations}\label{eq:detF0}
\begin{align}
\det{\tens{F^0}}=\sqrt{\dfrac{g^0}{g}}\det{\mat{F^0}}=\sqrt{\dfrac{g^0}{g}}{\zeta^0}_{,\zeta}\det{\mat{A^0}}\det{\mat{H}}.
\end{align}
Now definition~\eqref{eq:Hdef} implies, since $G_{\alpha\beta}=g_{\alpha\gamma}{A^{\gamma}}_{\delta}{A^\delta}_\beta$, that 
\begin{align}
\det{\mat{H}}=\left[g(\det{\mat{A}})^2\right]^{-1}g\det{\mat{A}}=\dfrac{1}{\det{\mat{A}}}, 
\end{align}
\end{subequations}
Since we assume \smash{${\zeta^0}_{\smash{,\zeta}}>0$}, Eqs.~\eqref{eq:detGG0} and \eqref{eq:detF0} show that \smash{$\sqrt{\det{\mat{G^0}}}=\sqrt{\det{\mat{G}}}\Longleftrightarrow\det{\tens{F^0}}=1$}, as claimed. Because we have written down Eqs.~\eqref{eq:GG0} and \eqref{eq:F0bms} defining the incompatible metric of $\mathcal{V}^0$ and the intrinsic deformation gradient $\tens{F^0}$ by analogy with the corresponding results for the deformation configuration \smash{$\tilde{\mathcal{V}}$}, but have not derived them from an embedding of $\mathcal{V}^0$, it is not \emph{a priori} clear that these expressions are consistent. This is why we needed to show, as we did in Section~\ref{sec:model}, that the expression for $\mat{G^0}$ is consistent with that for $\tens{F^0}$ as far the only use of the former (i.e. intrinsic volume conservation or the definition of the intrinsic volume element) is concerned. Equivalently, intrinsic volume conservation can be imposed without reference to the incompatible metric $\mat{G^0}$; consequently, as also noted in Section~\ref{sec:model}, the volume element $\mathrm{d}V^0$ of $\mathcal{V}^0$ can be also be defined with reference to $\tens{F^0}$ only. 

We now return to the computation of the elastic deformation gradient \smash{$\tens{F}=\btilde{\tens{F}}\bigl(\tens{F^0}\bigr)^{-1}$}. On inverting the block-lower triangular matrix in Eq.~\eqref{eq:F0bm}, we find
\begin{align}
\bigl(\tens{F^0}\bigr)^{-1}=\left(\begin{array}{c|c}
\mat{H}^{-1}\left(\mat{A^0}\right)^{-1}&\mat{0}\\[0.5mm]
\hline
-\dfrac{\vphantom{A^{A^{A^A}}}{\mat{b^0}}^\top\mat{g^0}\left(\mat{A^0}\right)^{-1}}{{\zeta^0}_{,\zeta}}&\dfrac{1}{{\zeta^0}_{,\zeta}}
\end{array}\right)\quad\left[\mathcal{B}\otimes\bigl(\mathcal{B}^0\bigr)^\ast\right].\label{eq:F0inv}
\end{align}
From this and from Eq.~\eqref{eq:Fg2}, we obtain
\begin{align}
\tens{F}&=\left(\begin{array}{c|c}
\left(\tilde{\mat{A}}-{\tilde{\svarsigma}}_{,\zeta^0}{\mat{b^0}}^\top\mat{g^0}\right)\left(\mat{A^0}\right)^{-1}&{\tilde{\svarsigma}}_{,\zeta^0}\\[1.75mm]
\hline
\vphantom{\Bigl(^A}\left(\tilde{\mat{b}}^\top\tilde{\mat{g}}-\tilde{\zeta}_{,\zeta^0}{\mat{b^0}}^\top\mat{g^0}\right)\left(\mat{A^0}\right)^{-1}&\tilde{\zeta}_{,\zeta^0}
\end{array}\right)\quad\left[\tilde{\mathcal{B}}\otimes\bigl(\mathcal{B}^0\bigr)^\ast\right].\label{eq:FBB0}
\end{align}
\subsection{Thin shell theory for large bending deformations}
As in Section~\ref{sec:model}, we assume that the shell is made of an incompressible neo-Hookean material, with energy given by Eq.~\eqref{eq:E}. Eq.~\eqref{eq:P} still provides an expression for the stress tensor $\tens{Q}$, now with respect to $\tilde{\mathcal{B}}\otimes\bigl(\mathcal{B}^0\bigr)^\ast$, and with the deformation gradients $\smash{\btilde{\tens{F}},\tens{F^0},\tens{F}}$ now given by Eqs.~\eqref{eq:Fg2}, \eqref{eq:F0bm}, and \eqref{eq:FBB0}, respectively. Moreover, Eq.~\eqref{eq:Cauchy} still applies.

\subsubsection{Scaling assumptions}
Again as in Section~\ref{sec:model}, we rescale the intrinsic and deformed curvature tensors, $\tens{\skappab^0}={\kappa^{0\,\alpha}}_\beta\vec{E_\alpha}\otimes\vec{E^\beta}$ and $\btilde{\skappab}={\tilde{\kappa}^{\alpha}}_\beta\vec{\tilde{e}_\alpha}\otimes\vec{\tilde{e}^\beta}$, to introduce large bending deformations explicitly and absorb the intrinsic stretching of the midsurface by writing
\begin{align}
&\tens{\skappab^0}=\sqrt{\dfrac{g^0}{g}}\dfrac{\tens{\slambdab^0}}{\varepsilon},&&\btilde{\skappab}=\sqrt{\dfrac{g^0}{g}}\dfrac{\btilde{\slambdab}}{\varepsilon},\label{eq:kscale}
\end{align}
In what follows, we shall need explicit representations of these tensors, $\smash{\slambdab^{\tens0}={\lambda^{0\,\alpha}}_\beta\vec{E_\alpha}\otimes\vec{E^\beta}}$ and $\smash{\btilde{\slambdab}={\tilde{\lambda}^\alpha}_\beta\vec{\tilde{e}_\alpha}\otimes\vec{\tilde{e}^\beta}}$, and shall denote by $\slambda^{\mat0}$ and $\tilde{\slambda}$ the corresponding matrices of components.

Next, we make the standard scaling assumptions of shell theory, that the elastic strains remain small. To this end, we introduce the deformation gradient restricted to the midsurface,
\begin{align}
\tens{f}=\vec{\tilde{e}_\alpha}\otimes\vec{E^\alpha}
\end{align}
First, we require that the shell strains be small: accordingly, we define the shell strain tensor $\tens{E}$ by
\begin{align}
2\varepsilon\tens{E}=\tens{f}^\top\tens{f}-\tens{I}.\label{eq:Etens}
\end{align}
Now $\tens{f}^\top=\vec{E^\alpha}\otimes\vec{\tilde{e}_\alpha}$, so $\tens{f}^\top\tens{f}=\tilde{g}_{\alpha\beta}\vec{E^\alpha}\otimes\vec{E^\beta}=g^{0\,\alpha\gamma}\tilde{g}_{\gamma\beta}\vec{E_\alpha}\otimes\vec{E^\beta}$. Hence, if we set $\tens{E}=\smash{{E^\alpha}_\beta\vec{E_\alpha}\otimes\vec{E^\beta}}$, then
\begin{subequations}
\begin{align}
2\varepsilon{E^\alpha}_\beta=g^{0\,\alpha\gamma}g_{\gamma\beta}-{\delta^\alpha}_\beta\quad\text{or}\quad 2\varepsilon\mat{E}=\bigl(\mat{g^0}\bigr)^{-1}\tilde{\mat{g}}-\mat{I}, \label{eq:Emat}
\end{align}
in equivalent matrix notation. In the calculations that follow, we shall need a consequence of this definition,
\begin{align}
\tilde{\mat{g}}=\mat{g^0}\left(\mat{I}+2\varepsilon\mat{E}\right). \label{eq:uscale}
\end{align}
\end{subequations}
Second, we require that the curvature strains remain small: we therefore introduce two different (scaled) curvature strain tensors,
\begin{align}
\varepsilon\tens{L}=\tens{f}^{-1}\btilde{\slambdab}\tens{f}-\tens{\slambdab^0},&&\varepsilon\tens{K}=\tens{f}^\top\btilde{\slambdab}\tens{f}-\tens{\slambdab^0}.\label{eq:curvtens}
\end{align}
Since $\tens{f}^{-1}=\vec{E_\alpha}\otimes\vec{\tilde{e}^\alpha}$, $\tens{f}^{-1}\btilde{\slambdab}\tens{f}={\tilde{\lambda}^\alpha}_\beta\vec{E_\alpha}\otimes\vec{E^\beta}$, and hence, on writing $\smash{\tens{L}={L^\alpha}_\beta\vec{E_\alpha}\otimes\vec{E^\beta}}$, we find~\footnote{The indices in Eq.~\eqref{eq:Lmat} are raised or lowered with different metrics, $\tilde{\mat{g}}$ and $\mat{g^0}$, which are asymptotically close to each other by Eq.~\eqref{eq:uscale}. Hence taking tensor transposes explicitly by multiplying matrices by the appropriate metrics enables us to impose the asymptotic scaling~\eqref{eq:uscale} during the asymptotic expansion. This is the reason why transposes in our block matrix notation~\cite{Note2} are matrix transposes rather than tensor transposes.}
\begin{align}
\varepsilon{L^\alpha}_\beta={\tilde{\lambda}^\alpha}_\beta-{\lambda^{0\,\alpha}}_\beta\quad\text{or}\quad\varepsilon\mat{L}=\tilde{\slambda}-\slambda^{\mat{0}}. \label{eq:Lmat}
\end{align}
Similarly, $\tens{f}^\top\btilde{\slambdab}\tens{f}=g^{0\,\alpha\gamma}\tilde{g}_{\gamma\delta}{\tilde{\lambda}^\delta}_\beta\vec{E_\alpha}\otimes\vec{E^\beta}$, whence, on letting $\smash{\tens{K}={K^\alpha}_\beta\vec{E_\alpha}\otimes\vec{E^\beta}}$ and from Eqs.~\eqref{eq:Emat} and \eqref{eq:Lmat},
\begin{align}
{K^\alpha}_\beta={L^\alpha}_\beta+2{E^\alpha}_\gamma{\lambda^{0\,\gamma}}_\beta+O(\varepsilon)\;\;\text{or}\;\;\mat{K}=\mat{L}+2\mat{E\slambda^0}+O(\varepsilon).\label{eq:Kmat}
\end{align}

These scalings and definitions are consistent with the scalings~\eqref{eq:curvscalings} and the definitions \eqref{eq:shellstr} and \eqref{eq:curvstr} of the shell and curvature strains for the axisymmetric deformations analysed in Section~\ref{sec:model}. Indeed, for these axisymmetric deformations,
\begin{align}
&\mat{g}=\left(\begin{array}{cc}
1&0\\0&r^2               
\end{array}\right),\;\tilde{\mat{g}}=\left(\begin{array}{cc}
\tilde{f}_s^{\,2}&0\\0&r^2\tilde{f}_\phi^{\,2}               
\end{array}\right),\;\mat{g^0}=\left(\begin{array}{cc}
\bigl(f_s^0\bigr)^2&0\\0&r^2\bigl(f_\phi^0\bigr)^2               
\end{array}\right),\label{eq:gaxisyms}
\end{align}
from Eqs.~\eqref{eq:undefmetric}, \eqref{eq:defmidmetric}, and \eqref{eq:intmidmetric}. In particular, $\sqrt{g^0/g}=f_s^0f_{\smash\phi}^0$. Moreover, Eq.~\eqref{eq:uscale} yields
\begin{subequations}
\begin{align}
\tilde{f}_s&=f_s^0\sqrt{1+2\varepsilon {E^s}_s}=f_s^0\bigl(1+\varepsilon{E^s}_s\bigr)+O\bigl(\varepsilon^2\bigr),\\
\tilde{f}_\phi&=f_\phi^0\sqrt{1+2\varepsilon{E^{\smash\phi}}_\phi}=f_s^0\left(1+\varepsilon{E^\phi}_\phi\right)+O\bigl(\varepsilon^2\bigr),
\end{align}
while ${E^s}_\phi={E^\phi}_s=0$. Thus, identifying
\begin{align}
&E_s ={E^s}_s,&&E_\phi={E^\phi}_\phi,\label{eq:eaxisym}
\end{align}
\end{subequations}
we conclude that Eqs.~\eqref{eq:shellstr} are consistent with Eq.~\eqref{eq:uscale} at leading order, i.e. at the order to which the shell theory will be valid.

Direct computation relates the components of $\tilde{\slambda}$ to the principal curvatures of $\tilde{\mathcal{S}}$ defined in Eqs.~\eqref{eq:kappas}, viz.
\begin{subequations}
\begin{align}
&{\tilde{\lambda}^s}_s=\dfrac{\tilde{\kappa}_s}{\varepsilon f_s^0f_\phi^0}, &&&{\tilde{\lambda}^\phi}_\phi=\dfrac{\tilde{\kappa}_\phi}{\varepsilon f_s^0f_\phi^0},
\end{align}
while ${\tilde{\lambda}^s}_{\phi}={\tilde{\lambda}^\phi}_s=0$. Hence Eqs.~\eqref{eq:curvscalings} and \eqref{eq:curvstr} are consistent with Eqs.~\eqref{eq:kscale} and \eqref{eq:Lmat} if we identify
\begin{align}
&\lambda^0_s={\lambda^{0\,s}}_s,&&\varepsilon\lambda^0_\phi={\lambda^{0\,\phi}}_\phi,&&L_s={L^s}_s,&&L_\phi={L^\phi}_\phi, \label{eq:laxisym}
\end{align}
\end{subequations}
with the off-diagonal components vanishing. However, comparing Eqs.~\eqref{eq:LsLphi} and \eqref{eq:Kmat} shows that the alternative curvature strains defined here are different from those defined in Eqs.~\eqref{eq:KsKphi}: ${K^s}_s=L_s+2E_s\lambda_s^0+O(\varepsilon)\not=L_s+E_s\lambda_s^0+O(\varepsilon)=K_s$, using Eq.~\eqref{eq:eta}. We are not aware of a tensorial representation of the alternative curvature strains introduced in Eqs.~\eqref{eq:KsKphi} and that vanish for pure stretching deformations.

As in the axisymmetric calculations in Section~\ref{sec:model}, it will turn out to be convenient to scale the displacements parallel and perpendicular to the midsurfaces by absorbing the intrinsic stretching of the midsurface. We therefore introduce scaled variables
\begin{align}
&Z^0=\sqrt{\dfrac{g^0}{g}}\zeta^0, &&Z=\sqrt{\dfrac{g^0}{g}}\tilde{\zeta},&&\mat{S}=\sqrt{\dfrac{g^0}{g}}\tilde{\svarsigma}.\label{eq:scaledvars}
\end{align}

\subsubsection{Boundary and incompressibility conditions}
As in Section~\ref{sec:model}, we solve the Cauchy equation~\eqref{eq:Cauchy} subject to the incompressibility condition $\det{\tens{F}}=1$ and subject to force-free boundary conditions.

Again as in Section~\ref{sec:model}, these boundary conditions on the shell surfaces read $\tens{Q}^\bpm\vec{n^\pm}=\vec{0}$, where $\tens{Q}^\bpm$ are evaluated on the surfaces $\zeta=\pm h^\pm$ of $\mathcal{V}$. The normal vectors $\vec{n^\pm}$ to these undeformed shell surfaces are given by Eqs.~\eqref{eq:NA}, which yield the expansion
\begin{align}
\vec{n^\pm}=\vec{n}\mp\varepsilon\dfrac{{h^\pm}_{,\alpha}}{g}\vec{e^\alpha}+O\bigl(\varepsilon^2\bigr). \label{eq:NAexp}
\end{align}

The deformation gradient is given in Eq.~\eqref{eq:FBB0} with respect to the mixed basis $\tilde{\mathcal{B}}\otimes\bigl(\mathcal{B}^0\bigr)^\ast$. In what follows, we shall therefore use Proposition~\ref{prop1} to evaluate the tensorial incompressibility condition~$\det{\tens{F}}=1$.
\subsubsection{Intrinsic volume conservation}
We now impose volume conservation of the intrinsic configuration of the shell compared to the undeformed configuration. We need one preliminary result:
\begin{lemma}\label{lemma1}
Let $\mat{M}$ be a $2\times 2$ matrix, and $x$ be a scalar. Then
\begin{align*}
\det{\bigl(\mat{I}+x\mat{M}\bigr)}=1+x\tr{\mat{M}}+x^2\det{\mat{M}}. 
\end{align*}
\end{lemma}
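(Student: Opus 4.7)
The statement is the standard identity $\det(\mat{I}+x\mat{M})=1+x\tr{\mat{M}}+x^2\det{\mat{M}}$ for a $2\times 2$ matrix $\mat{M}$, so the plan is a direct computation. I would write $\mat{M}=(m_{\alpha\beta})$ with $\alpha,\beta\in\{1,2\}$, so that $\mat{I}+x\mat{M}$ has diagonal entries $1+xm_{11}$, $1+xm_{22}$ and off-diagonal entries $xm_{12}$, $xm_{21}$. Then I would expand the $2\times 2$ determinant as
\begin{align*}
\det(\mat{I}+x\mat{M})=(1+xm_{11})(1+xm_{22})-x^2 m_{12}m_{21}.
\end{align*}

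Collecting powers of $x$ gives $1+x(m_{11}+m_{22})+x^2(m_{11}m_{22}-m_{12}m_{21})$, in which the coefficient of $x$ is $\tr{\mat{M}}$ and the coefficient of $x^2$ is $\det{\mat{M}}$ by definition, which is the claim.

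There is no real obstacle here: the result is the characteristic polynomial relation $\det(\mat{I}-(-x)\mat{M})=p_{\mat{M}}(-x)$ specialised to $n=2$, and the sole purpose of isolating it as a lemma is to streamline the forthcoming expansions of $\det{\mat{F}}$ and $\det{\mat{F^0}}$ in the small parameter $\varepsilon$, where $x$ will play the role of $\varepsilon$ and $\mat{M}$ will be built from the tangential $2\times 2$ blocks of the deformation gradients. For this reason I would keep the proof to a single line of algebra and not dwell on higher-dimensional generalisations, noting only that the same derivation shows the absence of any $x^k$ terms with $k\geq 3$ is special to $n=2$.
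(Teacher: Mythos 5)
Your proof is correct and is essentially identical to the paper's: the paper also proves Lemma~\ref{lemma1} by writing out the entries of $\mat{I}+x\mat{M}$, expanding the $2\times 2$ determinant directly, and identifying the coefficients of $x$ and $x^2$ as $\tr{\mat{M}}$ and $\det{\mat{M}}$. No issues.
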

\begin{proof}
By direct computation,
\begin{align*}
&\det{\left(\begin{array}{cc}
1+x M_{11}&xM_{12}\\
x M_{21}&1+x M_{22}
\end{array}\right)}\\
&\qquad=1+x(M_{11}+M_{22})+x^2(M_{11}M_{22}-M_{12}M_{21}),
\end{align*}
which proves the claim.
\end{proof}
Volume conservation between the undeformed and intrinsic configurations of the shell requires equality of the volume elements, \smash{$\sqrt{\det{\mat{G}}}=\sqrt{\det{\mat{G^0}}}$}. Now, from definition~\eqref{eq:Hdef}, ${A^\alpha}_\beta={\delta^\alpha}_\beta+O(\varepsilon)$, and so Eq.~\eqref{eq:detGG01} yields
\begin{subequations}\label{eq:IVC}
\begin{align}
\sqrt{\det{\mat{G}}}=\varepsilon\sqrt{g}+O\bigl(\varepsilon^2\bigr). \label{eq:IVC1}
\end{align}
Moreover, from Eqs.~\eqref{eq:Ab0} and \eqref{eq:detGG02} with the scalings introduced above and invoking Lemma~\ref{lemma1}, we find
\begin{align}
\sqrt{\det{\mat{G^0}}}&=\varepsilon\left(\sqrt{\dfrac{g}{g^0}}{Z^0}_{,\zeta}\right)\left\{\sqrt{g^0}\left[1-2\mathcal{H}^0Z^0+\mathcal{K}^0\bigl(Z^0\bigr)^2\right]\right\}\nonumber\\
&\quad+O\bigl(\varepsilon^2\bigr),\label{eq:IVC2}
\end{align}
\end{subequations}
wherein $\mathcal{H}^0=\tfrac{1}{2}{\smash{\lambda^0}_\alpha}^{\alpha}$ and $\mathcal{K}^0=\det{{\smash{\lambda^0}_\alpha}^{\beta}}$, which we think of as (scaled) intrinsic mean and Gaussian curvatures~\cite{kreyszig}. Since these are not associated with an embedding of $\mathcal{S}^0$ into three-dimensional Euclidean space, we must establish their properties from first principles, based on the assumed symmetry of the intrinsic metric and intrinsic curvature tensor. The following results are undoubtedly folklore:
\begin{proposition}
If $\mat{M}$ is a symmetric matrix and~$\mat{N}$ is a positive-definite symmetric matrix, then $\mat{MN}$ has real eigenvalues. \label{prop2}
\end{proposition}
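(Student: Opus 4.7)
The plan is to exploit the fact that $\mat{N}$ is symmetric and positive-definite, which by the spectral theorem admits a unique symmetric positive-definite square root $\mat{N}^{1/2}$, and then show that $\mat{M}\mat{N}$ is similar to a symmetric matrix.

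First I would note that $\mat{N}^{1/2}$ is invertible, so we can form the similarity transform
\begin{align*}
\mat{N}^{1/2}(\mat{M}\mat{N})\mat{N}^{-1/2}=\mat{N}^{1/2}\mat{M}\mat{N}^{1/2}.
\end{align*}
Since similar matrices have the same eigenvalues (including multiplicities), the eigenvalues of $\mat{M}\mat{N}$ coincide with those of $\mat{N}^{1/2}\mat{M}\mat{N}^{1/2}$. Next, I would verify that $\mat{N}^{1/2}\mat{M}\mat{N}^{1/2}$ is symmetric: taking the transpose and using that both $\mat{M}$ and $\mat{N}^{1/2}$ are symmetric gives $(\mat{N}^{1/2}\mat{M}\mat{N}^{1/2})^\top=\mat{N}^{1/2}\mat{M}^\top\mat{N}^{1/2}=\mat{N}^{1/2}\mat{M}\mat{N}^{1/2}$. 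By the spectral theorem, symmetric real matrices have real eigenvalues, and therefore so does $\mat{M}\mat{N}$.

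There is no serious obstacle here; the only substantive ingredient is the existence of the symmetric square root $\mat{N}^{1/2}$, which the authors may either cite as a standard consequence of the spectral theorem or briefly construct by diagonalising $\mat{N}=\mat{O}\mat{D}\mat{O}^\top$ with $\mat{O}$ orthogonal and $\mat{D}$ diagonal with strictly positive entries, and setting $\mat{N}^{1/2}=\mat{O}\mat{D}^{1/2}\mat{O}^\top$. Since the proposition is stated in the appendix as a folklore-type lemma that will be applied to the two-dimensional intrinsic setting (where $\mat{M}$ plays the role of the intrinsic curvature tensor and $\mat{N}$ the role of the intrinsic metric), a short two- or three-line proof along these lines is all that is needed.
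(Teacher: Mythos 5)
Your proof is correct and follows essentially the same route as the paper's: both use the symmetric square root $\mat{N}^{1/2}$ to exhibit $\mat{MN}$ as similar to the symmetric matrix $\mat{N}^{1/2}\mat{M}\mat{N}^{1/2}$, whose eigenvalues are real. No gaps; the argument matches the paper's proof of Proposition~\ref{prop2}.
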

\begin{proof}
Since $\mat{N}$ is positive-definite and symmetric, it has a symmetric square root $\mat{N}^{1/2}$~\cite{* [] [{ Chap.~1.3, pp. 44--57, Chap.~4.1, pp.~169--176, and Chap.~7.2, pp.~402--411.}] linalg}. Now 
\begin{align*}
\mat{MN}=\bigl(\mat{N}^{1/2}\bigr)^{-1}\bigl(\mat{N}^{1/2}\mat{M}\mat{N}^{1/2}\bigr)\mat{N}^{1/2},
\end{align*}
so $\mat{MN}$ is similar to and hence has the same eigenvalues~\cite{linalg} as $\mat{N}^{1/2}\mat{M}\mat{N}^{1/2}$. Since $\mat{M}$ and $\mat{N}^{1/2}$ are symmetric, so is $\mat{N}^{1/2}\mat{M}\mat{N}^{1/2}$, which therefore has real eigenvalues~\cite{linalg}. Hence $\mat{MN}$ has real eigenvalues, too, as claimed.
\end{proof}
\begin{corollary}
If $\mat{M}$ is a symmetric $2\times 2$ matrix and~$\mat{N}$ is a positive-definite symmetric $2\times 2$ matrix, then\label{cor0}
\begin{align*}
[\tr{(\mat{MN})}]^2\geq4\det{(\mat{MN})}. 
\end{align*}
\end{corollary}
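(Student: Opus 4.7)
The plan is to reduce the inequality to the elementary fact that $(a-b)^2 \geq 0$ for real numbers $a,b$, by using Proposition~\ref{prop2} to guarantee that the eigenvalues of $\mat{MN}$ are real.

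First, I would apply Proposition~\ref{prop2} directly to the hypotheses of the corollary: since $\mat{M}$ is symmetric and $\mat{N}$ is positive-definite and symmetric, $\mat{MN}$ has two real eigenvalues $\mu_1,\mu_2$ (counted with algebraic multiplicity). Next, I would invoke the standard fact that, for any $2\times 2$ matrix, the characteristic polynomial takes the form $t^2 - (\tr{\mat{MN}})\,t + \det{\mat{MN}}$, so that Vieta's relations give $\tr{(\mat{MN})} = \mu_1+\mu_2$ and $\det{(\mat{MN})} = \mu_1\mu_2$.

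The desired inequality then reduces to
\begin{align*}
[\tr{(\mat{MN})}]^2 - 4\det{(\mat{MN})} = (\mu_1+\mu_2)^2 - 4\mu_1\mu_2 = (\mu_1-\mu_2)^2 \geq 0,
\end{align*}
which holds because $\mu_1,\mu_2 \in \mathbb{R}$. This completes the proof.

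There is no real obstacle here: the work was already done in Proposition~\ref{prop2}, and the corollary is simply the algebraic observation that the discriminant of the characteristic polynomial of a $2\times 2$ matrix is nonnegative whenever the eigenvalues are real. The only subtlety worth noting is that without Proposition~\ref{prop2} the inequality could fail (complex conjugate eigenvalues would give a negative discriminant), so the symmetry and positive-definiteness hypotheses enter solely through that proposition.
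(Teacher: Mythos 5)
Your proof is correct and follows exactly the paper's argument: invoke Proposition~\ref{prop2} to get real eigenvalues $\mu_1,\mu_2$ of $\mat{MN}$, then write $[\tr{(\mat{MN})}]^2-4\det{(\mat{MN})}=(\mu_1-\mu_2)^2\geq 0$. No differences worth noting.
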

\begin{proof}
By Proposition~\ref{prop2}, the $2\times 2$ matrix $\mat{MN}$ has real eigenvalues $\mu_1,\mu_2$. Hence
\begin{align*}
[\tr{(\mat{MN})}]^2\!-\!4\det{(\mat{MN})}\!=\!(\mu_1\!+\!\mu_2)^2\!-\!4\mu_1\mu_2\!=\!(\mu_1\!-\!\mu_2)^2\geq 0, 
\end{align*}
which completes the proof.
\end{proof}
Now ${\smash{\lambda^0}_\alpha}^{\beta}={\lambda^0}_{\alpha\gamma}g^{0\,\gamma\beta}$. Since ${\kappa^0}_{\alpha\beta}$ is symmetric, so is its rescaling $\smash{{\lambda^0}_{\alpha\beta}}$. As $\smash{{g^0}_{\alpha\beta}}$ is symmetric and positive definite, so is its inverse $\smash{g^{0\,\alpha\beta}}$. Hence the conditions of Corollary~\ref{cor0} are satisfied; it implies the inequality \smash{$\bigl(\mathcal{H}^0\bigr)^2\geq\mathcal{K}^0$}.

Next, integrating the differential equation for $Z^0(\zeta)$ resulting from Eqs.~\eqref{eq:IVC} and imposing $Z^0=0$ at $\zeta=0$, we find
\begin{align}
Z^0-\mathcal{H}^0\bigl(Z^0\bigr)^2+\dfrac{\mathcal{K}^0}{3}\bigl(Z^0\bigr)^3=\zeta.\label{eq:Z0zeta2}
\end{align}
Since Eqs.~\eqref{eq:IVC} neglect $O\bigl(\varepsilon^2\bigr)$ corrections, this result holds at leading order only.

\begin{figure}[b]
\includegraphics{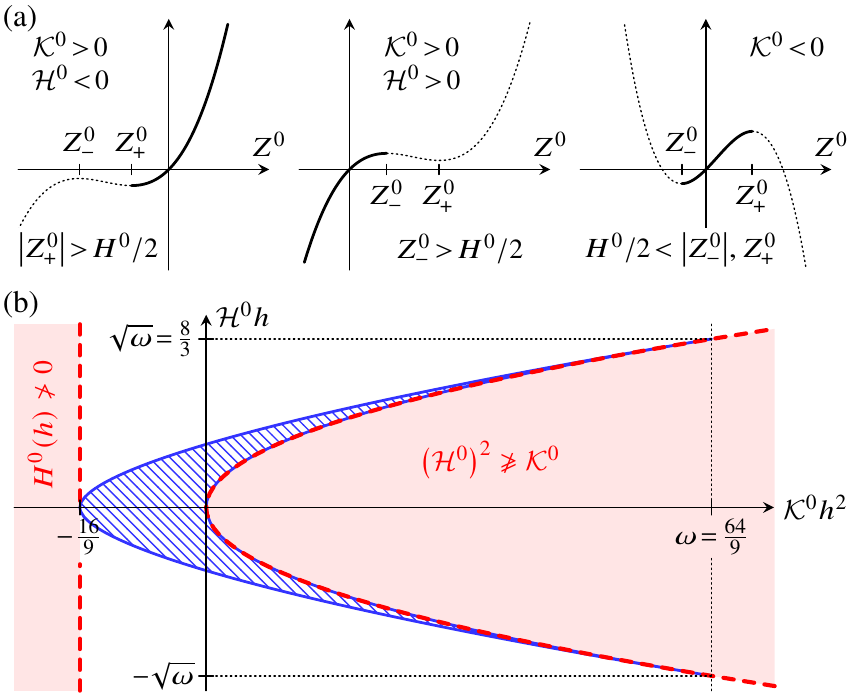}
\caption{Intrinsic volume conservation. (a)~Plot of $\smash{\zeta\bigl(Z^0\bigr)}$ defined in Eq.~\eqref{eq:Z0zeta2} for the cases $\mathcal{K}^0>0$, $\mathcal{H}^0<0$; $\mathcal{K}^0>0$, $\mathcal{H}^0>0$; $\mathcal{K}^0<0$. The positions of the turning points at \smash{$Z^0=Z^0_\pm$} are indicated, and $\zeta\bigl(Z^0\bigr)$ must increase monotonically for $\bigl|Z^0\bigr|<H^0/2$. This condition excludes the dotted parts of the graphs. (b)~Intrinsic volume conservation in $\bigl(\mathcal{K}^0h^2,\mathcal{H}^0h\bigr)$ space: conservation of intrinsic volume is only possible within the region of parameter space enclosed by the solid curve, in which $-16/9<h^2\mathcal{K}^0<\omega=64/9$ and ${h\bigl|\mathcal{H}^0\bigr|<\sqrt{\omega}=8/3}$. The dashed lines delimit the regions of parameter space excluded by the inequality \smash{$\bigl(\mathcal{H}^0\bigr)^2\geq\mathcal{K}^0$} and the condition that Eq.~\eqref{eq:H0} have a positive real solution.}\label{fig6} 
\end{figure}

We recall that, by definition, the shell surfaces are at $\zeta^0=\pm h^0/2$ in the intrinsic configuration, and at $\zeta=\pm h^\pm$ in the undeformed configuration, so that $h^++h^-=h$ is the undeformed thickness of the cell sheet. On defining \smash{\mbox{$H^0=h^0\sqrt{g^0/g}$}}, so that the shell surfaces are at $Z^0=\pm H^0/2$ in the intrinsic configuration, Eq.~\eqref{eq:Z0zeta2} yields
\begin{subequations}
\begin{align}
h^\pm=\dfrac{H^0}{2}\left[1\mp\dfrac{\mathcal{H}^0}{2}H^0+\dfrac{\mathcal{K}^0}{12}\bigl(H^0\bigr)^2\right], 
\end{align}
whence
\begin{align}
h=h^++h^-=H^0+\dfrac{\mathcal{K}^0}{12}\bigl(H^0\bigr)^3,\label{eq:H0}
\end{align}
\end{subequations}
which is a depressed cubic equation for $H^0(h)$ that can be solved in closed form. In particular, Eq.~\eqref{eq:H0} has a unique positive real solution if $\mathcal{K}^0>0$, but has no positive real solution if $h^2\mathcal{K}^0<-16/9$. If $0>h^2\mathcal{K}^0>-16/9$, two positive real solutions exist; by continuity, the smaller must be chosen.

More generally, we require that $\zeta$ increase with $Z^0$, for \smash{$\bigl|Z^0\bigr|\leq H^0/2$}. As \smash{$\bigl(\mathcal{H}^0\bigr)^2\geq\mathcal{K}^0$}, the cubic in Eq.~\eqref{eq:Z0zeta2} has two turning points~\figref{fig6}{a}, at $Z^0=Z^0_\pm$, where explicit expressions for $Z^0_-\leq Z^0_+$ in terms of $\mathcal{K}^0,\mathcal{H}^0$ can be found by solving a quadratic equation. The requirement that $\zeta$ increase with $Z^0$ translates to inequalities $Z^0_\pm\gtrless H^0(h)/2$ depending on the signs of $\mathcal{K}^0,\mathcal{H}^0$~\figref{fig6}{a}. These inequalities involving $h,\mathcal{H}^0,\mathcal{K}^0$ only depend on $\mathcal{H}^0h$ and $\mathcal{K}^0h^2$, since the curvatures can be nondimensionalised with $h$. The inequalities can then be solved numerically to determine the region in $\bigl(\mathcal{K}^0h^2,\mathcal{H}^0h\bigr)$ parameter space for which intrinsic volume conservation is possible~\figref{fig6}{b}. In particular, \textfigref{fig6}{b} shows that intrinsic volume conservation requires $-16/9\leq\mathcal{K}^0h^2\leq \omega$ and ${\bigl|\mathcal{H}^0h\bigr|\leq\smash{\sqrt{\omega}}}$ , where $\omega$ is a numerical constant. An expression for the boundary of this region can also be determined in closed form using \textsc{Mathematica} (Wolfram, Inc.); this can be used to show that $\omega=64/9$. 

For the axisymmetric deformations considered in Section~\ref{sec:model}, $\mathcal{K}^0={\lambda^{0\,s}}_s\smash{{\lambda^{0\,\phi}}_\phi}=O(\varepsilon)$ from Eqs.~\eqref{eq:laxisym}. For $\mathcal{K}^0=0$, the condition derived here is $\bigl|h\mathcal{H}^0\bigr|\leq 1$. But, using Eqs.~\eqref{eq:laxisym} again, \smash{$h\mathcal{H}^0=h\lambda_s^0/2+O(\varepsilon)=\eta+O(\varepsilon)$} on recalling definition~\eqref{eq:eta}, and so this condition is equivalent, as expected, to the condition $|\eta|\leq 1$ found in Section~\ref{sec:model}.

\subsubsection{Expansion of the boundary and incompressibility conditions}
To avoid drowning in a bath of indices, we shall use the block matrix notation for tensors~\cite{Note2} introduced above in the expansions that follow below. This means, however, that some care needs to be taken over distinguishing between tensor and matrix transposes and, in particular, over the bases with respect to which transposes of block matrices represent tensor transposes~\cite{Note2}. We shall use the following results repeatedly:
\begin{proposition}\label{prop3}
Let $\mathcal{B}$ and $\mathcal{B}'$ be bases of three-dimensional space with corresponding metrics $\mat{g}$, $\mat{G}$. A tensor $\tens{M}$ is represented by the matrix $\mat{M}$ with respect to $\mathcal{B}\otimes(\mathcal{B}')^\ast$. Then $\tens{M}^\top$ is represented by $\mat{G}^{-1}\mat{M}^\top\mat{g}$ with respect to $\mathcal{B}'\otimes\mathcal{B}^\ast$.
\end{proposition}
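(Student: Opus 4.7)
The plan is to work directly from the defining property of the tensor transpose, $\vec{u}\cdot\tens{M}\vec{v}=\tens{M}^\top\vec{u}\cdot\vec{v}$ for all $\vec{u},\vec{v}$, together with the mixed-basis tensor-product convention already in force in the paper (cf.~Eq.~\eqref{eq:gradundef}), namely $(\vec{a}\otimes\vec{b})\vec{w}=(\vec{b}\cdot\vec{w})\vec{a}$, and the standard duality $\vec{e_\alpha}\cdot\vec{e^\beta}={\delta_\alpha}^\beta$ with its analogue for $\mathcal{B}'$.

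Concretely, I would write $\tens{M}=M^\alpha{}_\beta\,\vec{e_\alpha}\otimes\vec{E^\beta}$ and posit the expansion $\tens{M}^\top=N^\alpha{}_\beta\,\vec{E_\alpha}\otimes\vec{e^\beta}$ with respect to $\mathcal{B}'\otimes\mathcal{B}^\ast$, where the matrix $\mat{N}=(N^\alpha{}_\beta)$ remains to be determined. For arbitrary $\vec{u}=u^\gamma\vec{e_\gamma}$ and $\vec{v}=v^\beta\vec{E_\beta}$, the tensor-product convention gives $\tens{M}\vec{v}=M^\alpha{}_\beta v^\beta\vec{e_\alpha}$ and $\tens{M}^\top\vec{u}=N^\alpha{}_\beta u^\beta\vec{E_\alpha}$, so taking dot products yields
\begin{align*}
\vec{u}\cdot\tens{M}\vec{v}&=g_{\gamma\alpha}u^\gamma M^\alpha{}_\beta v^\beta,&\tens{M}^\top\vec{u}\cdot\vec{v}&=N^\alpha{}_\gamma u^\gamma G_{\alpha\beta}v^\beta.
\end{align*}
Requiring equality for all $\vec{u}$ and $\vec{v}$ gives the componentwise identity $g_{\gamma\alpha}M^\alpha{}_\beta=N^\alpha{}_\gamma G_{\alpha\beta}$, which in ordinary matrix notation reads $\mat{g}\mat{M}=\mat{N}^\top\mat{G}$. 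Inverting and using the symmetry $\mat{g}=\mat{g}^\top$ then yields $\mat{N}=\mat{G}^{-1}\mat{M}^\top\mat{g}$, as claimed.

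The only real obstacle is careful bookkeeping: I must distinguish a matrix transpose, which simply permutes rows and columns without altering the covariant or contravariant character of either index, from a tensor transpose, which additionally swaps the tangent and dual slots and must therefore be accompanied by raising and lowering with the appropriate metric on each side. This is precisely what produces the asymmetric sandwich $\mat{G}^{-1}(\cdot)\mat{g}$ in the final formula. Once the conventions above are pinned down, the identification amounts to a one-line computation.
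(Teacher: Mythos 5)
Your proof is correct. The paper's own proof is more direct: it takes the factor-swap definition of the transpose, writes $\tens{M}^\top={M^\beta}_\alpha\,\vec{E^\alpha}\otimes\vec{e_\beta}$, and then merely converts the dual vectors back to the tangent bases via $\vec{E^\alpha}=G^{\alpha\gamma}\vec{E_\gamma}$ and $\vec{e_\beta}=g_{\beta\delta}\vec{e^\delta}$, so that the representing matrix $\mat{G}^{-1}\mat{M}^\top\mat{g}$ is read off in one line. You instead take the adjoint characterisation $\vec{u}\cdot\tens{M}\vec{v}=\tens{M}^\top\vec{u}\cdot\vec{v}$ as the definition, posit an unknown representing matrix $\mat{N}$, and solve the resulting identity $\mat{g}\mat{M}=\mat{N}^\top\mat{G}$ for $\mat{N}$; since the two definitions of the transpose coincide (as one checks on simple tensors $\vec{a}\otimes\vec{b}$), the arguments are equivalent, and both hinge on the same metric components that appear when passing between a basis and its dual. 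Your version has the small advantage of not presupposing the component formula for the transpose and of making transparent where the sandwich $\mat{G}^{-1}(\cdot)\,\mat{g}$ originates (one metric for each slot of the bilinear form), at the cost of an extra linear-algebra step --- inverting $\mat{G}$ and invoking the symmetry of both metrics, not only of $\mat{g}$ --- which the paper's one-line index manipulation avoids.
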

\begin{proof}
Let $\mathcal{B}=\{\vec{e_\alpha}\}$, $\mathcal{B}'=\{\vec{E_\alpha}\}$, so that $\tens{M}={M^{\alpha}}_\beta\vec{e_\alpha}\otimes\vec{E^\beta}$. By definition, $\smash{\tens{M}^\top={M^\beta}_{\alpha}\vec{E^\alpha}\otimes\vec{e_\beta}=G^{\alpha\gamma}{M^{\delta}}_\gamma g_{\delta\beta}\vec{E_\alpha}\otimes\vec{e^\beta}}$, as claimed.
\end{proof}
\begin{corollary}\label{cor2}
Let $\mathcal{B}=\{\vec{e_\alpha}\}\cup\{\vec{n}\}$ and $\mathcal{B}'=\{\vec{E_\alpha}\}\cup\{\vec{N}\}$ be bases of three-dimensional space, where $\vec{n},\vec{N}$ are the respective unit normals to the planes spanned by $\{\vec{e_\alpha}\}$, $\{\vec{E_\alpha}\}$. Let the metrics $\mat{g}$, $\mat{G}$ have components $g_{\alpha\beta}=\vec{e_\alpha}\cdot\vec{e_\beta}$, $G_{\alpha\beta}=\vec{E_\alpha}\cdot\vec{E_\beta}$. If $\tens{M}$ is a tensor such that
\begin{align*}
\tens{M}=\left(\begin{array}{c|c}
\mat{A}&\mat{b}\\\hline\mat{c}^\top&d                
\end{array}\right)\quad\left[\mathcal{B}\otimes\left(\mathcal{B}'\right)^\ast\right], 
\end{align*}
then
\begin{align*}
\tens{M}^\top=\left(\begin{array}{c|c}
\mat{G}^{-1}\mat{A}^\top\mat{g}&\mat{G}^{-1}\mat{c}\\\hline\mat{b}^\top\mat{g}&d                
\end{array}\right)\quad\left[\mathcal{B}'\otimes\mathcal{B}^\ast\right]. 
\end{align*}
\end{corollary}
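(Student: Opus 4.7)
Plan of proof for Corollary~\ref{cor2}. The strategy is to apply Proposition~\ref{prop3} to the full three-dimensional bases $\mathcal{B}$ and $\mathcal{B}'$, exploiting the block-diagonal structure that the unit-normal assumption imposes on their metrics, and then to read off the four blocks of the resulting product.

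First I would assemble the full metrics. Because $\vec{n}$ is a unit vector orthogonal to every $\vec{e_\alpha}$, the Gram matrix of $\mathcal{B}$ has the block form
\begin{align*}
\mat{g}_{\mathcal{B}} = \left(\begin{array}{c|c} \mat{g} & \mat{0} \\\hline \mat{0}^\top & 1 \end{array}\right),
\end{align*}
with inverse $\mat{g}_{\mathcal{B}}^{-1}$ having the same block shape but with $\mat{g}^{-1}$ in the upper-left corner; the analogous statements hold for $\mathcal{B}'$ with $\mat{G}$ replacing $\mat{g}$. The point of this step is that the $2\times 2$ metrics appearing in the statement of the corollary are exactly the nontrivial blocks of the full $3\times 3$ Gram matrices.

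Next I would apply Proposition~\ref{prop3} with these full metrics, which says that $\tens{M}^\top$ is represented, with respect to $\mathcal{B}'\otimes\mathcal{B}^\ast$, by $\mat{g}_{\mathcal{B}'}^{-1}\,\mat{M}^\top\,\mat{g}_{\mathcal{B}}$, where $\mat{M}^\top$ is the ordinary matrix transpose of the block matrix in the hypothesis, namely
\begin{align*}
\mat{M}^\top = \left(\begin{array}{c|c} \mat{A}^\top & \mat{c} \\\hline \mat{b}^\top & d \end{array}\right).
\end{align*}
The remaining work is the routine block multiplication $\mat{g}_{\mathcal{B}'}^{-1}\mat{M}^\top\mat{g}_{\mathcal{B}}$; because both outer factors are block-diagonal, the four blocks decouple and yield $\mat{G}^{-1}\mat{A}^\top\mat{g}$, $\mat{G}^{-1}\mat{c}$, $\mat{b}^\top\mat{g}$, and $d$ in the four respective positions, which is the asserted formula.

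There is no real obstacle here: the only thing one needs to check carefully is that the upper-right block transforms as a contravariant pair raised by $\mat{G}^{-1}$ alone (with no $\mat{g}$ factor because the second index is associated with the unit normal $\vec{n}$, which contributes $1$ to $\mat{g}_{\mathcal{B}}$), and symmetrically that the lower-left block picks up $\mat{g}$ on the right but no metric on the left. This is exactly what the block-diagonal sandwich delivers automatically, so the corollary follows immediately from Proposition~\ref{prop3} applied in the augmented basis.
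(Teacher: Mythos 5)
Your proposal is correct and follows essentially the same route as the paper: the paper's proof also augments the two-dimensional metrics to block-diagonal $3\times 3$ Gram matrices (with $1$ in the normal slot), applies Proposition~\ref{prop3}, and reads off the blocks of $\mat{g}_{\mathcal{B}'}^{-1}\mat{M}^\top\mat{g}_{\mathcal{B}}$. Nothing is missing.
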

\begin{proof}
Proposition~\ref{prop3} implies that $\tens{M}^\top$ is represented, with respect to $\mathcal{B}'\otimes\mathcal{B}^\ast$, by
\begin{align*}
\left(\begin{array}{c|c}
\mat{G}&\mat{0}\\\hline\mat{0}^\top&1                
\end{array}\right)^{-1}\left(\begin{array}{c|c}
\mat{A}&\mat{b}\\\hline\mat{c}^\top&d                
\end{array}\right)^\top \left(\begin{array}{c|c}
\mat{g}&\mat{0}\\\hline\mat{0}^\top&1                
\end{array}\right)=\left(\begin{array}{c|c}
\mat{G}^{-1}\mat{A}^\top\mat{g}&\mat{G}^{-1}\mat{c}\\\hline\mat{b}^\top\mat{g}&d                
\end{array}\right),
\end{align*}
which completes the proof.
\end{proof}

To expand the boundary and incompressibility conditions, we posit, analogously to Eqs.~\eqref{eq:expansions},
\begin{align}
&Z=Z_{(0)}+\varepsilon Z_{(1)}+O\bigl(\varepsilon^2\bigr) ,&&\mat{S}=\mat{S_{(0)}}+O(\varepsilon).\label{eq:expansionsA}
\end{align}
\paragraph*{Expansion at order $O(1)$.} On inserting the rescalings~\eqref{eq:scaledvars} into Eqs.~\eqref{eq:Ab} and \eqref{eq:Ab0}, we obtain
\begin{subequations}\label{eq:A0AA}
\begin{align}
&\mat{A^0}=\mat{I}-Z^0\mat{\slambda^0},&&\tilde{\mat{A}}=\tilde{\mat{A}}_{\mat{(0)}}+O(\varepsilon)&&\text{with }\tilde{\mat{A}}_{\mat{(0)}}=\mat{I}-Z_{(0)}\mat{\slambda^0},\label{eq:AA}
\end{align}
and
\begin{align}
\mat{b^0}&=O(\varepsilon),&\tilde{\mat{b}}&=\mat{\slambda^0}\mat{S_{(0)}}+O(\varepsilon),\label{eq:bb0}
\end{align}
\end{subequations}
and thence, from Eq.~\eqref{eq:FBB0},
\begin{align}
\tens{F}=\left(\begin{array}{c|c}
\mat{B}&\mat{v}\\
\hline
\mat{w}^\top&c
\end{array}\right)+O(\varepsilon),\label{eq:F0A}
\end{align}
where, with dashes now denoting differentiation with respect to $Z^0$,
\begin{align}
&\mat{B}=\tilde{\mat{A}}_{\mat{(0)}}\bigl(\mat{A^0}\bigr)^{-1},\;\mat{v}=\mat{S'_{\smash{(0)}}},\;\mat{w}=\bigl(\mat{A^0}\bigr)^{-\top}\mat{g^0}\mat{\slambda^0}\mat{S_{(0)}},\;c=Z_{(0)}',\label{eq:Fdefs}
\end{align}
since $\tilde{\mat{g}}=\mat{g^0}+O(\varepsilon)$ from Eq.~\eqref{eq:uscale}. Recalling the definitions $\tilde{g}=\det{\tilde{\mat{g}}}$, ${g^0=\det{\mat{g^0}}}$ introduced earlier, this also implies $\tilde{g}/g^0=1+O(\varepsilon)$. Using Proposition~\ref{prop1} and on computing the determinant of the block matrix~\cite{matrix} in Eq.~\eqref{eq:F0A}, the incompressibility condition thus becomes
\begin{align}
1=\det{\tens{F}}=\left(\det{\mat{B}}\right)\left(c-\mat{w}^\top\mat{B}^{-1}\mat{v}\right)+O(\varepsilon). \label{eq:detF0A}
\end{align}
Next, on substituting the first of Eqs.~\eqref{eq:bb0} into Eq.~\eqref{eq:F0inv} and using Corollary~\ref{cor2},
\begin{align}
\bigl(\tens{F^0}\bigr)^{-\top}=\left(\begin{array}{c|c}
O(1)&O(\varepsilon)\\[0.5mm]
\hline
\vphantom{A^{A^{A^A}}}O(1)&\bigl({\zeta^0}_{,\zeta}\bigr)^{-1}
\end{array}\right).\label{eq:F0invT}
\end{align}
Moreover, Eqs.~\eqref{eq:Hdef} yield $\mat{H}=\mat{I}+O(\varepsilon)$ using Eqs.~\eqref{eq:Gmetric}, so, on substituting Eqs.~\eqref{eq:A0AA} into Eq.~\eqref{eq:Fg2}, and using definitions~\eqref{eq:Fdefs},
\begin{subequations}
\begin{align}
\btilde{\tens{F}}=\left(\begin{array}{c|c}
\mat{BA^0}&{\zeta^0}_{,\zeta}\mat{v}\\[0.5mm]
\hline
\mat{w}^\top\mat{A^0}&{\zeta^0}_{,\zeta}c
\end{array}\right) +O(\varepsilon).
\end{align}
Hence, using further properties of block matrices~\cite{matrix} and, again, $\tilde{\mat{g}}=\mat{g^0}+O(\varepsilon)$ and Corollary~\ref{cor2},
\begin{align}
\btilde{\tens{F}}^{-\top}\!\!=\!\left(\begin{array}{c|c}
O(1)&-\bigl({\zeta^0}_{,\zeta}\bigr)^{-1}\bigl(\mat{g^0}\bigr)^{-1}\mat{B}^{-\top}\mat{w}\left(c\!-\!\mat{w}^\top\mat{B}^{-1}\mat{v}\right)^{-1}\!\!\\[0.5mm]
\hline
\vphantom{A^{A^{A^A}}}O(1)&\bigl({\zeta^0}_{,\zeta}\bigr)^{-1}\left(c-\mat{w}^\top\mat{B}^{-1}\mat{v}\right)^{-1}
\end{array}\right)\!+\!O(\varepsilon).\label{eq:FginvT}
\end{align}
\end{subequations}
We now write, as we have done previously in Eqs.~\eqref{eq:Qpexpansions}, 
\begin{align}
&\tens{Q}=\tens{Q_{(0)}}+\varepsilon\tens{Q_{(1)}}+O\bigl(\varepsilon^2\bigr),&&p=p_{(0)}+O(\varepsilon).
\end{align}
Inserting Eqs.~\eqref{eq:F0A}, \eqref{eq:F0invT}, and \eqref{eq:FginvT} into definition~\eqref{eq:P}, we obtain
\begin{align}
\hspace{-2mm}\tens{Q_{(0)}}\vec{n}=\bigl({\zeta^0}_{,\zeta}\bigr)^{-1}\left(\begin{array}{c}
\mat{v}+p_{(0)}\bigl(\mat{g^0}\bigr)^{-1}\mat{B}^{-\top}\mat{w}\left(c-\mat{w}^\top\mat{B}^{-1}\mat{v}\right)^{-1}\\[0.5mm]
\hline
\vphantom{A^{A^{A^A}}}c-p_{(0)}\left(c-\mat{w}^\top\mat{B}^{-1}\mat{v}\right)^{-1}\label{eq:Q0NA}
\end{array}\right). 
\end{align}
Now, as in Section~\ref{sec:model}, the governing equation \eqref{eq:Cauchy} of three-dimensional elasticity is, at leading order, \smash{$(\tens{Q_{(0)}}\vec{n})_{,\zeta}=\vec{0}$}, and hence $\tens{Q_{(0)}}\vec{n}$ is independent of $\zeta$. The boundary conditions therefore become $\vec{0}=\tens{Q^\bpm}\vec{n^\pm}=\tens{Q_{(0)}}\vec{n}+O(\varepsilon)$, where we have used Eq.~\eqref{eq:NAexp}. It follows that $\smash{\tens{Q_{(0)}}}\vec{n}\equiv\vec{0}$ as in Section~\ref{sec:model}.

From Eqs.~\eqref{eq:Fdefs}, $\smash{\mat{w}^\top\mat{B}^{-1}=\mat{S}^\top_{\smash{\mat{(0)}}}\mat{D}}$ with $\mat{D}=\smash{\bigl(\mat{\slambda^0}\bigr)^\top}\mat{g^0}\tilde{\mat{A}}^{-1}_{\smash{\mat{(0)}}}$, so that $\mat{B}^{-\top}\mat{w}=\mat{D}^{\top}\mat{S_{(0)}}$. Eqs.~\eqref{eq:detF0A} and \eqref{eq:Q0NA} then yield the leading-order incompressibility and boundary conditions,
\begin{subequations}\label{eq:loA}
\begin{align}
Z_{(0)}'-\mat{S}^\top_{\mat{(0)}}\mat{D}\mat{S'_{(0)}}&=\left(\det{\mat{B}}\right)^{-1},
\end{align}
and hence
\begin{align}
\mat{S'_{(0)}}\!+p_{(0)}(\det{\mat{B}})\bigl(\mat{g^0}\bigr)^{-1}\mat{D}^\top\mat{S_{(0)}}&=\mat{0},&Z_{(0)}'\!-p_{(0)}(\det{\mat{B}})&=0.\label{eq:dS0dZ0}
\end{align}
\end{subequations}
In particular, noting that $\mat{S'_{\smash{\mat{(0)}}}}\hspace{-2mm}^\top\mat{D}^\top\mat{S_{(0)}}=\mat{S}^{\top}_{\smash{\mat{(0)}}}\mat{D}\mat{S'_{\smash{\mat{(0)}}}}$ since this expression is a scalar,
\begin{align}
\mat{S'_{\smash{\mat{(0)}}}}\hspace{-2mm}^\top\mat{g^0}\mat{S'_{\smash{\mat{(0)}}}}\!&=-p_{(0)}(\det{\mat{B}})\mat{S'_{\smash{\mat{(0)}}}}\hspace{-2mm}^\top\mat{D}^\top\mat{S_{(0)}}\!\!=\!-p_{(0)}(\det{\mat{B}})\mat{S}^{\top}_{\smash{\mat{(0)}}}\mat{D}\mat{S'_{\smash{\mat{(0)}}}}\nonumber\\
&=-p_{(0)}(\det{\mat{B}})\left[Z_{\smash{(0)}}'\!-(\det{\mat{B}})^{-1}\right]=p_{(0)}\!-\bigl(Z_{\smash{(0)}}'\bigr)^2.\label{eq:invA}
\end{align}
Moreover, from Eqs.~\eqref{eq:A0AA} and definition~\eqref{eq:Fdefs} and using Lemma~\ref{lemma1}, we obtain
\begin{align}
\det{\mat{B}}=\dfrac{\det{\tilde{\mat{A}}_{\mat{(0)}}}}{\det{\mat{A^0}}}=\dfrac{1-2\mathcal{H}^0Z_{(0)}+\mathcal{K}^0\bigl(Z_{(0)}\bigr)^2}{1-2\mathcal{H}^0Z^0+\mathcal{K}^0\bigl(Z^0\bigr)^2}. 
\end{align}
Substituting in the second of Eqs.~\eqref{eq:dS0dZ0} and integrating,
\begin{align}
\tanh^{-1}{\dfrac{\mathcal{K}^0Z_{(0)}-\mathcal{H}^0}{\sqrt{\bigl(\mathcal{H}^0\bigr)^2-\mathcal{K}^0}}}=p_{(0)}\tanh^{-1}{\dfrac{\mathcal{K}^0Z^0-\mathcal{H}^0}{\sqrt{\bigl(\mathcal{H}^0\bigr)^2-\mathcal{K}^0}}}+t,\label{eq:Z0int}
\end{align}
in which $t$ is a constant of integration; the singular cases ${\mathcal{K}^0=0}$, $\mathcal{K}^0=\mathcal{H}^0=0$, or $\mathcal{K}^0=\smash{\bigl(\mathcal{H}^0\bigr)^2}$ can be dealt with similarly, but we will not discuss these in detail.

Next, by definition, on the midsurface $Z^0=0$, we have $Z_{(0)}=0$ and $\mat{S_{(0)}}=\mat{0}$. Thus $\det{\mat{B}}=1$ on $Z^0=0$, and hence, successively from Eqs.~\eqref{eq:loA}, $Z_{\smash{(0)}}'=0$, $\mat{S'_{\smash{\mat{(0)}}}}=\mat{0}$ on $Z^0=0$, and hence $p_{(0)}=1$ (which is constant). Then taking \smash{$Z^0=Z_{(0)}=0$} in Eq.~\eqref{eq:Z0int} yields $t=0$; the same equation then immediately yields \smash{$Z_{(0)}\equiv Z^0$}. Finally, Eq.~\eqref{eq:invA} yields $\mat{S'_{\smash{\mat{(0)}}}}\hspace{-2mm}^\top\mat{g^0}\mat{S'_{\smash{\mat{(0)}}}}=0$, so $\mat{S'_{\smash{(0)}}}\equiv\mat{0}$ since $\mat{g^0}$ is positive definite. Now $\mat{S_{(0)}}=\mat{0}$ on $Z^0=0$, so this implies that $\mat{S_{(0)}}\equiv\mat{0}$, which proves the Kirchhoff ``hypothesis''~\cite{audoly} for general large bending deformations.

For axisymmetric deformations, this argument provides an alternative to the direct integration of the leading-order equations in Section~\ref{sec:model}.

\paragraph*{Expansion at order $O(\varepsilon)$.} We now expand further. In particular, extending Eqs.~\eqref{eq:AA}, we find
\begin{align}
\tilde{\mat{A}}=\mat{A^0}-\varepsilon\bigl(Z^0\mat{L}+Z_{(1)}\mat{\slambda^0}\bigr)+O\bigl(\varepsilon^2\bigr). 
\end{align}
The leading-order solution also shows that $\mat{b^0}$, $\tilde{\mat{b}}$, $\tilde{\svarsigma}$ are all at the most of order $O(\varepsilon)$, whence
\begin{align}
\tens{F}=\mat{I}+\varepsilon\left(\begin{array}{c|c}
-\bigl(Z^0\mat{L}+Z_{(1)}\mat{\slambda^0}\bigr)\bigl(\mat{A^0}\bigr)^{-1}&O(1)\\[0.5mm]
\hline
O(1)& Z_{(1)}'
\end{array}\right)+O\bigl(\varepsilon^2\bigr),\label{eq:F1A}
\end{align}
from Eq.~\eqref{eq:FBB0}. Using Lemma~\ref{lemma1} and Eq.~\eqref{eq:uscale}, we also find
\begin{align}
\sqrt{\dfrac{\tilde{g}}{g^0}}&=\left(1+2\varepsilon\tr{\mat{E}}+4\varepsilon^2\det{\mat{E}}\right)^{1/2}\nonumber\\
&=1+\varepsilon\tr{\mat{E}}+\dfrac{\varepsilon^2}{2}\left[4\det{\mat{E}}-(\tr{\mat{E}})^2\right]+O\bigl(\varepsilon^3\bigr).\label{eq:gg0}
\end{align}
Accordingly, from Proposition~\ref{prop1} and using Lemma~\ref{lemma1} again,
\begin{align}
\det{\tens{F}}&=1+\varepsilon\left\{Z_{(1)}'+\tr{\mat{E}}-\tr{\left[\bigl(Z^0\mat{L}+Z_{(1)}\mat{\slambda^0}\bigr)\bigl(\mat{A^0}\bigr)^{-1}\right]}\right\} \nonumber\\
&\quad+O\bigl(\varepsilon^2\bigr). \label{eq:detF1A}
\end{align}
The incompressibility condition $\det{\tens{F}}=1$ thus yields, at order $O(\varepsilon)$, an ordinary differential equation for $Z_{(1)}$. To make further progress, we shall need the following result:
\begin{lemma}\label{lemma2}
Let $\mat{M}$ be a $2\times 2$ matrix, and $x$ be a scalar. Then
\begin{align*}
(\mat{I}+x\mat{M})^{-1}=\dfrac{\mat{I}+x\adj{\mat{M}}}{1+x\tr{\mat{M}}+x^2\det{\mat{M}}}. 
\end{align*}
\end{lemma}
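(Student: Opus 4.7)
The plan is to prove the formula by verifying that the proposed right-hand side, when multiplied by $\mat{I}+x\mat{M}$, yields the identity matrix, so that invertibility of $\mat{I}+x\mat{M}$ is equivalent to non-vanishing of the scalar denominator (which, by Lemma~\ref{lemma1}, is $\det{(\mat{I}+x\mat{M})}$).

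The key algebraic inputs are the two standard identities for $2\times 2$ matrices: $\mat{M}\adj{\mat{M}}=\adj{\mat{M}}\mat{M}=\det{\mat{M}}\,\mat{I}$, and $\mat{M}+\adj{\mat{M}}=\tr{\mat{M}}\,\mat{I}$; the latter is the $2\times 2$ Cayley--Hamilton identity and is an immediate componentwise check, since, writing out the adjugate of a $2\times 2$ matrix explicitly, the off-diagonal entries of $\mat{M}+\adj{\mat{M}}$ cancel and both diagonal entries equal $\tr{\mat{M}}$.

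Using these, I would compute
\begin{align*}
(\mat{I}+x\mat{M})(\mat{I}+x\adj{\mat{M}})&=\mat{I}+x\bigl(\mat{M}+\adj{\mat{M}}\bigr)+x^2\mat{M}\adj{\mat{M}}\\
&=\bigl[1+x\tr{\mat{M}}+x^2\det{\mat{M}}\bigr]\mat{I},
\end{align*}
and recognise the bracketed scalar as $\det{(\mat{I}+x\mat{M})}$ by Lemma~\ref{lemma1}. Dividing by this scalar (which is nonzero precisely when $\mat{I}+x\mat{M}$ is invertible) yields the stated formula.

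There is no real obstacle; the only thing to be careful about is noting that the identity holds in the sense of rational functions of $x$, with the formula giving the inverse whenever the denominator does not vanish. If preferred, the same result can be obtained directly by writing $\mat{M}=(m_{ij})$, forming $\adj{(\mat{I}+x\mat{M})}$ componentwise, and observing that it equals $\mat{I}+x\adj{\mat{M}}$, then invoking the standard $2\times 2$ inversion formula $\mat{A}^{-1}=\adj{\mat{A}}/\det{\mat{A}}$ together with Lemma~\ref{lemma1}; however, the product-based verification above avoids any indexed computation.
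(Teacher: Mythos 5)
Your proof is correct, and it takes a mildly different route from the paper's. The paper applies the inversion formula $(\mat{I}+x\mat{M})^{-1}=\adj{(\mat{I}+x\mat{M})}/\det{(\mat{I}+x\mat{M})}$, evaluates the determinant via Lemma~\ref{lemma1}, and then shows by an explicit componentwise computation that the adjugate is affine in $x$, i.e. $\adj{(\mat{I}+x\mat{M})}=\mat{I}+x\adj{\mat{M}}$; this is exactly the ``indexed'' alternative you sketch in your closing remark. Your main argument instead verifies the claimed inverse by multiplying it against $\mat{I}+x\mat{M}$ and using the two structural identities $\mat{M}\adj{\mat{M}}=(\det{\mat{M}})\mat{I}$ and the $2\times 2$ Cayley--Hamilton relation $\mat{M}+\adj{\mat{M}}=(\tr{\mat{M}})\mat{I}$ (the latter is in fact the same observation the paper uses later, in its proof of Lemma~3). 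What your route buys is a coordinate-free verification that makes the role of Cayley--Hamilton explicit and needs no entrywise bookkeeping, together with a cleaner statement of when the formula is valid (nonvanishing of the denominator, which by Lemma~\ref{lemma1} is $\det{(\mat{I}+x\mat{M})}$); what the paper's route buys is that the adjugate-linearity fact $\adj{(\mat{I}+x\mat{M})}=\mat{I}+x\adj{\mat{M}}$ is exhibited directly, which is the structural content reused in the subsequent trace identities. Both arguments are complete and equally elementary.
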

\begin{proof}
By definition of the adjugate matrix,
\begin{align*}
(\mat{I}+x\mat{M})^{-1}=\dfrac{\adj{(\mat{I}+x\mat{M})}}{\det{(\mat{I}+x\mat{M})}}=\dfrac{\adj{(\mat{I}+x\mat{M})}}{1+x\tr{\mat{M}}+x^2\det{\mat{M}}}, 
\end{align*}
using Lemma~\ref{lemma1}. But, by direct computation,
\begin{align*}
\adj{(\mat{I}+x\mat{M})}&=\left(\begin{array}{cc}
1+xM_{22}&-M_{12}\\
-M_{21}&1+xM_{11}
\end{array}\right)\\
&=\left(\begin{array}{cc}
1&0\\
0&1
\end{array}\right)+x\left(\begin{array}{cc}
M_{22}&-M_{12}\\
-M_{21}&M_{11}
\end{array}\right)=\mat{I}+x\adj{\mat{M}}. 
\end{align*}
The result follows.
\end{proof}

\noindent On multiplying this result by a general $2\times 2$ matrix $\mat{N}$ and taking the trace on both sides, we obtain

\begin{corollary}\label{cor1}
Let $\mat{M},\mat{N}$ be $2\times 2$ matrices, and let $x$ be a scalar. The following equality holds:
\begin{align*}
\tr{\left[\mat{N}(\mat{I}+x\mat{M})^{-1}\right]}=\dfrac{\tr{\mat{N}}+x\tr{(\mat{N}\adj{\mat{M}})}}{1+x\tr{\mat{M}}+x^2\det{\mat{M}}}.
\end{align*}
\end{corollary}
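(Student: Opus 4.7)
The statement is an immediate corollary of Lemma~\ref{lemma2}, and the paper essentially tells us how to proceed: left-multiply the identity of Lemma~\ref{lemma2} by $\mat{N}$ and take the trace. I would proceed as follows.

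First, starting from Lemma~\ref{lemma2}, I would left-multiply both sides by the $2\times 2$ matrix $\mat{N}$. Since the denominator $1+x\tr{\mat{M}}+x^2\det{\mat{M}}$ is a scalar, it factors out cleanly, yielding
\begin{align*}
\mat{N}(\mat{I}+x\mat{M})^{-1}=\dfrac{\mat{N}+x\mat{N}\adj{\mat{M}}}{1+x\tr{\mat{M}}+x^2\det{\mat{M}}}.
\end{align*}
Second, I would apply the trace to both sides. The trace is linear, so it distributes over the numerator, giving
\begin{align*}
\tr{\bigl[\mat{N}(\mat{I}+x\mat{M})^{-1}\bigr]}=\dfrac{\tr{\mat{N}}+x\tr{(\mat{N}\adj{\mat{M}})}}{1+x\tr{\mat{M}}+x^2\det{\mat{M}}},
\end{align*}
which is the claimed identity.

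The only implicit assumption is that $\mat{I}+x\mat{M}$ is invertible, i.e. that the denominator $1+x\tr{\mat{M}}+x^2\det{\mat{M}}=\det{(\mat{I}+x\mat{M})}$ (by Lemma~\ref{lemma1}) is nonzero; this is exactly the hypothesis under which the left-hand side of the identity makes sense. There is no genuine obstacle here: the entire argument is one line of algebra from Lemma~\ref{lemma2} plus linearity of the trace, and no further case analysis or limiting argument is needed.
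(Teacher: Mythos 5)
Your proof is correct and is exactly the paper's argument: the corollary is obtained by multiplying the identity of Lemma~\ref{lemma2} by $\mat{N}$ and taking the trace, using linearity. Your remark about invertibility of $\mat{I}+x\mat{M}$ is a fine (implicit) hypothesis but adds nothing beyond what the paper assumes.
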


\noindent We shall also need the following observation:

\begin{lemma}\label{lemma3}
Let $\mat{M},\mat{N}$ be $2\times 2$ matrices. Then
\begin{align*}
\tr{(\mat{N}\adj{\mat{M}})}=\tr{\mat{M}}\tr{\mat{N}}-\tr{(\mat{MN})}\;\text{and}\;\tr{(\mat{M}\adj{\mat{M}})}=2\det{\mat{M}}.\\
\end{align*}
\end{lemma}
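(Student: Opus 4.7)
The plan is to reduce both identities to the well-known $2\times 2$ matrix identity
\begin{align*}
\mat{M}+\adj{\mat{M}}=(\tr{\mat{M}})\mat{I},
\end{align*}
which is easily verified by direct computation using the explicit form
\begin{align*}
\adj{\mat{M}}=\left(\begin{array}{cc}M_{22}&-M_{12}\\-M_{21}&M_{11}\end{array}\right)
\end{align*}
employed in the proof of Lemma~\ref{lemma2}. Indeed, adding this matrix to $\mat{M}$ just leaves $(M_{11}+M_{22})\mat{I}=(\tr{\mat{M}})\mat{I}$ on the diagonal, with the off-diagonal entries cancelling.

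For the first claim, I would rearrange the identity as $\adj{\mat{M}}=(\tr{\mat{M}})\mat{I}-\mat{M}$, multiply on the left by $\mat{N}$, and take traces to obtain
\begin{align*}
\tr{(\mat{N}\adj{\mat{M}})}=(\tr{\mat{M}})\tr{\mat{N}}-\tr{(\mat{NM})},
\end{align*}
whence the result follows from the cyclic invariance $\tr{(\mat{NM})}=\tr{(\mat{MN})}$ of the trace.

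For the second claim, there is a direct route: a short block computation as in the proof of Lemma~\ref{lemma2} shows that $\mat{M}\adj{\mat{M}}=(\det{\mat{M}})\mat{I}$ (this is just the standard adjugate identity that underlies Cramer's rule), and taking traces gives $\tr{(\mat{M}\adj{\mat{M}})}=2\det{\mat{M}}$ because $\tr{\mat{I}}=2$. Alternatively, one can specialise the first identity just proved to $\mat{N}=\mat{M}$, obtaining $\tr{(\mat{M}\adj{\mat{M}})}=(\tr{\mat{M}})^2-\tr{(\mat{M}^2)}$, and then invoke the $2\times 2$ Cayley--Hamilton identity (or Newton's identity) $(\tr{\mat{M}})^2-\tr{(\mat{M}^2)}=2\det{\mat{M}}$. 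Neither step presents any real obstacle; the entire lemma is essentially a consequence of the fact that, for a $2\times 2$ matrix, $\adj{\mat{M}}$ is an affine function of $\mat{M}$ with coefficients determined by $\tr{\mat{M}}$ and the identity $\mat{M}\adj{\mat{M}}=(\det{\mat{M}})\mat{I}$.
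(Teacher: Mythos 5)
Your proposal is correct and follows essentially the same route as the paper: both rest on the identity $\mat{M}+\adj{\mat{M}}=(\tr{\mat{M}})\mat{I}$ (multiplied by $\mat{N}$ and traced, with cyclicity of the trace) for the first claim, and on $\mat{M}\adj{\mat{M}}=(\det{\mat{M}})\mat{I}$ with $\tr{\mat{I}}=2$ for the second. The Cayley--Hamilton alternative you mention is a harmless variant but not needed.
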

\begin{proof}
Notice that $\mat{M}+\adj{\mat{M}}=(\tr{\mat{M}})\mat{I}$ since
\begin{align*}
\left(\begin{array}{cc}
M_{11}&M_{12}\\
M_{21}&M_{22}
\end{array}\right)+\left(\begin{array}{cc}
M_{22}&-M_{12}\\
-M_{21}&M_{11}
\end{array}\right)=(M_{11}+M_{22})\left(\begin{array}{cc}
1&0\\
0&1
\end{array}\right).
\end{align*}
Hence $\mat{NM}+\mat{N}\adj{\mat{M}}=\left(\tr{\mat{M}}\right)\mat{N}$ on multiplication by $\mat{N}$. Taking the trace gives the first result. The second result follows from the definition of the adjugate, $\mat{M}\adj{\mat{M}}=(\det{\mat{M}})\mat{I}$, by taking the trace and noting that $\tr{\mat{I}}=2$.
\end{proof}
Combining Corollary~\ref{cor1} and Lemma~\ref{lemma3}, and recalling the definitions $\tr{\mat{\slambda^0}}=2\mathcal{H}^0$, $\det{\mat{\slambda^0}}=\mathcal{K}^0$, we find the differential equation for $Z_{(1)}$ resulting from Eq.~\eqref{eq:detF1A} to be
\begin{widetext}

\vspace{-5mm}
\begin{align}
Z_{(1)}'+\left(\dfrac{-2\mathcal{H}^0+2\mathcal{K}^0Z^0}{1-2\mathcal{H}^0Z^0+\mathcal{K}^0\bigl(Z^0\bigr)^2}\right)Z_{(1)}+\tr{\mat{E}}-\dfrac{Z^0\tr{\mat{L}}-\bigl(Z^0\bigr)^2\bigl[2\mathcal{H}^0\tr{\mat{L}}-\tr{\bigl(\mat{L\slambda^0}\bigr)}\bigr]}{1-2\mathcal{H}^0Z^0+\mathcal{K}^0\bigl(Z^0\bigr)^2}=0. 
\end{align}
Integrating and imposing $Z_{(1)}=0$ at $Z^0=0$, we obtain
\begin{align}
Z_{(1)}=-\dfrac{\left[Z^0-\mathcal{H}^0\bigl(Z^0\bigr)^2+\tfrac{1}{3}\mathcal{K}^0\bigl(Z^0\bigr)^3\right]\tr{\mat{E}}-\tfrac{1}{2}\bigl(Z^0\bigr)^2\tr{\mat{L}}+\tfrac{1}{3}\bigl(Z^0\bigr)^3\bigl[2\mathcal{H}^0\tr{\mat{L}}-\tr{\bigl(\mat{L\slambda^0}\bigr)}\bigr]}{1-2\mathcal{H}^0Z^0+\mathcal{K}^0\bigl(Z^0\bigr)^2}.\label{eq:Z1A}
\end{align}

\vspace{-3mm}
\end{widetext}
\paragraph*{Expansion at order $O\bigl(\varepsilon^2\bigr)$.} From Eq.~\eqref{eq:F1A}, we may write
\begin{align}
\tens{F}=\left(\begin{array}{c|c}
\mat{I}\!+\!\varepsilon\mat{B_{(1)}}\!+\!\varepsilon^2\mat{B_{(2)}}+O\bigl(\varepsilon^3\bigr)&\varepsilon\mat{v_{(1)}}+O\bigl(\varepsilon^2\bigr)\\[0.5mm]
\hline
\varepsilon\mat{w}^\top_{\smash{\mat{(1)}}}+O\bigl(\varepsilon^2\bigr)&1\!+\!\varepsilon c_{(1)}\!+\!\varepsilon^2c_{(2)}+O\bigl(\varepsilon^3\bigr)
\end{array}\right),\label{eq:F2A}
\end{align}
where, in particular and using Lemma~\ref{lemma2},
\begin{align}
\mat{B_{(1)}}=-\dfrac{\bigl(Z^0\mat{L}+Z_{(1)}\mat{\slambda^0}\bigr)\bigl(\mat{I}-Z^0\adj{\mat{\slambda^0}}\bigr)}{1-2\mathcal{H}^0Z^0+\mathcal{K}^0\bigl(Z^0\bigr)^2},\label{eq:B1A} 
\end{align}
in which $Z_{(1)}$ is given by Eq.~\eqref{eq:Z1A}. Explicit expressions for the terms $\mat{B_{(2)}},\mat{v_{(1)}},\mat{w_{(1)}},c_{(1)},c_{(2)}$ of the formal expansion~\eqref{eq:F2A} could be obtained in terms of the expansions defined in Eqs.~\eqref{eq:expansionsA}, but will turn out not to be required. 

From the general expression for the determinant of block matrices~\cite{matrix} and Eq.~\eqref{eq:F2A},
\begin{subequations}
\begin{align}
\det{\mat{F}}&=\bigl(1+\varepsilon c_{(1)}+\varepsilon^2c_{(2)}\bigr)\det{\left[\mat{I}+\varepsilon\mat{B_{(1)}}+\varepsilon^2\mat{B_{(2)}}\right.}\nonumber\\
&\hspace{25mm}\left.-\left(\varepsilon\mat{w}^\top_{\smash{\mat{(1)}}}\right)\left(\varepsilon\mat{v_{(1)}}\right)\right]+O\bigl(\varepsilon^3\bigr).
\end{align}
Expanding this using Lemma~\ref{lemma1}, and using Proposition~\ref{prop1} and Eq.~\eqref{eq:gg0}, we deduce that
\begin{align}
\det{\tens{F}}&=1+\varepsilon\bigl(\tr{\mat{B_{(1)}}}+\tr{\mat{E}}+c_{(1)}\bigr)\nonumber\\
&\quad+\varepsilon^2\bigl[\tr{\mat{B_{(2)}}}+c_{(2)}+c_{(1)}\tr{\mat{B_{(1)}}}+\det{\mat{B_{(1)}}}-\mat{w}^\top_{\smash{\mat{(1)}}}\mat{v_{(1)}}\nonumber\\
&\hspace{11mm}+\bigl(\tr{\mat{B_{(1)}}}+c_{(1)}\bigr)\tr{\mat{E}}+2\det{\mat{E}}-\tfrac{1}{2}(\tr{\mat{E}})^2\bigr]\nonumber\\
&\quad+O\bigl(\varepsilon^3\bigr).\label{eq:detF2A}
\end{align}
\end{subequations}
Next we introduce a formal expansion of the intrinsic deformation gradient,
\begin{subequations}
\begin{align}
\tens{F^0}=\left(\begin{array}{c|c}
\mat{B^0_{\smash{\mathsf{(0)}}}}+O(\varepsilon)&\mat{0}\\[0.5mm]
\hline
\vphantom{A^{A^{A^A}}}\varepsilon\mat{w^0_{\smash{\mat{(1)}}}}^{\hspace{-2mm}\top}+O\bigl(\varepsilon^2\bigr)&c^0_{\smash{(0)}}+O(\varepsilon)
\end{array}\right), \label{eq:F02A}
\end{align}
from Eq.~\eqref{eq:F0bm} and using the first of Eqs.~\eqref{eq:bb0}. In this expansion, $c^0_{\smash{(0)}}=\smash{{\zeta^0}_{\smash{,\zeta}}}$, which is positive by assumption. The values of the expansion terms $\smash{\mat{B^0_{\smash{\mat{(0)}}}}}$ and $\smash{\mat{w^0_{\smash{\mat{(1)}}}}}$ will turn out to be of no consequence. In particular, using Corollary~\ref{cor2},
\begin{align}
\bigl(\tens{F^0}\bigr)^{-\top}\!\!=\left(\begin{array}{c|c}
\hspace{-2mm}\begin{array}{l}\bigl(\mat{g^0}\bigr)^{-1}\bigl(\mat{B^0_{\smash{\mat{(0)}}}}\bigr)^{-\top}\mat{g}\\\quad\phantom{.}+O(\varepsilon)\end{array}\!\!&-\varepsilon\dfrac{\bigl(\mat{g^0}\bigr)^{-1}\bigl(\mat{B^0_{\smash{\mathsf{(0)}}}}\bigr)^{-\top}\mat{w^0_{\smash{\mat{(1)}}}}}{c^0_{(0)}}\!+\!O\bigl(\varepsilon^2\bigr)\!\!\\[4mm]
\hline
\vphantom{A^{A^{A^A}}}\mat{0}^\top&\dfrac{1\vphantom{A^A}}{c^0_{(0)}}+O(\varepsilon)\label{eq:F02invT}
\end{array}\right).
\end{align}
\end{subequations}
Moreover, from Eqs.~\eqref{eq:F2A} and \eqref{eq:F02A},
\begin{subequations}
\begin{align}
\btilde{\tens{F}}=\left(\begin{array}{c|c}
\mat{B^0_{\smash{\mathsf{(0)}}}}+O(\varepsilon)&\varepsilon c_0^{\smash{(0)}}\mat{v_{(1)}}+O\bigl(\varepsilon^2\bigr)\\[0.5mm]
\hline
\vphantom{A^{A^{A^{A^A}}}}\varepsilon\Bigl(\mat{w}^\top_{\smash{\mat{(1)}}}\mat{B^0_{\smash{\mat{(0)}}}}+\mat{w^0_{\smash{\mat{(1)}}}}^{\hspace{-2mm}\top}\Bigr)+O\bigl(\varepsilon^2\bigr)&c_0^{\smash{(0)}}+O(\varepsilon)
\end{array}\right), 
\end{align} 
so that, using the general expression for the inverse of a block matrix~\cite{matrix} and, once again, Corollary~\ref{cor2} and $\tilde{\mat{g}}=\mat{g^0}+O(\varepsilon)$,
\begin{align}
\btilde{\tens{F}}^{-\top}\!=\left(\begin{array}{c|c}
\hspace{-2mm}\begin{array}{l}\bigl(\mat{g^0}\bigr)^{-1}\bigl(\mat{B^0_{\smash{\mat{(0)}}}}\bigr)^{-\top}\mat{g}\\\quad\phantom{.}+O(\varepsilon)\end{array}\!&\!\begin{array}{l}-\varepsilon\dfrac{\bigl(\mat{g^0}\bigr)^{-1}}{c^0_{(0)}}\left(\mat{w_{(1)}}+\bigl(\mat{B^0_{\smash{\mat{(0)}}}}\bigr)^{-\top}\mat{w^0_{\smash{\mat{(1)}}}}\right)\\\quad\phantom{.}+O\bigl(\varepsilon^2\bigr)\end{array}\!\!\!\\[0.5mm]
\hline
O(\varepsilon)&\dfrac{1\vphantom{A^A}}{c^0_{(0)}}+O(\varepsilon)\label{eq:Fg2invT}
\end{array}\right).
\end{align}
\end{subequations}
On substituting Eqs.~\eqref{eq:F2A}, \eqref{eq:F02invT}, and \eqref{eq:Fg2invT} into definition~\eqref{eq:P} and recalling that $p=1+O(\varepsilon)$, we obtain
\begin{subequations}
\begin{align}
\tens{Q}=\left(\begin{array}{c|c}
O(\varepsilon)&\varepsilon\dfrac{\mat{v_{(1)}}+\bigl(\mat{g^0}\bigr)^{-1}\mat{w_{(1)}}}{c^0_{(0)}}+O\bigl(\varepsilon^2\bigr)\\[4mm]
\hline
O(\varepsilon)&O(\varepsilon)
\end{array}\right), 
\end{align}
and hence
\begin{align}
&\tens{Q_{(0)}}=\tens{O},&& \tens{Q_{(1)}}\vec{n}=\left(\begin{array}{c}
\dfrac{\mat{v_{(1)}}+\bigl(\mat{g^0}\bigr)^{-1}\mat{w_{(1)}}}{c^0_{(0)}}\\[4mm]\hline O(1)                                              
\end{array}\right).\label{eq:Q0Q1A}
\end{align}
\end{subequations}
As in Section~\ref{sec:model}, the fact that $\tens{Q_{(0)}}=\tens{O}$ implies that, at leading order, Eq.~\eqref{eq:Cauchy} is \smash{$(\tens{Q_{(1)}}\vec{n})_{,\zeta}=\vec{0}$}, with boundary conditions \smash{$\tens{Q_{\smash{\tens{(1)}}}^\bpm}\vec{n^\pm}=\vec{0}$}, which, as above, leads to $\tens{Q_{(1)}}\vec{n}\equiv\vec{0}$. This and the incompressibility condition $\det{\tens{F}}=1$ yield, from Eqs.~\eqref{eq:detF2A} and \eqref{eq:Q0Q1A},
\begin{subequations}\label{eq:c1c2A}
\begin{align}
&c_{(1)}=-\tr{\mat{B_{(1)}}}-\tr{\mat{E}},&& \mat{w_{(1)}}=-\mat{g^0v_{(1)}},\label{eq:c1A}
\end{align}
and hence
\begin{align}
c_{(2)}&=-\tr{\mat{B_{(2)}}}+\left(\tr{\mat{B_{(1)}}}+\tr{\mat{E}}\right)\tr{\mat{B_{(1)}}}-\det{\mat{B_{(1)}}}\nonumber\\
&\quad-2\det{\mat{E}}+\tfrac{3}{2}(\tr{\mat{E}})^2-\mat{v}^\top_{\smash{\mat{(1)}}}\mat{g^0v_{(1)}}.\label{eq:c2A}
\end{align}
\end{subequations}
\subsubsection{Asymptotic expansion of the constitutive relations}
To expand the constitutive relations and hence obtain the asymptotic expansion of the three-dimensional energy density, we need one more result:
\begin{lemma}\label{lemma4}
Let $\mat{M,N}$ be $2\times 2$ matrices. Then 
\begin{enumerate}[leftmargin=*,label={(\roman{enumi})},widest=2]
\item $\tr{\bigl(\mat{M}^2\bigr)}=(\tr{\mat{M}})^2-2\det{\mat{M}}$,
\item $\tr{\bigl(\mat{M}^2\mat{N}\bigr)}=\tr{\mat{M}}\tr{(\mat{MN})}-\det{\mat{M}}\tr{\mat{N}}$.
\end{enumerate}
\end{lemma}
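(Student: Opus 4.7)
The plan is to derive both parts from the Cayley--Hamilton identity for $2\times 2$ matrices, which the paper has essentially already established in the proof of Lemma~\ref{lemma3}. There, the identity $\mat{M}+\adj{\mat{M}}=(\tr{\mat{M}})\mat{I}$ is shown by direct computation. I would first observe that multiplying this on the left by $\mat{M}$ and using $\mat{M}\adj{\mat{M}}=(\det{\mat{M}})\mat{I}$ (the defining property of the adjugate) gives the Cayley--Hamilton relation
\begin{align*}
\mat{M}^2=(\tr{\mat{M}})\mat{M}-(\det{\mat{M}})\mat{I}.
\end{align*}

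Part~(i) then follows immediately upon taking the trace of both sides, using linearity of the trace and $\tr{\mat{I}}=2$. For part~(ii), I would right-multiply the Cayley--Hamilton identity by $\mat{N}$ to obtain
\begin{align*}
\mat{M}^2\mat{N}=(\tr{\mat{M}})(\mat{MN})-(\det{\mat{M}})\mat{N},
\end{align*}
and then take the trace. Both required identities drop out with essentially no calculation.

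There is no real obstacle here: the only subtlety worth flagging is that the argument depends on $\mat{M}$ being $2\times 2$, which enters twice—once through $\tr{\mat{I}}=2$ in part~(i), and once through the fact that the Cayley--Hamilton polynomial has degree two so that $\mat{M}^2$ can be expressed linearly in $\mat{M}$ and $\mat{I}$. Since both pieces are already implicit in the proof of Lemma~\ref{lemma3}, the proof of Lemma~\ref{lemma4} can be kept to a few lines, and indeed a direct coordinate computation is also available as a redundant check.
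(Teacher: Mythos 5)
Your proof is correct and follows essentially the same route as the paper: the paper also invokes the Cayley--Hamilton relation $\mat{M}^2=(\tr{\mat{M}})\mat{M}-(\det{\mat{M}})\mat{I}$ for $2\times 2$ matrices, takes the trace for~(i), and multiplies by $\mat{N}$ and takes the trace for~(ii). Your additional derivation of Cayley--Hamilton from the adjugate identity used in Lemma~\ref{lemma3} is a harmless (and valid) extra step; the paper simply cites the theorem directly.
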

\begin{proof}
The Cayley--Hamilton theorem~\cite{matrix} for a $2\times 2$ matrix states that $\mat{M}^2=(\tr{\mat{M}})\mat{M}-(\det{\mat{M}})\mat{I}$. Taking the trace on both sides of this relation and noting that $\tr{\mat{I}}=2$, we obtain~(i). Multiplying the Cayley--Hamilton relation by $\mat{N}$ and taking the trace yields~(ii).
\end{proof}
We start by computing the expansion of the (left) Cauchy--Green tensor $\tens{C}=\tens{F}^\top\tens{F}$. From Eq.~\eqref{eq:F2A}, we obtain
\begin{widetext}

\vspace{-2.5mm}
\begin{align}
\tens{F}^\top=\left(\begin{array}{c|c}
\mat{I}+\varepsilon\left[2\mat{E}+\bigl(\mat{g^0}\bigr)^{-1}\mat{B^\top_{\smash{\mat{(1)}}}}\mat{g^0}\right]+\varepsilon^2\left[2\bigl(\mat{g^0}\bigr)^{-1}\mat{B^\top_{\smash{\mat{(1)}}}}\mat{g^0}\mat{E}+\bigl(\mat{g^0}\bigr)^{-1}\mat{B^\top_{\smash{\mat{(2)}}}}\mat{g^0}\right]+O\bigl(\varepsilon^3\bigr)&\varepsilon\bigl(\mat{g^0}\bigr)^{-1}\mat{w_{(1)}}+O\bigl(\varepsilon^2\bigr)\\[1.5mm]
\hline
\varepsilon\mat{v^\top_{(1)}}\mat{g^0}+O\bigl(\varepsilon^2\bigr)&1+\varepsilon c_{(1)}+\varepsilon^2c_{(2)}+O\bigl(\varepsilon^3\bigr)
\end{array}\right), \label{eq:FT2A}
\end{align}
using Corollary~\ref{cor2} and Eq.~\eqref{eq:uscale}, and hence
\begin{align}
\tens{C}=\left(\begin{array}{c|c}
\hspace{-1.5mm}\begin{array}{l}\mat{I}+\varepsilon\left[2\mat{E}+\mat{B_{(1)}}+\bigl(\mat{g^0}\bigr)^{-1}\mat{B^\top_{\smash{\mat{(1)}}}}\mat{g^0}\right]+\varepsilon^2\left\{2\left[\mat{EB_{(1)}}+\bigl(\mat{g^0}\bigr)^{-1}\mat{B^\top_{\smash{\mat{(1)}}}}\mat{g^0E}\right]\right.\\+\left.\mat{B_{(2)}}+\bigl(\mat{g^0}\bigr)^{-1}\mat{B^\top_{\smash{\mat{(2)}}}}\mat{g^0}+\bigl(\mat{g^0}\bigr)^{-1}\mat{B^\top_{\smash{\mat{(1)}}}}\mat{g^0B_{(1)}}+\bigl(\mat{g^0}\bigr)^{-1}\mat{w_{(1)}}\mat{w^\top_{\smash{\mat{(1)}}}}\right\}+O\bigl(\varepsilon^3\bigr)\end{array}\!\!&O(\varepsilon)\\[4.5mm]
\hline
O(\varepsilon)&1+2\varepsilon c_{(1)}\!+\varepsilon^2\!\left(2c_{(2)}\!+c_{\smash{(1)}}^2\!+\mat{v^\top_{\smash{\mat{(1)}}}}\mat{g^0v_{(1)}}\right)+O\bigl(\varepsilon^3\bigr)\!
\end{array}\right). \label{eq:CA} 
\end{align}
We recall general properties of the trace operator: for matrices $\mat{M},\mat{N}$, $\tr{\mat{M^\top}}=\tr{\mat{M}}$ and $\tr{\mat{MN}}=\tr{\mat{NM}}$. Since Eq.~\eqref{eq:CA} represents~$\tens{C}$ with respect to $\mathcal{B}^0\otimes\bigl(\mathcal{B}^0\bigr)^\ast$, it follows that
\begin{subequations}
\begin{align}
\mathcal{I}_1&=3+\varepsilon \bigl[2(\tr{\mat{B_{(1)}}}+\tr{\mat{E}}+c_{(1)})\bigr]+\varepsilon^2\left\{2\left(\tr{\mat{B_{(2)}}}+c_{(2)}\right)+\left(\mat{v^\top_{\smash{\mat{(1)}}}}\mat{g^0v_{(1)}}+\mat{w^\top_{\smash{\mat{(1)}}}}\bigl(\mat{g^0}\bigr)^{-1}\mat{w_{(1)}}\right)+c_{(1)}^2+\tr{\left(\bigl(\mat{g^0}\bigr)^{-1}\mat{B^\top_{\smash{\mat{(1)}}}}\mat{g^0}\mat{B_{(1)}}\right)}\right.\nonumber\\
&\hspace{60mm}+\left.2\left[\tr{\left(\mat{EB_{(1)}}\right)}+\tr{\left(\mat{E}\bigl(\mat{g^0}\bigr)^{-1}\mat{B^\top_{\smash{\mat{(1)}}}}\mat{g^0}\right)}\right]\right\}+O\bigl(\varepsilon^3\bigr)\nonumber\\
&=3+\varepsilon^2\left\{2\left(\tr{\mat{E}}+\tr{\mat{B_{(1)}}}\right)^2+2\tr{\mat{E}^2}+\tr{\mat{B}^2_{\mat{(1)}}}+\tr{\left(\bigl(\mat{g^0}\bigr)^{-1}\mat{B^\top_{\smash{\mat{(1)}}}}\mat{g^0}\mat{B_{(1)}}\right)}+2\left[\tr{\left(\mat{EB_{(1)}}\right)}+\tr{\left(\mat{E}\bigl(\mat{g^0}\bigr)^{-1}\mat{B^\top_{\smash{\mat{(1)}}}}\mat{g^0}\right)}\right]\right\}+O\bigl(\varepsilon^3\bigr),\label{eq:I1A1}
\end{align}
using Eqs.~\eqref{eq:c1c2A} and Lemma~\ref{lemma4}. Recasting this result into a more symmetric form, 
\begin{align}
\mathcal{I}_1-3&=2\varepsilon^2\left[\bigl(\tr{\hat{\mat{E}}}\bigr)^2+\tr{\hat{\mat{E}}^2}\right]+O\bigl(\varepsilon^3\bigr),\qquad\text{where}\quad\hat{\mat{E}}=\mat{E}+\dfrac{1}{2}\left[\mat{B_{(1)}}+\bigl(\mat{g^0}\bigr)^{-1}\mat{B^\top_{\smash{\mat{(1)}}}}\mat{g^0}\right],\label{eq:I1A2}
\end{align}
\end{subequations}
so that $\hat{\mat{E}}$ is the effective two-dimensional strain. Thus Eq.~\eqref{eq:I1A2} determines the leading-order term in the expansion of the three-dimensional energy density $e$ defined in Eq.~\eqref{eq:E}, analogously to Eq.~\eqref{eq:effstrain}. This completes its asymptotic expansion in the limit of a thin shell that undergoes general large bending deformations.

We are left to express the leading-order expansion of $e$ in terms of tensorial invariants of the midsurface, thereby emphasising the tensorial nature of the shell theory. We substitute Eq.~\eqref{eq:Kmat} into Eq.~\eqref{eq:B1A} to find
\begin{subequations}\label{eq:B1A2}
\begin{align}
\mat{B_{(1)}}=-\dfrac{Z^0\mat{K}-2Z^0\mat{E\slambda^0}+Z_{(1)}\mat{\slambda^0}-\bigl(Z^0\bigr)^2\mat{K}\adj{\mat{\slambda^0}}+2\mathcal{K}^0\bigl(Z^0\bigr)^2\mat{E}-Z_{(1)}Z^0\mathcal{K}^0\mat{I}}{1-2\mathcal{H}^0Z^0+\mathcal{K}^0\bigl(Z^0\bigr)^2}+O(\varepsilon). 
\end{align}
By assumption and definitions~\eqref{eq:Etens} and \eqref{eq:curvtens}, tensors $\tens{\slambdab^0},\tens{E},\tens{K}$ are symmetric. We note that the curvature strain $\tens{L}$ is, from its definition in Eq.~\eqref{eq:curvtens}, not necessarily symmetric. Our choice to switch to a different measure of curvature strain at this stage is therefore motivated by symmetry, and not geometric interpretation as in Section~\ref{sec:model}. Now, using Proposition~\ref{prop3}, it follows that
\begin{align}
\bigl(\mat{g^0}\bigr)^{-1}\mat{B^\top_{\smash{\mat{(1)}}}}\mat{g^0}=-\dfrac{Z^0\mat{K}-2Z^0\mat{\slambda^0E}+Z_{(1)}\mat{\slambda^0}-\bigl(Z^0\bigr)^2\bigl(\adj{\mat{\slambda^0}}\bigr)\mat{K}+2\mathcal{K}^0\bigl(Z^0\bigr)^2\mat{E}-Z_{(1)}Z^0\mathcal{K}^0\mat{I}}{1-2\mathcal{H}^0Z^0+\mathcal{K}^0\bigl(Z^0\bigr)^2}+O(\varepsilon). 
\end{align}
\end{subequations}
Moreover, on substituting Eq.~\eqref{eq:Kmat} into Eq.~\eqref{eq:Z1A}, and using Lemma~\ref{lemma4}, we find
\begin{align}
Z_{(1)}=-\dfrac{Z^0\left[1-\mathcal{H}^0Z^0-\tfrac{1}{3}\mathcal{K}^0\bigl(Z^0\bigr)^2\right]\tr{\mat{E}}-\tfrac{1}{2}\bigl(Z^0\bigr)^2\left(1-\tfrac{4}{3}\mathcal{H}^0Z^0\right)\tr{\mat{K}}+\bigl(Z^0\bigr)^2\tr{\mat{E\slambda^0}}-\tfrac{1}{3}\bigl(Z^0\bigr)^3\tr{\mat{K\slambda^0}}}{1-2\mathcal{H}^0Z^0+\mathcal{K}^0\bigl(Z^0\bigr)^2}+O(\varepsilon). \label{eq:Z1A2}
\end{align}

We introduce the anticommutator $\langle\mat{M},\mat{N}\rangle$ of two matrices $\mat{M},\mat{N}$ by setting $\langle\mat{M},\mat{N}\rangle=(\mat{MN}+\mat{NM})/2$. With this notation, substituting Eq.~\eqref{eq:Z1A2} into Eqs.~\eqref{eq:B1A2} and the result into the definition of $\hat{\mat{E}}$ in Eq.~\eqref{eq:I1A2} yields
\begin{subequations}
\begin{align}
\hat{\mat{E}}&=\dfrac{\left[1-2\mathcal{H}^0Z^0-\mathcal{K}^0\bigl(Z^0\bigr)^2\right]\mat{E}-Z^0\mat{K}+2Z^0\left\langle\mat{E},\mat{\slambda^0}\right\rangle+\bigl(Z^0\bigr)^2\left\langle\mat{K},\adj{\mat{\slambda^0}}\right\rangle}{1-2\mathcal{H}^0Z^0+\mathcal{K}^0\bigl(Z^0\bigr)^2}\nonumber\\
&\quad+\dfrac{Z^0\left[1-\mathcal{H}^0Z^0-\tfrac{1}{3}\mathcal{K}^0\bigl(Z^0\bigr)^2\right]\tr{\mat{E}}-\tfrac{1}{2}\bigl(Z^0\bigr)^2\left(1-\tfrac{4}{3}\mathcal{H}^0Z^0\right)\tr{\mat{K}}+\bigl(Z^0\bigr)^2\tr{\left\langle\mat{E},\mat{\slambda^0}\right\rangle}-\tfrac{1}{3}\bigl(Z^0\bigr)^3\tr{\left\langle\mat{K},\mat{\slambda^0}\right\rangle}}{\left[1-2\mathcal{H}^0Z^0+\mathcal{K}^0\bigl(Z^0\bigr)^2\right]^2}\left(\mat{\slambda^0}-\mathcal{K}^0Z^0\mat{I}\right)+O(\varepsilon).\raisetag{2mm}
\label{eq:eA}
\end{align} 
\end{subequations}
\vspace{-3mm}
\end{widetext}
For the axisymmetric deformations in Section~\ref{sec:model}, using the identifications~\eqref{eq:eaxisym} and \eqref{eq:laxisym} of the axisymmetric shell and curvature strains in terms of the components of the general shell and curvature strain tensors used here and Eq.~\eqref{eq:Kmat} to switch between curvature strains, we find that ${\hat{E}^s}_s=a_{(1)}$ and $\smash{{\hat{E}^\phi}_\phi}´=b_{(1)}$, where $a_{(1)},b_{(1)}$ are defined in Eqs.~\eqref{eq:a1b1}. Comparing Eqs.~\eqref{eq:I1A2} and \eqref{eq:effstrain} then shows that the general result derived here is consistent with the result for axisymmetric deformations obtained in Section~\ref{sec:model}.

The next step in the derivation is to substitute Eq.~\eqref{eq:eA}, finally, into Eq.~\eqref{eq:I1A2} and hence Eq.~\eqref{eq:E}. To express the resulting expansion of the energy density $e$ in terms of the first- and second-order invariants that can be constructed from $\tens{\slambdab^0},\tens{E},\tens{K}$ only, we need to make two more general observations:
\begin{lemma}\label{lemma5}
Let $\mat{A},\mat{B},\mat{C}$ be $2\times 2$ matrices. Then
\begin{align*}
2\tr{\left(\left\langle\mat{A},\mat{B}\right\rangle\mat{C}\right)}&=\tr{\left(\left\langle\mat{A},\mat{B}\right\rangle\right)}\tr{\mat{C}}+\tr{\left(\left\langle\mat{B},\mat{C}\right\rangle\right)}\tr{\mat{A}}\\
&\qquad+\tr{\left(\left\langle\mat{C},\mat{A}\right\rangle\right)}\tr{\mat{B}}-\tr{\mat{A}}\tr{\mat{B}}\tr{\mat{C}}. 
\end{align*}
\end{lemma}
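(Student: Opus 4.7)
The approach I would take is to reduce the identity to a polarization of the Cayley--Hamilton theorem for $2\times 2$ matrices, which the paper has already been using implicitly in Lemmas~\ref{lemma1}--\ref{lemma4}. Specifically, observe that $2\langle\mat{A},\mat{B}\rangle=\mat{AB}+\mat{BA}$, so that $2\tr{\bigl(\langle\mat{A},\mat{B}\rangle\mat{C}\bigr)}=\tr{(\mat{ABC})}+\tr{(\mat{BAC})}$. Hence it suffices to obtain a convenient linear expansion of the anticommutator $\mat{AB}+\mat{BA}$ in terms of $\mat{A}$, $\mat{B}$, and $\mat{I}$, and then multiply by $\mat{C}$ and take the trace.

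First I would establish the polarization identity
\begin{align*}
\mat{AB}+\mat{BA}=(\tr{\mat{A}})\mat{B}+(\tr{\mat{B}})\mat{A}+\bigl[\tr{(\mat{AB})}-\tr{\mat{A}}\tr{\mat{B}}\bigr]\mat{I}.
\end{align*}
This can be derived in one line by applying the Cayley--Hamilton relation $\mat{M}^2=(\tr{\mat{M}})\mat{M}-(\det{\mat{M}})\mat{I}$ (used in Lemma~\ref{lemma4}) to $\mat{M}=\mat{A}+\mat{B}$, subtracting the corresponding relations for $\mat{A}^2$ and $\mat{B}^2$, and simplifying $\det{(\mat{A}+\mat{B})}-\det{\mat{A}}-\det{\mat{B}}=\tr{\mat{A}}\tr{\mat{B}}-\tr{(\mat{AB})}$, which itself follows from Lemma~\ref{lemma1}. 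Equivalently, one may write $\mat{AB}+\mat{BA}=(\tr{\mat{A}})\mat{B}+(\tr{\mat{B}})\mat{A}-\tr{(\mat{A}\adj{\mat{B}})}\mat{I}$ using $\adj{\mat{B}}=(\tr{\mat{B}})\mat{I}-\mat{B}$ from Lemma~\ref{lemma3}.

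Second, I would right-multiply this identity by $\mat{C}$ and take the trace, using the cyclic property, to obtain
\begin{align*}
2\tr{\bigl(\langle\mat{A},\mat{B}\rangle\mat{C}\bigr)}=(\tr{\mat{A}})\tr{(\mat{BC})}+(\tr{\mat{B}})\tr{(\mat{AC})}+\bigl[\tr{(\mat{AB})}-\tr{\mat{A}}\tr{\mat{B}}\bigr]\tr{\mat{C}}.
\end{align*}
Finally, I would invoke the trivial observation $\tr{(\mat{MN})}=\tr{(\langle\mat{M},\mat{N}\rangle)}$, valid for any matrices by the cyclic property of the trace, to rewrite $\tr{(\mat{BC})}=\tr{(\langle\mat{B},\mat{C}\rangle)}$, $\tr{(\mat{AC})}=\tr{(\langle\mat{C},\mat{A}\rangle)}$, and $\tr{(\mat{AB})}=\tr{(\langle\mat{A},\mat{B}\rangle)}$, yielding the claimed identity. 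There is no genuine obstacle here: the entire argument is an algebraic manipulation of Cayley--Hamilton, and the only minor subtlety is presenting the polarization step efficiently so that the cyclic symmetry of the final expression is manifest rather than coincidental.
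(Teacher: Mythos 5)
Your proof is correct, but it follows a genuinely different route from the paper's. The paper proves Lemma~\ref{lemma5} by brute force: it writes out the entries of $\mat{A},\mat{B},\mat{C}$, expands $2\tr{\left(\left\langle\mat{A},\mat{B}\right\rangle\mat{C}\right)}$ in components, and regroups the fourteen resulting terms into the four trace products. You instead polarize the Cayley--Hamilton relation: applying $\mat{M}^2=(\tr{\mat{M}})\mat{M}-(\det{\mat{M}})\mat{I}$ to $\mat{M}=\mat{A}+\mat{B}$ and subtracting the relations for $\mat{A}$ and $\mat{B}$ gives $\mat{AB}+\mat{BA}=(\tr{\mat{A}})\mat{B}+(\tr{\mat{B}})\mat{A}+\bigl[\tr{(\mat{AB})}-\tr{\mat{A}}\tr{\mat{B}}\bigr]\mat{I}$, after which multiplying by $\mat{C}$ and tracing finishes the job. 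This is the same mechanism the paper already uses for Lemma~\ref{lemma4}, so your argument unifies the two lemmas, makes the cyclic symmetry of the result structural rather than an artefact of regrouping, and is shorter and less error-prone than the componentwise expansion; the paper's computation, on the other hand, is entirely self-contained and needs no auxiliary identity. One small imprecision: the simplification $\det{(\mat{A}+\mat{B})}-\det{\mat{A}}-\det{\mat{B}}=\tr{\mat{A}}\tr{\mat{B}}-\tr{(\mat{AB})}$ does not follow from Lemma~\ref{lemma1} alone (that lemma concerns $\det{(\mat{I}+x\mat{M})}$); you either need Lemma~\ref{lemma3} as well (via $\det{\mat{B}}\tr{\bigl(\mat{B}^{-1}\mat{A}\bigr)}=\tr{\bigl(\mat{A}\adj{\mat{B}}\bigr)}$, plus a continuity or polynomial-identity argument for singular $\mat{B}$), or, more simply, you can take the trace of the polarized Cayley--Hamilton relation itself to identify the scalar coefficient, which avoids computing $\det{(\mat{A}+\mat{B})}$ altogether. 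This is a presentational slip, not a gap.
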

\begin{proof}
The proof proceeds by direct calculation. We write
\begin{align*}
\mat{A}=\left(\begin{array}{cc}
A_{11}&A_{12}\\A_{21}&A_{22}               
\end{array}\right),\;\mat{B}=\left(\begin{array}{cc}
B_{11}&B_{12}\\B_{21}&B_{22}               
\end{array}\right),\;\mat{C}=\left(\begin{array}{cc}
C_{11}&C_{12}\\C_{21}&C_{22}               
\end{array}\right) 
\end{align*}
and compute
\begin{align*}
&2\tr{\left(\left\langle\mat{A},\mat{B}\right\rangle\mat{C}\right)}\\
&\;= A_{11} B_{11} C_{11}+A_{21} B_{12} C_{11}+A_{12} B_{21} C_{11}+A_{21} B_{11} C_{12}\\
&\qquad+A_{11} B_{21} C_{12}+A_{22} B_{21} C_{12}+A_{21} B_{22} C_{12}+A_{12} B_{11} C_{21}\\
&\qquad+A_{11} B_{12} C_{21}+A_{22} B_{12} C_{21}+A_{12} B_{22} C_{21}+A_{21} B_{12} C_{22}\\
&\qquad+A_{12} B_{21} C_{22}+2 A_{22} B_{22} C_{22}\\
&\;=(A_{11} B_{11}+A_{21} B_{12}+A_{12} B_{21}+A_{22} B_{22})(C_{11}+C_{22})\\
&\qquad+(B_{11} C_{11}+B_{21} C_{12}+B_{12} C_{21}+B_{22} C_{22})(A_{11}+A_{22})\\
&\qquad+(A_{11} C_{11}+A_{21} C_{12}+A_{12} C_{21}+A_{22} C_{22})(B_{11}+B_{22})\\
&\qquad-(A_{11}+A_{22})(B_{11}+B_{22})(C_{11}+C_{22})\\
&\;=\tr{(\mat{AB})}\tr{\mat{C}}+\tr{(\mat{BC})}\tr{\mat{A}}+\tr{(\mat{AC})}\tr{\mat{B}}-\tr{\mat{A}}\tr{\mat{B}}\tr{\mat{C}}.
\end{align*}
By the symmetry of trace, this completes the proof.
\end{proof}
\begin{corollary}\label{cor4}
Let $\mat{A},\mat{B},\mat{C}$ be $2\times 2$ matrices. Then
\begin{align*}
\tr{\left(\langle\mat{AB},\mat{CB}\rangle\right)}&=\tr{\left(\left\langle\mat{A},\mat{B}\right\rangle\right)}\tr{\left(\left\langle\mat{B},\mat{C}\right\rangle\right)}\\
&\qquad-\det{\mat{B}}\left[\tr{\left(\left\langle\mat{A},\mat{C}\right\rangle\right)}-\tr{\mat{A}}\tr{\mat{C}}\right]. 
\end{align*}
\end{corollary}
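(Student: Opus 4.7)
The plan is to derive this identity from Lemma~\ref{lemma5} by a substitution that converts three-matrix products into four-matrix ones, using Lemma~\ref{lemma4}(ii) to collapse the quadratic occurrences of $\mat{B}$ that arise.

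First I would note that the cyclicity of trace reduces every anticommutator inside a trace to an ordinary product, $\tr{(\langle\mat{X},\mat{Y}\rangle)}=\tr{(\mat{XY})}$. In particular, $\tr{(\langle\mat{AB},\mat{CB}\rangle)}=\tr{(\mat{ABCB})}$, so the claim reduces to an identity for a single four-letter trace. The same observation allows Lemma~\ref{lemma5} to be rewritten more transparently as
\begin{align*}
\tr{(\mat{ABC})}+\tr{(\mat{ACB})}&=\tr{(\mat{AB})}\tr{\mat{C}}+\tr{(\mat{BC})}\tr{\mat{A}}\\
&\qquad+\tr{(\mat{CA})}\tr{\mat{B}}-\tr{\mat{A}}\tr{\mat{B}}\tr{\mat{C}}.
\end{align*}

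Next I would apply this reformulation to the triple $(\mat{A},\mat{B},\mat{CB})$. The left-hand side becomes $\tr{(\mat{ABCB})}+\tr{(\mat{ACB}^2)}$, while the right-hand side contains $\tr{(\mat{BCB})}=\tr{(\mat{B}^2\mat{C})}$, $\tr{(\mat{CBA})}=\tr{(\mat{ACB})}$, and products of first-order traces. Lemma~\ref{lemma4}(ii) applied with $\mat{M}=\mat{B}$, $\mat{N}=\mat{C}$ simplifies $\tr{(\mat{B}^2\mat{C})}$, and the same lemma applied with $\mat{M}=\mat{B}$, $\mat{N}=\mat{AC}$ (after the cyclic rearrangement $\tr{(\mat{ACB}^2)}=\tr{(\mat{B}^2\mat{AC})}$) disposes of the unwanted four-letter trace on the left. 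Each of these reductions produces one term proportional to $\tr{\mat{B}}$ and one proportional to $\det{\mat{B}}$.

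Finally I would collect terms. I expect the contribution $\tr{\mat{B}}\tr{(\mat{ACB})}$ arising from the expansion of $\tr{(\mat{ACB}^2)}$ on the left to cancel the term $\tr{(\mat{CBA})}\tr{\mat{B}}=\tr{(\mat{ACB})}\tr{\mat{B}}$ on the right, while $\tr{\mat{A}}\tr{\mat{B}}\tr{(\mat{BC})}$ and $\tr{\mat{A}}\tr{\mat{B}}\tr{(\mat{CB})}$ cancel by the cyclicity $\tr{(\mat{BC})}=\tr{(\mat{CB})}$. What survives, once the anticommutator notation is restored via $\tr{(\mat{XY})}=\tr{(\langle\mat{X},\mat{Y}\rangle)}$, is precisely the claimed identity. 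The main obstacle is not conceptual but combinatorial: each four-letter product admits several cyclic aliases, so the principal hazard is bookkeeping through these cyclic rearrangements, and a sanity check against the trivial case $\mat{A}=\mat{B}=\mat{C}=\mat{I}$ is a convenient guard against sign slips in the final collection of terms.
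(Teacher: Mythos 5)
Your plan follows essentially the same route as the paper's own proof: there, too, the left-hand side is rewritten as $2\tr{[\langle\mat{A},\mat{B}\rangle\mat{CB}]}-\tr{\bigl[\mat{B}^2(\mat{AC})\bigr]}$, Lemma~\ref{lemma5} is applied to the triple $(\mat{A},\mat{B},\mat{CB})$, and Lemma~\ref{lemma4}(ii) removes the traces containing $\mat{B}^2$; the cancellations you anticipate ($\tr{\mat{B}}\tr{(\mat{ACB})}$ against $\tr{(\mat{CBA})}\tr{\mat{B}}$, and the two triple products of first-order traces) are exactly the ones that occur there, so the algebraic strategy is sound.

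There is, however, one genuine problem, and it sits in your final step. Carried through, your calculation yields
\begin{align*}
\tr{\left(\langle\mat{AB},\mat{CB}\rangle\right)}&=\tr{\left(\left\langle\mat{A},\mat{B}\right\rangle\right)}\tr{\left(\left\langle\mat{B},\mat{C}\right\rangle\right)}\\
&\qquad+\det{\mat{B}}\left[\tr{\left(\left\langle\mat{A},\mat{C}\right\rangle\right)}-\tr{\mat{A}}\tr{\mat{C}}\right],
\end{align*}
with a \emph{plus} sign on the $\det{\mat{B}}$ term --- which is also what the last line of the paper's own proof reads --- whereas the corollary as displayed has a minus sign there. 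The plus sign is the correct one: your own proposed sanity check $\mat{A}=\mat{B}=\mat{C}=\mat{I}$ gives $2$ on the left but $6$ for the minus-sign right-hand side, and taking $\mat{A}=\mat{C}=\mat{I}$ reduces the identity to $\tr{\bigl(\mat{B}^2\bigr)}=(\tr{\mat{B}})^2-2\det{\mat{B}}$, i.e.\ Lemma~\ref{lemma4}(i), only with the plus sign. So the displayed statement carries a sign typo, and your assertion that the surviving terms are ``precisely the claimed identity'' is the one step that fails as written; actually running the check you yourself recommend would have exposed it. With the sign corrected, your argument is complete and coincides with the paper's.
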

\begin{proof}
Using Lemmata~\ref{lemma4} and \ref{lemma5}, we find
\begin{align*}
\tr{\left(\langle\mat{AB},\mat{CB}\rangle\right)}&=2\tr{[\left(\langle\mat{A},\mat{B}\rangle\right)\mat{CB}]}-\tr{\bigl[\mat{B}^2(\mat{AC})\bigr]}\\
&=\{\tr{(\mat{AB})}\tr{(\mat{CB})}+\tr{[\mat{A}(\mat{CB})]}\tr{\mat{B}}\\
&\qquad+\tr{[\mat{B}(\mat{CB})]}\tr{\mat{A}}-\tr{\mat{A}}\tr{\mat{B}}\tr{(\mat{CB})}\}\\
&\quad-\{\tr{\mat{B}}\tr{[\mat{B(AC)}]}-\det{\mat{B}}\tr{(\mat{AC})}\}\\
&=\tr{(\mat{AB})}\tr{(\mat{CB})}+\left[\tr{\mat{B}}\tr{(\mat{BC})}\!-\!\det{\mat{B}}\tr{\mat{C}}\right]\tr{\mat{A}}\\
&\quad-\tr{\mat{A}}\tr{\mat{B}}\tr{(\mat{CB})}+\det{\mat{B}}\tr{(\mat{AC})}\\
&=\tr{(\mat{AB})}\tr{(\mat{CB})}+\det{\mat{B}}\left[\tr{(\mat{AC})}-\tr{\mat{A}}\tr{\mat{C}}\right],
\end{align*}
which, again by the symmetry of trace, finishes the proof.
\end{proof}
To simplify expressions in subsequent calculations, it will be convient to rewrite the expression for the effective strain $\hat{\mat{E}}$ in Eq.~\eqref{eq:eA} as
\begin{widetext}\setcounter{equation}{68}
\begin{subequations}\setcounter{equation}{1}
\begin{align}
\hat{\mat{E}}=e_1\mat{E}+e_2\mat{K}+e_3\left\langle\mat{E},\mat{\slambda^0}\right\rangle+e_4\left\langle\mat{K},\adj{\mat{\slambda^0}}\right\rangle+E\left(\mat{\slambda^0}-\mathcal{K}^0Z^0\mat{I}\right)+O(\varepsilon),
\end{align}
\end{subequations}
in which $e_1,e_2,e_3,e_4$ are functions of $Z^0$ and $\mathcal{H}^0,\mathcal{K}^0$ only, and $E$ additionally depends on $\tr{\mat{E}},\tr{\mat{K}},\tr{\left\langle\mat{E},\mat{\slambda^0}\right\rangle},\tr{\left\langle\mat{K},\mat{\slambda^0}\right\rangle}$. Explicit expressions for $e_1,e_2,e_3,e_4$ are easily extracted from Eq.~\eqref{eq:eA}. It follows that
\begin{subequations}\label{eq:trEA}
\begin{align}
\tr{\hat{\mat{E}}}&=e_1\tr{\mat{E}}+e_2\tr{\mat{K}}+e_3\tr{\left\langle\mat{E},\mat{\slambda^0}\right\rangle}+e_4\tr{\left\langle\mat{K},\adj{\mat{\slambda^0}}\right\rangle}+2E\left(\mathcal{H}^0-\mathcal{K}^0Z^0\right)+O(\varepsilon),\\
\tr{\hat{\mat{E}}^2}&=e_1^2\tr{\mat{E}^2}+e_2^2\tr{\mat{K}^2}+e_3^2\tr{\left\langle\mat{E},\mat{\slambda^0}\right\rangle^2}+e_4^2\tr{\left\langle\mat{K},\adj{\mat{\slambda^0}}\right\rangle^2}+E^2\left[\tr{\bigl(\mat{\slambda^0}\bigr)^2}-4\mathcal{H}^0\mathcal{K}^0Z^0+2\bigl(\mathcal{K}^0Z^0\bigr)^2\right]+2e_1e_2\tr{\left\langle\mat{E},\mat{K}\right\rangle}\nonumber\\
&\quad +2e_1e_3\tr{\left\langle\mat{E},\left\langle\mat{E},\mat{\slambda^0}\right\rangle\right\rangle}+2e_1e_4\tr{\left\langle\mat{E},\left\langle\mat{K},\adj{\mat{\slambda^0}}\right\rangle\right\rangle}+2e_1E\left(\tr{\left\langle\mat{E},\mat{\slambda^0}\right\rangle}-\mathcal{K}^0Z^0\tr{\mat{E}}\right)+2e_2e_3\tr{\left\langle\mat{K},\left\langle\mat{E},\mat{\slambda^0}\right\rangle\right\rangle}\nonumber\\
&\quad+2e_2e_4\tr{\left\langle\mat{K},\left\langle\mat{K},\adj{\mat{\slambda^0}}\right\rangle\right\rangle}+2e_2E\left(\tr{\left\langle\mat{K},\mat{\slambda^0}\right\rangle}-\mathcal{K}^0Z^0\tr{\mat{K}}\right)+2e_3e_4\tr{\left\langle\left\langle\mat{E},\mat{\slambda^0}\right\rangle,\left\langle\mat{K},\adj{\mat{\slambda^0}}\right\rangle\right\rangle}\nonumber\\
&\quad+2e_3E\left(\tr{\left\langle\mat{\slambda^0},\left\langle\mat{E},\mat{\slambda^0}\right\rangle\right\rangle}-\mathcal{K}^0Z^0\tr{\left\langle\mat{E},\mat{\slambda^0}\right\rangle}\right)+2e_4E\left(\tr{\left\langle\mat{\slambda^0},\left\langle\mat{K},\adj{\mat{\slambda^0}}\right\rangle\right\rangle}-\mathcal{K}^0Z^0\tr{\left\langle\mat{K},\adj{\mat{\slambda^0}}\right\rangle}\right)+O(\varepsilon).
\end{align}
\end{subequations}
Expressing Eqs.~\eqref{eq:I1A2} and hence \eqref{eq:E} in terms of first- and second-order invariants only requires simplifying the different traces of higher-order expressions appearing in Eqs.~\eqref{eq:trEA}. We do so by applying Lemmata~\ref{lemma3}, \ref{lemma4}, \ref{lemma5} and Corollary~\ref{cor4} repeatedly to find
\begin{subequations}\label{eq:idsA}
\begin{align}
\tr{\left\langle\mat{K},\adj{\mat{\slambda^0}}\right\rangle}&=2\mathcal{H}^0\tr{\mat{K}}-\tr{\left\langle\mat{K},\mat{\slambda^0}\right\rangle},&\tr{\bigl(\mat{\slambda^0}\bigr)^2}&=4\bigl(\mathcal{H}^0\bigr)^2-2\mathcal{K}^0,\\
\tr{\left\langle\mat{E},\left\langle\mat{E},\mat{\slambda^0}\right\rangle\right\rangle}&=\tr{\mat{E}}\tr{\left\langle\mat{E},\mat{\slambda^0}\right\rangle}+\mathcal{H}^0\left[\tr{\mat{E}^2}-(\tr{\mat{E}})^2\right],&\tr{\left\langle\mat{K},\left\langle\mat{K},\adj{\mat{\slambda^0}}\right\rangle\right\rangle}&=\mathcal{H}^0\left[\tr{\mat{K}^2}+(\tr{\mat{K}})^2\right]-\tr{\mat{K}}\tr{\left\langle\mat{K},\mat{\slambda^0}\right\rangle},\\
\tr{\left\langle\mat{\slambda^0},\left\langle\mat{E},\mat{\slambda^0}\right\rangle\right\rangle}&=2\mathcal{H}^0\tr{\left\langle\mat{E},\mat{\slambda^0}\right\rangle}-\mathcal{K}^0\tr{\mat{E}},&\tr{\left\langle\mat{\slambda^0},\left\langle\mat{K},\adj{\mat{\slambda^0}}\right\rangle\right\rangle}&=\mathcal{K}^0\tr{\mat{K}},
\end{align}
and
\begin{align}
\tr{\left\langle\mat{E},\mat{\slambda^0}\right\rangle^2}&=\bigl(\mathcal{H}^0\bigr)^2\tr{\mat{E}^2}+\mathcal{H}^0\tr{\mat{E}}\tr{\left\langle\mat{E},\mat{\slambda^0}\right\rangle}-\left[\bigl(\mathcal{H}^0\bigr)^2+\tfrac{1}{2}\mathcal{K}^0\right](\tr{\mat{E}})^2+\tfrac{1}{2}\left[\tr{\left\langle\mat{E},\mat{\slambda^0}\right\rangle}\right]^2,\\
\tr{\left\langle\mat{K},\adj{\mat{\slambda^0}}\right\rangle^2}&=\bigl(\mathcal{H}^0\bigr)^2\tr{\mat{K}^2}-3\mathcal{H}^0\tr{\mat{K}}\tr{\left\langle\mat{K},\mat{\slambda^0}\right\rangle}+\left[3\bigl(\mathcal{H}^0\bigr)^2-\tfrac{1}{2}\mathcal{K}^0\right](\tr{\mat{K}})^2+\tfrac{1}{2}\left[\tr{\left\langle\mat{K},\mat{\slambda^0}\right\rangle}\right]^2,\\
\tr{\left\langle\mat{E},\left\langle\mat{K},\adj{\mat{\slambda^0}}\right\rangle\right\rangle}&=\mathcal{H}^0\bigl(\tr{\left\langle\mat{E},\mat{K}\right\rangle}+\tr{\mat{E}}\tr{\mat{K}}\bigr)-\tfrac{1}{2}\left(\tr{\left\langle\mat{E},\mat{\slambda^0}\right\rangle}\tr{\mat{K}}+\tr{\left\langle\mat{K},\mat{\slambda^0}\right\rangle}\tr{\mat{E}}\right),\\
\tr{\left\langle\mat{K},\left\langle\mat{E},\mat{\slambda^0}\right\rangle\right\rangle}&=\mathcal{H}^0\bigl(\tr{\left\langle\mat{E},\mat{K}\right\rangle}-\tr{\mat{E}}\tr{\mat{K}}\bigr)+\tfrac{1}{2}\left(\tr{\left\langle\mat{E},\mat{\slambda^0}\right\rangle}\tr{\mat{K}}+\tr{\left\langle\mat{K},\mat{\slambda^0}\right\rangle}\tr{\mat{E}}\right),\\
\tr{\left\langle\left\langle\mat{E},\mat{\slambda^0}\right\rangle,\left\langle\mat{K},\adj{\mat{\slambda^0}}\right\rangle\right\rangle}&=\left[\bigl(\mathcal{H}^0\bigr)^2+\mathcal{K}^0\right]\tr{\left\langle\mat{E},\mat{K}\right\rangle}-\left[\bigl(\mathcal{H}^0\bigr)^2+\tfrac{1}{2}\mathcal{K}^0\right]\tr{\mat{E}}\tr{\mat{K}}+\tfrac{1}{2}\mathcal{H}^0\left(\tr{\left\langle\mat{E},\mat{\slambda^0}\right\rangle}\tr{\mat{K}}+\tr{\left\langle\mat{K},\mat{\slambda^0}\right\rangle}\tr{\mat{E}}\right)\nonumber\\
&\qquad-\tfrac{1}{2}\tr{\left\langle\mat{E},\mat{\slambda^0}\right\rangle}\tr{\left\langle\mat{K},\mat{\slambda^0}\right\rangle}.
\end{align}
\end{subequations}
Inserting Eqs.~\eqref{eq:idsA} into Eqs.~\eqref{eq:trEA}, and the result into Eqs.~\eqref{eq:I1A2} and \eqref{eq:E} as announced, we finally obtain
\begin{align}
e&=C\varepsilon^2\left\{\left(\alpha_1\tr{\tens{E}^2}+\alpha_2(\tr{\tens{E}})^2+\alpha_3\tr{\tens{E}}\tr{\left\langle\tens{E},\tens{\slambdab^0}\right\rangle}+\alpha_4\left[\tr{\left\langle\tens{E},\tens{\slambdab^0}\right\rangle}\right]^2\right)+\left(\beta_1\tr{\left\langle\tens{E},\tens{K}\right\rangle}+\beta_2\tr{\tens{E}}\tr{\tens{K}}+\beta_3\tr{\tens{E}}\tr{\left\langle\tens{K},\tens{\slambdab^0}\right\rangle}\right.\right.\nonumber\\
&\hspace{12mm}+\left.\left.\beta_4\tr{\tens{K}}\tr{\left\langle\tens{E},\tens{\slambdab^0}\right\rangle}+\beta_5\tr{\left\langle\tens{E},\tens{\slambdab^0}\right\rangle}\tr{\left\langle\tens{K},\tens{\slambdab^0}\right\rangle}\right)+\left(\gamma_1\tr{\tens{K}^2}+\gamma_2(\tr{\tens{K}})^2+\gamma_3\tr{\tens{K}}\tr{\left\langle\tens{K},\tens{\slambdab^0}\right\rangle}+\gamma_4\left[\tr{\left\langle\tens{K},\tens{\slambdab^0}\right\rangle}\right]^2\right)\right\}+O\bigl(\varepsilon^3\bigr),\label{eq:eA2}
\end{align}
\end{widetext}
in which the stretching coefficients $\alpha_1,\alpha_2,\alpha_3,\alpha_4$, the coupling coefficients $\beta_1,\beta_2,\beta_3,\beta_4,\beta_5$, and the bending coefficients $\gamma_1,\gamma_2,\gamma_3,\gamma_4$ are rational functions of $Z^0$ and $\mathcal{H}^0,\mathcal{K}^0$, so depend on the intrinsic configuration only. The explicit expressions for these coefficients are not edifying, and therefore not presented here. 

We have been able to use tensor traces rather than matrix traces in this expressions since $\mat{\slambda^0},\mat{E},\mat{K}$ represent $\tens{\slambdab^0},\tens{E},\tens{K}$ with respect to $\smash{\mathcal{B}^0\otimes\bigl(\mathcal{B}^0\bigr)^\ast}$. This stresses the tensorial invariance of the theory. The anticommutators in Eq.~\eqref{eq:eA2} could of course be simplified using the symmetry of trace, but we have not done so to emphasise their symmetry. 
\subsubsection{Averaging over the transverse coordinate}
The volume element in the intrinsic configuration $\mathcal{V}^0$ is, by definition and using intrinsic volume conservation and Eq.~\eqref{eq:IVC2},
\begin{align}
\mathrm{d}V^0&=\sqrt{\dfrac{\det{\smash{\mat{G^0}}}}{\det{\mat{G}}}}\mathrm{d}V=\sqrt{\det{\mat{G^0}}}\left(\dfrac{\mathrm{d}S}{\sqrt{g}}\right)\mathrm{d}\zeta\nonumber\\
&=\varepsilon\left[1+2\mathcal{H}^0Z^0+\mathcal{K}^0\bigl(Z^0\bigr)^2\right]\mathrm{d}S\,\mathrm{d}Z^0+O\bigl(\varepsilon^2\bigr),
\end{align}
where $\mathrm{d}V$ is the volume element of the undeformed configuration $\mathcal{V}$ and $\mathrm{d}S$ is the surface element of the undeformed midsurface~$\mathcal{S}$. From Eq.~\eqref{eq:E}, the elastic energy of the shell is therefore
\begin{subequations}
\begin{align}
\mathcal{E}=\int_{\mathcal{S}}{\hat{e}\,\mathrm{d}S}, 
\end{align}
in which, at leading order,
\begin{align}
\hat{e}=\varepsilon\int_{-H^0/2}^{H^0/2}{e\bigl(Z^0\bigr)\left[1+2\mathcal{H}^0Z^0+\mathcal{K}^0\bigl(Z^0\bigr)^2\right]\mathrm{d}Z^0} \label{eq:eZ0A}
\end{align}
\end{subequations}
is the effective two-dimensional energy density. In the integral limits, $H^0$ is determined in terms of the undeformed thickness $h$ of the shell by Eq.~\eqref{eq:H0}.

Since the coefficient functions $\alpha_1,\alpha_2,\alpha_3,\alpha_4$, $\beta_1,\beta_2,\beta_3,\beta_4$, $\beta_5$, and $\gamma_1,\gamma_2,\gamma_3,\gamma_4$ that appear in Eq.~\eqref{eq:eA2} are rational functions of $Z^0$, the integral with respect to $Z^0$ in Eq.~\eqref{eq:eZ0A} can be performed in closed form, but the resulting expressions are extremely cumbersome and therefore again not presented here. For this reason, the theory for large bending deformations is likely to be most useful for deformations with some additional symmetry, such as the axisymmetric deformations discussed in Section~\ref{sec:model}.
\subsection{Limit of small bending deformations}
We conclude our calculations by discussing the limit of small bending deformations. In this limit, $\tens{\slambdab^0}\rightarrow\tens{O}$, and hence $\mathcal{H}^0,\mathcal{K}^0\rightarrow 0$, and the effective strain in Eq.~\eqref{eq:eA} reduces to the rather simpler form
\begin{align}
\hat{\mat{E}}=\mat{E}-Z^0\mat{K}+O(\varepsilon),
\end{align}
and so Eqs.~\eqref{eq:I1A2} and \eqref{eq:E} yield
\begin{align}
e&=C\varepsilon^2\Bigl\{\left[\tr{\tens{E}^2}+(\tr{\tens{E}})^2\right]-2Z^0\bigl(\tr{\langle\tens{E},\tens{K}\rangle}+\tr{\tens{E}}\tr{\tens{K}}\bigr)\nonumber\\
&\hspace{12mm}+\left.\bigl(Z^0\bigr)^2\left[\tr{\tens{K}^2}+(\tr{\tens{K}})^2\right]\right\},
\end{align}
where we have again replaced matrix traces with the corresponding tensor traces. Moreover, Eq.~\eqref{eq:H0} shows that, in this limit, $H^0=h$, and so Eq.~\eqref{eq:eZ0A} becomes
\begin{align}
\hat{e}&=\varepsilon\int_{-h/2}^{h/2}{e\bigl(Z^0\bigr)\,\mathrm{d}Z^0}\nonumber\\
&=\varepsilon^3\!\left\{Ch\left[\tr{\tens{E}^2}+(\tr{\tens{E}})^2\right]\!+\dfrac{Ch^3}{12}\!\left[\tr{\tens{K}^2}+(\tr{\tens{K}})^2\right]\right\}+O\bigl(\varepsilon^4\bigr),
\end{align}
which recovers the tensorial form of the incompressible limit of Koiter's shell theory~\cite{gregory17}.

\section{\uppercase{Derivation of the Governing Equa- tions for Axisymmetric Deformations}}\label{appB}
In this Appendix, we derive the governing equations for axisymmetric deformations, by varying the elastic energy~\eqref{eq:E2}. Similar derivations are given in our previous work~\cite{haas18a,haas18b} for the elastic theories considered there, but here, we will keep the explicit asymptotic scalings in the derivation. From Eq.~\eqref{eq:edensk} and considering leading-order terms only,
\begin{align}
\delta\hat{e}=\varepsilon\left(n_s\,\delta E_s+n_\phi\,\delta E_\phi\right)+m_s\,\delta K_s+m_\phi\,\delta K_\phi, 
\end{align}
wherein the shell stresses and shell moments are
\begin{subequations}
\begin{align}
n_s&=C\varepsilon^2h\left[\bar{\alpha}_{ss}E_s+\bar{\alpha}_{s\phi}E_\phi+h\left(\bar{\beta}_{ss}K_s+\bar{\beta}_{s\phi}K_\phi\right)\right],\label{eq:NsB}\\
n_\phi&=C\varepsilon^2h\left[\bar{\alpha}_{\phi s}E_s+\alpha_{\phi\phi}E_\phi+h\left(\beta_{\phi s}K_s+\beta_{\phi\phi}K_\phi\right)\right],\label{eq:NphiB} \\
m_s&=C\varepsilon^3h^2\left[\bar{\beta}_{ss}E_s+\beta_{\phi s}E_\phi+h\left(\gamma_{ss}K_s+\gamma_{s\phi}K_\phi\right)\right],\label{eq:MsB}\\
m_\phi&=C\varepsilon^3h^2\left[\bar{\beta}_{s\phi}E_s+\beta_{\phi\phi}E_\phi+h\left(\gamma_{\phi s}K_s+\gamma_{\phi\phi}K_\phi\right)\right],\label{eq:MphiB}
\end{align}
\end{subequations}
since $\bar{\alpha}_{s\phi}=\bar{\alpha}_{\phi s}$, $\gamma_{s\phi}=\gamma_{\phi s}$. Now, from the definitions of the shell and curvature strains in Eqs.~\eqref{eq:EsEphi} and \eqref{eq:KsKphi}, 
\begin{subequations}
\begin{align}
\delta E_s&=\dfrac{\sec{\tilde{\psi}}\,\delta\tilde{r}'+\tilde{f}_s\tan{\tilde{\psi}}\,\delta\tilde{\psi}}{\varepsilon f_s^0},&\delta E_\phi&=\dfrac{1}{\varepsilon f_\phi^0}\left(\dfrac{\delta\tilde{r}}{r}\right),
\end{align}
and
\begin{align}
\delta K_s&=\dfrac{\delta\tilde{\psi}'}{\bigl(f_s^0\bigr)^2f_\phi^0},&\delta K_\phi&=\dfrac{1}{f_s^0\bigl(f_\phi^0\bigr)^2}\left(\dfrac{\cos{\psi}}{r}\delta\psi\right).
\end{align}
\end{subequations}
Hence, on letting
\begin{subequations}\label{eq:NMB}
\begin{align}
N_s&=\dfrac{n_s}{\tilde{f}_\phi f_s^0},& N_\phi&=\dfrac{n_\phi}{\tilde{f}_sf_\phi^0},\\
M_s&=\dfrac{m_s}{\tilde{f}_\phi\bigl(f_s^0\bigr)^2f_\phi^0},&M_\phi&=\dfrac{m_\phi}{\tilde{f}_sf_s^0\bigl(f_\phi^0\bigr)^2},
\end{align}
\end{subequations}
we obtain, from Eq.~\eqref{eq:E2} and using Eqs.~\eqref{eq:fs},
\begin{align}
\dfrac{\delta\mathcal{E}}{2\pi}&=\left\llbracket \tilde{r}N_s\sec{\tilde{\psi}}\,\delta\tilde{r}+\tilde{r}M_s\,\delta\tilde{\psi}\right\rrbracket\nonumber\\
&\quad-\int_{\mathcal{C}}{\left[\left(\dfrac{\mathrm{d}}{\mathrm{d}s}\left(\tilde{r}M_s\right)-\tilde{r}\tilde{f}_sN_s\tan{\tilde{\psi}}-\tilde{f}_sM_\phi\cos{\tilde{\psi}}\right)\delta\tilde{\psi}\right]\mathrm{d}s}\nonumber\\
&\quad-\int_{\mathcal{C}}{\left[\left(\dfrac{\mathrm{d}}{\mathrm{d}s}\left(\tilde{r}N_s\sec{\tilde{\psi}}\right)-\tilde{f}_sN_\phi\right)\delta\tilde{r}\right]\mathrm{d}s},
\end{align}
from which we read off the governing equations and boundary conditions.

As in standard shell theories~\cite{libai}, the apparent singularity in the resulting equations is removed by introducing the transverse shear tension, $T=-N_s\tan{\tilde{\psi}}$, and we obtain, using Eqs.~\eqref{eq:drz} and \eqref{eq:kappas},
\begin{subequations}\label{eq:dNTB}
\begin{align}
\dfrac{\mathrm{d}N_s}{\mathrm{d}s}&=\tilde{f}_s\left(\dfrac{N_\phi-N_s}{\tilde{r}}\cos{\tilde{\psi}}+\tilde{\kappa}_sT\right),\label{eq:dNsB}\\
\dfrac{\mathrm{d}M_s}{\mathrm{d}s}&=\tilde{f}_s\left(\dfrac{M_\phi-M_s}{\tilde{r}}\cos{\tilde{\psi}}-T\right).
\end{align}
Moreover, by differentiating the definition of $T$ and using Eq.~\eqref{eq:dNsB}, we find\begin{align}
\dfrac{\mathrm{d}T}{\mathrm{d}s}=-\tilde{f}_s\left(\tilde{\kappa}_sN_s+\tilde{\kappa}_\phi N_\phi+T\dfrac{\cos{\tilde{\psi}}}{\tilde{r}}\right).\label{eq:dTB}
\end{align}
\end{subequations}
Together with the relations 
\begin{align}
&\dfrac{\mathrm{d}\tilde{r}}{\mathrm{d}s}=\tilde{f}_s\cos{\tilde{\psi}},&&\dfrac{\mathrm{d}\tilde{\psi}}{\mathrm{d}s}=\tilde{f}_s\tilde{\kappa}_s\label{eq:drB}
\end{align}
from Eqs.~\eqref{eq:drz} and \eqref{eq:kappas}, Eqs.~\eqref{eq:dNTB} determine the deformed configuration of the shell. Having solved these equations, integrating the otherwise redundant shape equation $\tilde{z}'=\tilde{f}_s\sin{\tilde{\psi}}$ from Eqs.~\eqref{eq:drz} determines the shape of the shell completely.

\subsection*{Numerical solution of Eqs.~(\ref{eq:dNTB})}
We conclude the derivation of the governing equations for axisymmetric deformations with two remarks on the numerical solution of Eqs.~\eqref{eq:dNTB}.

First, we note that Eqs.~\eqref{eq:dNTB} are singular where $\tilde{r}=0$. At such a point, geometric continuity implies $\tilde{\psi}=0$. Hence $T=0$ there by definition, and $N_\phi=N_s$ for regularity in Eq.~\eqref{eq:dNsB}. Moreover, by applying l'H\^opital's rule to the definitions in Eqs.~\eqref{eq:fs} and \eqref{eq:kappas}, $\tilde{f}_s=\tilde{f}_\phi$, $\tilde{\kappa}_s=\tilde{\kappa}_\phi$. Hence Eqs.~\eqref{eq:dNTB} are replaced with
\begin{align}
\dfrac{\mathrm{d}N_s}{\mathrm{d}s}&=0,&&\dfrac{\mathrm{d}M_s}{\mathrm{d}s}=0,&&\dfrac{\mathrm{d}T}{\mathrm{d}s}=-\tilde{f}_s\tilde{\kappa}_s N_s,
\end{align}
of which the first two follow by reflection across the axis of symmetry, and the last follows by applying l'H\^opital's rule to Eq.~\eqref{eq:dTB} and using the previous observations and Eqs.~\eqref{eq:drB}.

Second, as discussed in Refs.~\cite{haas18a,haas18b}, too, at each stage of the numerical solution, $\tilde{f}_s,\tilde{f}_\phi,\tilde{\kappa}_s,\tilde{\kappa}_\phi$ must be determined from $\tilde{r},\smash{\tilde{\psi}},M_s,N_s$. To begin with, $\smash{\tilde{f}_\phi},\tilde{\kappa}_\phi$ and hence $E_\phi,K_\phi$ are computed directly from $\tilde{r},\tilde{\psi}$ using their definitions~\eqref{eq:EsEphi} and \eqref{eq:KsKphi}. We can then compute $\tilde{f}_s,\tilde{\kappa}_s$ by noting that, once $\tilde{f}_\phi,E_\phi,K_\phi$ are known, the definitions of $N_s,M_s$ in Eqs.~\eqref{eq:NsB}, \eqref{eq:MsB}, and $\eqref{eq:NMB}$ define a system of linear equations for $E_s,K_s$. Its solution and definitions~\eqref{eq:EsEphi} and \eqref{eq:KsKphi} yield $\tilde{f}_s$ and finally $\tilde{\kappa}_s$. We can then compute $N_\phi,M_\phi$ using Eqs.~\eqref{eq:NphiB}, \eqref{eq:MphiB}, and \eqref{eq:NMB}, and thus continue the numerical integration. Moreover, if $\tilde{r}=0$, we similarly obtain two linear equations for ${\tilde{f}=\tilde{f}_s=\tilde{f}_\phi}$ and $\smash{\tilde{k}=\tilde{f}_s\tilde{\kappa}_s=\tilde{f}_\phi\tilde{\kappa}_\phi}$, from the solution of which the numerical integration can be continued.

Varying the energy with respect to geometric variables, as we have done above, obviates the problem of elastic compatibility. This is the question---independent of the problem of incompatibility of the intrinsic configuration~\cite{goriely} that we have discussed when setting up the geometry of the intrinsic configuration---whether a deformation exists that produces a given set of strains and that provides one of the F\"oppl--von K\'arm\'an equations of plate theory~\cite{audoly}. In this context, this discussion of the numerical approach to solving Eqs.~\eqref{eq:dNTB} shows explicitly how they give rise to a compatible configuration, and therefore how they avoid the problem of elastic compatibility.
\section{\uppercase{Neo-Hookean Relations as the Thin Shell Limit of General Constitutive Relations}}\label{appC}
In this final Appendix, we show that the effective two-dimensional constitutive relations resulting from Eq.~\eqref{eq:I1A2},
\begin{align}
e=C\varepsilon^2\left[\bigl(\tr{\hat{\mat{E}}}\bigr)^2+\tr{\hat{\mat{E}}^2}\right]+O\bigl(\varepsilon^3\bigr),\label{eq:e0C}
\end{align}
are general and therefore do not only apply to the incompressible neo-Hookean three-dimensional constitutive relations assumed in Eq.~\eqref{eq:E}. To prove this, we consider, following Ref.~\cite{dervaux09}, incompressible isotropic energy densities expressible as a general power series
\begin{align}
e=\dfrac{1}{2}\sum_{m=0}^\infty{\sum_{n=0}^\infty}{C_{mn}\left(\mathcal{I}_1-3\right)^m\left(\mathcal{I}_2-3\right)^n}, \label{eq:edensC}
\end{align}
where $\mathcal{I}_1=\tr{\tens{C}}$ and $\mathcal{I}_2=\bigl(\mathcal{I}_{\smash{1}}^2-\tr{\tens{C}^2}\bigr)/2$ are the first two invariants of the Cauchy--Green tensor $\tens{C}=\tens{F}^\top\tens{F}$. We may set $C_{00}=0$ without loss of generality. The requirement that $e\geq 0$ for small, linearly elastic deformations~\cite{ogden} then leads to $C_{10}+C_{01}\geq0$. For $C_{10}+C_{01}=0$, the material has no linear elastic response (i.e. zero bulk modulus); we do not consider that case, and therefore assume that $C_{10}+C_{01}>0$. 

Using a result of Ref.~\cite{dervaux09} and the notation of Appendix~\ref{appA}, the Cauchy stress tensor for this material is
\begin{subequations}\label{eq:TCC}
\begin{align}
\tens{T}&=2\left[e_{,\mathcal{I}_1}\tens{F}+e_{,\mathcal{I}_2}\left(\mathcal{I}_1\tens{F}-\tens{FC}\right)\right]\tens{F}^\top-P\tens{I},
\end{align}
and hence the morphoelastic Piola--Kirchhoff tensor introduced in Eq.~\eqref{eq:P} is
\begin{align}
\tens{P}=\tens{T}\btilde{\tens{F}}^{-\top}=2\left[e_{,\mathcal{I}_1}\tens{F}+e_{,\mathcal{I}_2}\left(\mathcal{I}_1\tens{F}-\tens{FC}\right)\right]\bigl(\tens{F^0}\bigr)^{-\top}-P\btilde{\tens{F}}^{-\top}. \label{eq:QC}
\end{align}
\end{subequations}
In Eqs.~\eqref{eq:TCC}, $\btilde{\tens{F}}$, $\tens{F^0}$, and $\tens{F}=\btilde{\tens{F}}\bigl(\tens{F^0}\bigr)^{-1}$ are given by Eqs.~\eqref{eq:Fg2}, \eqref{eq:F0bm}, and \eqref{eq:FBB0}, respectively, and $P=P_{(0)}+O(\varepsilon)$ is pressure. (We now use an uppercase letter to denote pressure to emphasise that it is scaled differently to Appendix~\ref{appA}; in the notation used there, $P=Cp$.)
\paragraph*{Expansion and partial solution at order $O(1)$.}
From the leading-order expansion of $\tens{F}$ in Eq.~\eqref{eq:F0A} and using Corollary~\ref{cor2} and $\tilde{\mat{g}}=\mat{g^0}+O(\varepsilon)$ from Eq.~\eqref{eq:uscale}, we compute
\begin{align}
\tens{F}^\top=\left(\begin{array}{c|c}
\bigl(\mat{g^0}\bigr)^{-1}\mat{B}^\top\mat{g^0}&\bigl(\mat{g^0}\bigr)^{-1}\mat{w}\\[0.5mm]\hline
\mat{v}^\top\mat{g^0}&c
\end{array}\right)+O(\varepsilon), 
\end{align}
in which $\mat{B},\mat{v},\mat{w},c$ are given by Eqs.~\eqref{eq:Fdefs}, and thence
\begin{align}
\tens{C}&=\left(\begin{array}{c|c}
\!\bigl(\mat{g^0}\bigr)^{-1}\mat{B}^\top\mat{g^0}\mat{B}+\bigl(\mat{g^0}\bigr)^{-1}\mat{w}\mat{w}^\top&\bigl(\mat{g^0}\bigr)^{-1}\mat{B}^\top\mat{g^0}\mat{v}+c\bigl(\mat{g^0}\bigr)^{-1}\mat{w}\!\\[0.5mm]
\hline
\mat{v}^\top\mat{g^0}\mat{B}+c\mat{w}^\top&\mat{v}^\top\mat{g^0}\mat{v}+c^2
\end{array}\right)\nonumber\\
&\qquad+O(\varepsilon), 
\end{align}
In particular,
\begin{align}
\mathcal{I}_1=\tr{\left(\bigl(\mat{g^0}\bigr)^{-1}\mat{B}^\top\mat{g^0}\mat{B}\right)}+\mat{w}^\top\bigl(\mat{g^0}\bigr)^{-1}\mat{w}+\mat{v}^\top\mat{g^0}\mat{v}+c^2+O(\varepsilon). 
\end{align}
Since the incompressibility condition is independent of the constitutive relations, its leading-order expansion~\eqref{eq:detF0A} still holds true. Using this and the leading-order expansions~\eqref{eq:F0invT} and \eqref{eq:FginvT} and writing $e_{,\mathcal{I}_1}=E_1+O(\varepsilon)$, $e_{,\mathcal{I}_2}=E_2+O(\varepsilon)$, Eq.~\eqref{eq:QC} yields
\begin{align}
\tens{P}&=\dfrac{1}{{\zeta^0}_{,\zeta}}\left(\begin{array}{c|c}
\!O(1)&\!\begin{array}{l}
2\left\{E_1+E_2\left[\tr{\left(\bigl(\mat{g^0}\bigr)^{-1}\mat{B}^\top\mat{g^0}\mat{B}\right)}\right.\right.\\
\hspace{25mm}+\left.\left.\mat{w}^\top\bigl(\mat{g^0}\bigr)^{-1}\mat{w}\right]\right\}\mat{v}\\
\phantom{.}-2E_2\left(\mat{B}\bigl(\mat{g^0}\bigr)^{-1}\mat{B}^\top\mat{g^0}\mat{v}+c\mat{B}\bigl(\mat{g^0}\bigr)^{-1}\mat{w}\right)\\
\phantom{.}+P_{(0)}(\det{\mat{B}})\bigl(\mat{g^0}\bigr)^{-1}\mat{B}^{-\top}\mat{w}
\end{array}\!\\[10.5mm]
\hline
\!O(1)&\!\begin{array}{l}
\vphantom{A^{A^{A^{A^A}}}}2c\left[E_1+E_2\tr{\left(\bigl(\mat{g^0}\bigr)^{-1}\mat{B}^\top\mat{g^0}\mat{B}\right)}\right]\\\phantom{.}-2E_2\mat{w}^\top\bigl(\mat{g^0}\bigr)^{-1}\mat{B}^\top\mat{g^0}\mat{v}-P_{(0)}\det{\mat{B}}
\end{array}
\end{array}\right)\nonumber\\&\qquad+O(\varepsilon). \label{eq:Q0C}
\end{align}
Writing $\tens{P}=\tens{P_{(0)}}+\varepsilon\tens{P_{(1)}}+O\bigl(\varepsilon^2\bigr)$, the leading-order boundary condition is $\tens{P_{(0)}}\vec{n}\equiv\vec{0}$, similary to Appendix~\ref{appA}. Hence, from Eqs.~\eqref{eq:detF0A} and \eqref{eq:Q0C}, the leading-order problem is
\begin{subequations}\label{eq:eqsC}
\begin{align}
&c-\mat{w}^\top\mat{B}^{-1}\mat{v}=(\det{\mat{B}})^{-1},
\end{align}
\begin{align}
&2\left\{E_1+E_2\left[\tr{\left(\bigl(\mat{g^0}\bigr)^{-1}\mat{B}^\top\mat{g^0}\mat{B}\right)}+\mat{w}^\top\bigl(\mat{g^0}\bigr)^{-1}\mat{w}\right]\right\}\mat{v}\nonumber\\
&\qquad -2E_2\left(\mat{B}\bigl(\mat{g^0}\bigr)^{-1}\mat{B}^\top\mat{g^0}\mat{v}+c\mat{B}\bigl(\mat{g^0}\bigr)^{-1}\mat{w}\right)\nonumber\\
&\qquad+P_{(0)}(\det{\mat{B}})\bigl(\mat{g^0}\bigr)^{-1}\mat{B}^{-\top}\mat{w}=\mat{0},\\
&2c\left[E_1+E_2\tr{\left(\bigl(\mat{g^0}\bigr)^{-1}\mat{B}^\top\mat{g^0}\mat{B}\right)}\right]-2E_2\mat{w}^\top\bigl(\mat{g^0}\bigr)^{-1}\mat{B}^\top\mat{g^0}\mat{v}\nonumber\\
&\qquad-P_{(0)}\det{\mat{B}}=0.
\end{align}
\end{subequations}
These equations have a trivial solution
\begin{align}
&Z_{(0)}\equiv Z^0,&&\mat{S_{(0)}}\equiv\mat{0},&P_{(0)}=C_{10}+2C_{01},\label{eq:solC}
\end{align}
analogous to the leading-order solution found in Appendix~\ref{appA} and for which, from Eqs.~\eqref{eq:Fdefs}, $\mat{B}=\mat{I}$, $\mat{v}=\mat{w}=\mat{0}$, $c=1$, and hence $\tens{C}=\tens{I}+O(\varepsilon)$, so that $\mathcal{I}_1=\mathcal{I}_2=3+O(\varepsilon)$ and thus $E_1=C_{10}/2$, $E_2=C_{01}/2$ from Eq.~\eqref{eq:edensC}. We were not however able to show that this is the only solution of the nonlinear first-order differential equations for $Z_{(0)}$, $\mat{S_{(0)}}$ as functions of $Z^0$ provided by Eqs.~\eqref{eq:eqsC} that satisfies the conditions $Z_{(0)}=0$, $\smash{\mat{S_{(0)}}}=\mat{0}$ on the midsurface $\smash{Z^0}=0$. In this respect, our solution of the leading-order problem remains partial.

Our failure to solve Eqs.~\eqref{eq:eqsC} emphasises once again that what distinguishes these problems of large bending deformations from classical problems in elastic shell theories is the fact that the leading-order problem for large bending deformations is not trivial. In fact, were a second solution of Eqs.~\eqref{eq:eqsC} to exist, global energy considerations would select the solution; this would open a new can of worms in the analysis.

\paragraph*{Expansion at order $O(\varepsilon)$.} At this stage, we take Eqs.~\eqref{eq:solC} as the solution of the leading-order problem~\eqref{eq:eqsC} and proceed thence.  In particular, the deformation gradient still has an expansion of the form in Eq.~\eqref{eq:F2A}. Hence Eq.~\eqref{eq:FT2A} still holds true, and we compute
\begin{subequations}
\begin{align}
\tens{C}&=\left(\begin{array}{c|c}
\!\mat{I}+\varepsilon\left(2\mat{E}\!+\!\mat{B_{(1)}}\!+\!\bigl(\mat{g^0}\bigr)^{-1}\mat{B^\top_{\smash{\mat{(1)}}}}\mat{g^0}\right)&\varepsilon\left(\mat{v_{(1)}}\!+\!\bigl(\mat{g^0}\bigr)^{-1}\mat{w_{(1)}}\right)\!\\[1.5mm]
\hline
\varepsilon\left(\mat{v^\top_{\smash{\mat{(1)}}}}\mat{g^0}+\mat{w^\top_{\smash{\mat{(1)}}}}\right)&1+2\varepsilon c_{(1)}
\end{array}\right)\nonumber\\
&\qquad+O\bigl(\varepsilon^2\bigr),\label{eq:CC}\\
\tens{C}^2&=\left(\begin{array}{c|c}
\!\mat{I}+2\varepsilon\left(2\mat{E}\!+\!\mat{B_{(1)}}\!+\!\bigl(\mat{g^0}\bigr)^{-1}\mat{B^\top_{\smash{\mat{(1)}}}}\mat{g^0}\right)&O(\varepsilon)\\[1.5mm]
\hline
O(\varepsilon)&1+4\varepsilon c_{(1)}\!
\end{array}\right)
+O\bigl(\varepsilon^2\bigr),
\end{align}
\end{subequations}
whence
\begin{subequations}
\begin{align}
\mathcal{I}_1&=3+\varepsilon\bigl[2\bigl(\tr{\mat{E}}+\tr{\mat{B_{(1)}}}+c_{(1)}\bigr)\bigr]+O\bigl(\varepsilon^2\bigr),\\
\mathcal{I}_2&=3+\varepsilon\bigl[4\bigl(\tr{\mat{E}}+\tr{\mat{B_{(1)}}}+c_{(1)}\bigr)\bigr]+O\bigl(\varepsilon^2\bigr). 
\end{align}
\end{subequations}
The incompressibility condition being independent of the constitutive relations, Eq.~\eqref{eq:detF2A} and hence the first of Eqs.~\eqref{eq:c1A} still hold. The latter implies $\mathcal{I}_1=\mathcal{I}_2=3+O\bigl(\varepsilon^2\bigr)$. Thus
\begin{align}
e=\dfrac{1}{2}\bigl[C_{10}\left(\mathcal{I}_1-3\right)+C_{01}\left(\mathcal{I}_2-3\right)\bigr]+O\bigl(\varepsilon^4\bigr), \label{eq:eC}
\end{align}
and, in particular, $e_{,\mathcal{I}_1}=C_{10}/2+O\bigl(\varepsilon^2\bigr)$, ${e_{,\mathcal{I}_2}=C_{01}/2+O\bigl(\varepsilon^2\bigr)}$. In this way, the constitutive relations have reduced, up to smaller corrections, to those of a Mooney--Rivlin solid~\cite{goriely}. Moreover, Eq.~\eqref{eq:F02A} and hence Eqs.~\eqref{eq:F02invT} and \eqref{eq:Fg2invT} still hold. Since $P=C_{10}+2C_{01}+O(\varepsilon)$, it follows that
\begin{widetext}
\phantom{.}\vspace{-5mm}\begin{align}
\tens{P}&= \left(\begin{array}{c|c}
O(\varepsilon)\vphantom{\left(\dfrac{A}{A}\right)^2}&\varepsilon\dfrac{C_{10}+C_{01}}{c_{(0)}^0}\left(\mat{v_{(1)}}+\bigl(\mat{g^0}\bigr)^{-1}\mat{w_{(1)}}\right)+O\bigl(\varepsilon^2\bigr)\\[4mm]
\hline
O(\varepsilon)&O(\varepsilon)
\end{array}\right)\quad\Longrightarrow\quad 
\tens{P_{(0)}}=\tens{O},\;\tens{P_{(1)}}\vec{n}=\left(\begin{array}{c}
\dfrac{C_{10}+C_{01}}{c^0_{(0)}}\left(\mat{v_{(1)}}+\bigl(\mat{g^0}\bigr)^{-1}\mat{w_{(1)}}\right)\\[4mm]
\hline
O(1)
\end{array}\right).
\end{align}
Similarly to Appendix~\ref{appA}, the boundary conditions now imply $\tens{P_{(1)}}\vec{n}\equiv\vec{0}$, so, noting that $\smash{c^0_{\smash{(0)}}}>0$ and $\smash{C_{10}+C_{01}}>0$, the second of Eqs.~\eqref{eq:c1A} also still holds.
\paragraph*{Expansion at order $O\bigl(\varepsilon^2\bigr)$.} Since the expansion~\eqref{eq:detF2A} of the incompressibility condition still holds, Eqs.~\eqref{eq:c1A} still imply Eq.~\eqref{eq:c2A} and hence Eq.~\eqref{eq:I1A1}. Meanwhile, Eqs.~\eqref{eq:c1A} and \eqref{eq:CC} show that the off-diagonal terms in Eq.~\eqref{eq:CA} are in fact of order $O\bigl(\varepsilon^2\bigr)$, so it follows from Eq.~\eqref{eq:CA} that
\begin{align}
\tr{\tens{C}^2}&=\tr{\Bigl\{\mat{I}+\varepsilon\left(2\mat{E}+\mat{B_{(1)}}+\bigl(\mat{g^0}\bigr)^{-1}\mat{B^\top_{\smash{\mat{(1)}}}}\mat{g^0}\right)+\varepsilon^2\left[2\left(\mat{EB_{(1)}}+\bigl(\mat{g^0}\bigr)^{-1}\mat{B^\top_{\smash{\mat{(1)}}}}\mat{g^0E}\right)\right.+\left.\mat{B_{(2)}}+\bigl(\mat{g^0}\bigr)^{-1}\mat{B^\top_{\smash{\mat{(2)}}}}\mat{g^0}+\bigl(\mat{g^0}\bigr)^{-1}\mat{B^\top_{\smash{\mat{(1)}}}}\mat{g^0B_{(1)}}\right.}\nonumber\\
&\qquad+\left.\bigl(\mat{g^0}\bigr)^{-1}\mat{w_{(1)}}\mat{w^\top_{\smash{\mat{(1)}}}}\right]+O\bigl(\varepsilon^3\bigr)\Bigr\}^2+\bigl[1+2\varepsilon c_{(1)}+\varepsilon^2\bigl(2c_{(2)}+c_{\smash{(1)}}^2+\mat{v^\top_{\smash{\mat{(1)}}}}\mat{g^0v_{(1)}}\bigr)+O\bigl(\varepsilon^3\bigr) \bigr]^2+O\bigl(\varepsilon^4\bigr)\nonumber\\
&=3+4\varepsilon\left(c_{(1)}+\tr{\mat{E}}+\tr{\mat{B_{(1)}}}\right)+2\varepsilon^2\left[2\tr{\mat{E}^2}+4\tr{\left(\mat{EB_{(1)}}\right)}+4\tr{\left(\mat{E}\bigl(\mat{g^0}\bigr)^{-1}\mat{B^\top_{\smash{\mat{(1)}}}}\mat{g^0}\right)}+\tr{\mat{B}_{\smash{\mat{1}}}^2}+2\tr{\left(\mat{B_{(1)}}\bigl(\mat{g^0}\bigr)^{-1}\mat{B^\top_{\smash{\mat{(1)}}}}\mat{g^0}\right)}\right.\nonumber\\
&\qquad+\left.2\tr{\mat{B_{(2)}}}+\mat{w^\top_{\smash{\mat{(1)}}}}\bigl(\mat{g^0}\bigr)^{-1}\mat{w_{(1)}}+3c_{(1)}^2+2c_{(2)}+\mat{v^\top_{\smash{\mat{(1)}}}}\mat{g^0v_{(1)}}\right]+O\bigl(\varepsilon^3\bigr)\nonumber\\
&=3+4\varepsilon^2\left\{2\left(\tr{\mat{E}}+\tr{\mat{B_{(1)}}}\right)^2+2\tr{\mat{E}^2}+\tr{\mat{B}^2_{\mat{(1)}}}+\tr{\left(\bigl(\mat{g^0}\bigr)^{-1}\mat{B^\top_{\smash{\mat{(1)}}}}\mat{g^0}\mat{B_{(1)}}\right)}+2\left[\tr{\left(\mat{EB_{(1)}}\right)}+\tr{\left(\mat{E}\bigl(\mat{g^0}\bigr)^{-1}\mat{B^\top_{\smash{\mat{(1)}}}}\mat{g^0}\right)}\right]\right\}+O\bigl(\varepsilon^3\bigr),\label{eq:I2C}
\end{align}

\vspace{-3mm}
\end{widetext}
using Eqs.~\eqref{eq:c1c2A} and Lemma~\ref{lemma4}, similarly to the calculations leading up to Eq.~\eqref{eq:I1A1}. 

Finally, if we write $\mathcal{I}_1=3+\varepsilon^2\smash{I_{(2)}}+O\bigl(\varepsilon^3\bigr)$ using Eq.~\eqref{eq:I1A1}, then Eq.~\eqref{eq:I2C} shows that $\smash{\tr{\tens{C}^2}=3+4\varepsilon^2\smash{I_{(2)}}+O\bigl(\varepsilon^3\bigr)}$. These expansions imply that $\smash{\mathcal{I}_2=3+\varepsilon^2\smash{I_{(2)}}+O\bigl(\varepsilon^3\bigr)}$. Equivalently, ${\mathcal{I}_2-3=\mathcal{I}_1-3+O\bigl(\varepsilon^3\bigr)}$. Hence, from Eq.~\eqref{eq:eC}, 
\begin{align}
e=\dfrac{C}{2}(\mathcal{I}_1-3)+O\bigl(\varepsilon^3\bigr),\quad\text{with }C=C_{10}+C_{01}>0. 
\end{align}
Up to smaller corrections, these are the neo-Hookean constitutive relations assumed in Eq.~\eqref{eq:E} and throughout Section~\ref{sec:model} and Appendix~\ref{appA}, and which, as shown there, indeed reduce at order $O\bigl(\varepsilon^2\bigr)$ to the effective two-dimensional constitutive relations in Eq.~\eqref{eq:e0C}. 
Assuming that the trivial solution~\eqref{eq:solC} of the leading-order problem defined by Eqs.~\eqref{eq:eqsC} is unique, this proves our claim in Section~\ref{sec:concl}, that these effective two-dimensional constitutive relations are general.\vspace{1mm}
\bibliography{shells}
\end{document}